\newcommand{\R}{\mathbb{R}} 
\newcommand{\pr}{\mathbb{P}} 
\newcommand{\prt}{\mathbb{P}_{\theta}}
\newcommand{\prtl}{\mathbb{P}_{\theta_l}}
\newcommand{\prphi}{\mathbb{P}_{\phi}}
\newcommand{\prpsi}{\mathbb{P}_{\psi}}
\newcommand{\prpsil}{\mathbb{P}_{\psi_l}}
\newcommand{\esp}{\mathbb{E}}
\newcommand{\argmin}{\mathop{\mathrm{Argmin}}}
\newcommand{\Xil}{X_i^{l}}
\newcommand{\xil}{x_i^{l}}
\newcommand{\Xjl}{X_j^{l}}
\newcommand{\xtil}{\tilde{x}_i^{l}}
\newcommand{\xtjl}{\tilde{x}_j^{l}}
\newcommand{\Yil}{Y_i^{l}}
\newcommand{\Yjl}{Y_j^{l}}
\newcommand{\A}{\textbf{A}}
\newcommand{\Al}{A^{l}}
\newcommand{\Aijl}{A_{i,j}^{l}}
\newcommand{\bX}{\mathbf{X}}
\newcommand{\btx}{\tilde{\mathbf{x}}}
\newcommand{\bY}{\mathbf{Y}}
\newcommand{\Xmil}{X_{-i}^{l}}
\newcommand{\Ni}{\mathcal{N}^\star_i}
\newcommand{\XNvl}{X_{\Ni}^{l}}
\newcommand{\xtNvl}{\tilde{x}_{\Ni}^{l}}
\newcommand{\psit}{\psi^{(t)}}
\newcommand{\psilt}{\psi_l^{(t)}}
\newcommand{\betap}{\beta_{l,co-pres}}
\newcommand{\betaa}{\beta_{l,co-abs}}
\newcommand{\betaplp}{\beta_{l',co-pres}}
\newcommand{\betaalp}{\beta_{l',co-abs}}
\newcommand{\degi}{\deg_{G^{\star}}(i)}
\newcommand{\degj}{\deg_{G^{\star}}(j)}
\newcommand{\1}{\boldsymbol{1}}
\newcommand{\blue}[1]{{\color{black}{#1}}}
\newtheorem{prop}{Proposition}
\newtheorem{defn}{Definition}
\newtheorem{coro}{Corollary}
\newtheorem{rem}{Remark}
\title{
Quantifying the overall effect of biotic interactions on species distributions along environmental gradients\\
}
\author[1,*]{Marc Ohlmann}
\author[2]{Catherine Matias}
\author[1,3]{Giovanni Poggiato}
\author[4]{St\'ephane Dray}
\author[1]{Wilfried Thuiller}
\author[4]{Vincent Miele}
\affil[1]{Univ. Grenoble Alpes, CNRS, Univ. Savoie Mont Blanc, LECA, Laboratoire d'Ecologie Alpine, F-38000 Grenoble, France}
\affil[2]{Sorbonne Universit\'e,  Universit\'e Paris Cité, Centre National de la Recherche Scientifique,  Laboratoire de Probabilit\'es, Statistique    et    Mod\'elisation,  F-75005 Paris, France}
\affil[3]{Univ. Grenoble Alpes, Inria, CNRS, Grenoble INP, LJK, F-38000 Grenoble, France}
\affil[4]{Universit\'e de Lyon, F-69000 Lyon; Universit\'e Lyon 1; CNRS, UMR5558, 
	Laboratoire de Biom\'etrie et Biologie \'Evolutive,
	F-69622 Villeurbanne, France}
\affil[*]{Corresponding author: marcohlmann@live.fr}
\date{}
\begin{document}

\maketitle

%\noindent 
%{\bf This manuscript is distributed under the license Creative-Commons CC-BY-4.0 \url{https://creativecommons.org/licenses/by/4.0}}
%\vspace{0.8cm}

\noindent
{\bf Open Research statement:} ELGRIN is implemented in the function \texttt{elgrin} of the \texttt{R} package \texttt{econetwork} available on the code repository \url{https://plmlab.math.cnrs.fr/econetproject/econetwork} and at \texttt{CRAN} (\url{https://cran.r-project.org/}). The simulation procedure can be reproduced with the \texttt{R} code available along with this manuscript. The vertebrate data from \citet{oco20} can be found at \url{https://datadryad.org/stash/dataset/doi:10.5061/dryad.bcc2fqz79}. Climatic data were downloaded from the Worldclim v2 database (\url{http://www. worldclim.org/bioclim}) as described in the Methods, section 2.4. Land cover data were downloaded from Global Land cover v2.2 (\url{http://due.esrin.esa.int/page_globcover.php}), net primary productivity was downloaded from (\url{https://sedac.ciesin.columbia.edu/data/set/hanpp-net-primary-productivity/data-download}) and the human footprint index was downloaded from  \url{http://sedac.ciesin.columbia.edu/data/set/wildareas-v2-human-footprint-geographic}, searching for the latest version (V2 the time of the \blue{article}).\\

\noindent
{\bf Key-words:} biodiversity patterns, C-score, environmental niche, Markov random fields, metanetwork, species co-occurrence.\\

\pagebreak

\begin{abstract}
		
Separating environmental effects from those of interspecific interactions on species distributions has always been a central objective of community ecology. Despite years of effort in analysing patterns of species co-occurrences and the developments of sophisticated tools, we are still unable to address this major objective. A key reason is that the wealth of ecological knowledge is not sufficiently harnessed in current statistical models, notably the knowledge on interspecific interactions. 

Here, we develop ELGRIN, a statistical model that simultaneously combines knowledge on interspecific interactions (\textit{i.e.}, the metanetwork), environmental data and species occurrences  to tease apart their relative effects on species distributions. Instead of focusing on single effects of pairwise species interactions, which have little sense in complex communities, ELGRIN contrasts the overall effect of species interactions to that of the environment. 

Using various simulated and empirical data, we demonstrate the suitability of ELGRIN to address the objectives for various types of interspecific interactions like mutualism, competition and trophic interactions. \blue{We then apply the model on vertebrate trophic networks in the European Alps to map the effect of biotic interactions on species distributions.}

Data on ecological networks are everyday increasing and we believe the time is ripe to mobilize these data to better understand biodiversity patterns. ELGRIN provides this opportunity to unravel how interspecific interactions actually influence species distributions.
\end{abstract}

\pagebreak

%%%%%%%%%%%%%%%%%%%%%%%%%%%%%%%%%%%%%%%%%%%%%%%%%%%%%%%%%%%%%%%%%%%%%%%%%%%%%%
%%%%%%%%%%%%%%%%%%%%%%%%%%%%%%%%%%%%%%%%%%%%%%%%%%%%%%%%%%%%%%%%%%%%%%%%%%%%%%
%%%%%%%%%%%%%%%%%%%%%%%%%%%%%%%%%%%%%%%%%%%%%%%%%%%%%%%%%%%%%%%%%%%%%%%%%%%%%%
\section*{Introduction}

Ecologists have always strived to understand the drivers of biodiversity patterns with the particular interest to tease apart the effects of environment and biotic interactions on species distributions and communities \citep{ric08,Thuiller2015,cha03,deCando_book}. Species distributions are influenced by the abiotic environment (e.g. climate or soil properties) because of their own physiological constraints that allow them or not to sustain viable populations in specific environmental configurations \citep{Austin,Pulliam}. However, the occurrence of a species in a given site is also influenced by other species through all sort of interactions that can be trophic (e.g. a predator needs preys), non-trophic (e.g. plant species need to be pollinated by insects) or competitive (two species with the same requirements might exclude each other) \citep{guisan_book, gravel2019,Lortie,Soberon}. 

Teasing apart the effects of environmental variations and interspecific interactions on species distributions and communities from observed co-occurrence patterns has always been a hot topic in ecology since the earlier debate between \citet{dia75} and \citet{con79}, to the recent syntheses on the subject \citep{bla20}. More than anything, with a few exceptions, and despite recent advances like joint species distribution models \citep{ova17} or  null model developments \citep{per01,cha13}, the conclusion has been that it is almost impossible to retrieve and estimate interspecific interactions from observed spatial patterns of species communities \citep{zur2018, bla20}. This conclusion should thus preclude any attempt to disentangle the relative effects of environment and interspecific interactions. A major difficulty of this long-standing issue is that interspecific interactions could be of any type (i.e. positive, negative, asymmetric) and that observed patterns average out all these interactions. Observed communities indeed reflect the overall outcome of interspecific interactions that is difficult to dissect, especially when analysing pairwise species spatial associations as it is commonly done \citep[e.g.,][]{Tikho17} . Yet, this overall outcome might be worth analysing on its own, for instance to measure the overall strength of interspecific interactions in a given community and between communities, how it depends on the co-existing species, and how it varies in space. 

Interestingly, so far there have been few attempts to integrate the wealth of existing knowledge to address this fundamental ecological issue \citep{bla20,hol20}. Indeed, the spatial analysis of biotic interactions is gaining an increased interest with novel technologies to measure interactions in the field (e.g. camera-traps, gut-content), open databases (e.g. GLOBI, Mangal) and the developments of new statistical tools to analyse them \citep{tyl17,pel18,ohl19,bot22}. The combination of expert knowledge, literature, available databases, and phylogenetic hypotheses has also given rise to large metanetworks that generalise the regional species-pool of community ecology by incorporating the potential interactions between species from different trophic levels along with their functional and phylogenetic characteristics \citep{Maio2020,mor15}. Despite a few attempts  \citep[e.g.,][]{sta17}, information on interaction networks has been poorly integrated to understand and model biodiversity patterns. We believe that the time is ripe to incorporate network information into the process of modelling species distributions and communities. It implies to integrate both biotic and abiotic information (and their spatial variations) as explanatory factors in statistical models to weight their relative strength. 

In this \blue{article}, we propose a novel statistical model, called ELGRIN (in reference to Charles Elton and Joseph Grinnell) that can handle the effects of both environmental factors and  known interspecific interactions (aka a metanetwork) on species distributions. We rely on Markov random fields \citep[MRF, also called Gibbs distribution, e.g.,][]{bre99}, a family of flexible models that can handle dependencies between variables using a graph.  
More specifically, ELGRIN jointly models the presence and absence of all species in a given area in function of environmental covariates and the topological structure of the known metanetwork (Figure \ref{fig:scheme} left). It separates the interspecific interaction effects (Figure \ref{fig:scheme} top-right) from those of the environment  (Figure \ref{fig:scheme} bottom-right) on species distributions. To our knowledge, ELGRIN is the first model whose outputs are the relative strengths of biotic factors needed on top of abiotic environmental variables to shape the species distributions and their spatial variation (see Latitude/Longitude in Figure \ref{fig:scheme} top-right). It thus provides a convenient way to integrate network ecology in joint species community modelling. 

%By considering both abiotic and biotic processes in the modelling framework, our approach allows to capture the spatial variation of the relative effects of environmental and biotic dimensions on the composition of ecological communities. 
In this \blue{article}, we first present the overall modelling framework and then assess its performances under different scenarios implying data simulated using three different dynamic models. In other words, although ELGRIN  considers only static observational data (metaweb and community data), we evaluated the model using simulated data generated using different dynamic models that involve various underlying processes, including intraspecific competition.
We test the ability of ELGRIN to decipher the relative importance of abiotic and interspecific interactions in these difficult cases so as to better understand what kind of signal ELGRIN can or cannot retrieve from the data. Finally, we apply the model on vertebrate trophic networks in the European Alps as an empirical study.

%%%%%%%%%%%%%%%%%%%%%%%%%%%%%%%%%%%%%%%%%%%%%%%%%%%%%%%%%%%%%%%%%%%%%%%%%%%%%%
%%%%%%%%%%%%%%%%%%%%%%%%%%%%%%%%%%%%%%%%%%%%%%%%%%%%%%%%%%%%%%%%%%%%%%%%%%%%%%
%%%%%%%%%%%%%%%%%%%%%%%%%%%%%%%%%%%%%%%%%%%%%%%%%%%%%%%%%%%%%%%%%%%%%%%%%%%%%%

%%%%%%%%%%%%%%%%%%%%%%%%%%%%%%%%%%%%%%%%%%%%%%%%%%%%%%%%%%%%%%%%%%%%%%%%%%%%%%
\section*{Material and methods}

\subsection*{Species data and potential interactions}
We consider a set of sites or locations indexed by $l \in \{1,\dots, L\}$, where the occurrence (presence/absence)  of $N$ species and a set of environmental variables (vector $W_l$) are observed. 

For the same set of $N$ species, we assume that we know all the pairwise interactions between them (e.g. who eats whom), an information summarised with a graph $G^\star=(V^\star, E^\star)$ over  the set  of nodes  $V^\star=\{1, \dots , N\}$ and edges $E^\star$. This graph, usually called a metanetwork that represents a regional pool of both species and interactions, can be obtained, for instance, by aggregating local networks at different locations or from expert knowledge and literature review  \citep[e.g.,][]{Cirtwill,Maio2020}. Note that various types of interactions can be considered here (e.g., trophic, mutualism, competition). However, while considering a mixture of interaction types is technically possible, the interpretation of results would be difficult because in our framework, $G^\star$ records the presence of an interaction and not its type. An additional note is that our model, like most species community models (e.g. Joint species distribution models, ordination techniques) relying on occurrence data, makes some assumptions about the ecological processes structuring species assemblages. In our current implementation of ELGRIN, we consider that only unimodal responses of species to environmental gradients and interspecific interactions shape communities, ignoring other processes such as dispersal limitation or mass effect for instance. 
Lastly, note also that our model supposes that the graph associated to the metanework is undirected with no self-loops (see model specifications below) and thus ignores intraspecific interactions.
Hereafter, we refer to co-present (or co-absent) species, pairs of species that are connected in the metanetwork and jointly present (or absent, respectively) at a given location.

%%%%%%%%%%%%%%%%%%%%%%%%%%%%%%%%%%%%%%%%%%%%%%%%%%%%%%%%%%%%%%%%%%%%%%%%%%%%%%
\subsection*{The statistical model of ELGRIN}
\paragraph{Model description} 
\blue{The aim of ELGRIN model is to factorise the joint species presence distribution between a Grinnellian part, that consists in a regression on environmental covariables, and an Eltonian part that quantifies association strengths between species distribution according to the metanetwork. More formally,} we consider a set of random variables $\{\Xil\}_{i \in  V^\star}$ taking  values in  $\{0, 1\}$ and that represent the presence/absence of species $i\in V^\star$ at location $l\in \{1,\ldots,L\}$.
We rely on a {\it Markov random field} \citep[see for instance][]{bre99} to model the dependencies between species occurrences at location $l$. \blue{This is a multivariate  model that encodes statistical dependencies between species distribution using a network}. \blue{In our ELGRIN model,} these dependencies are encoded through the metanetwork $G^\star$. For each location $l\in \{1,\ldots,L\}$, we thus assume that these random variables are distributed according to a Gibbs distribution specifying the joint associations between the species occurrence variables $\{\Xil\}_{i \in V^\star}$, as follows: 
\begin{subequations}
\begin{align}
\pr(\{\Xil\}_{i \in V^\star}) =\frac 1 Z \exp\Big( &\sum_{i  \in V^\star}  [a_l +a_i + W_l^\intercal b_i + (W_l^2)^\intercal c_i ] \Xil 
\label{eq:GRF1}
\\
&+ \betap \sum_{(i,j) \in E^\star} \1\{\Xjl =\Xil=1\}
\label{eq:GRF2}
\\
&+ \betaa \sum_{(i,j) \in E^\star}  \1\{\Xjl =\Xil=0\}%\blue{C_{jl}}
\Big),
\label{eq:GRF3}
\end{align}
\label{eq:GRF}
\end{subequations}
where $\1\{A\}$ is the indicator function of event $A$ (either co-absence $\Xjl =\Xil=0$ or co-presence $\Xjl =\Xil=1$), \blue{notation $U^\intercal$ stands for the transpose of vector $U$} and $Z$ a normalising constant discussed below.
Some  model parameters have an ecological interpretation (Table \ref{tab:param}).
The use of $W_l$ and $W_l^2$ (the vector of  coordinate-wise squared values of $W_l$)   allows modelling \blue{a quadratic} species response to environmental gradient, following \blue{then} a bell-shaped relationship as expected under classical niche theory \citep{cha03}. 

Sub-equation~\eqref{eq:GRF1} is the Grinnellian part of ELGRIN, as it represents some prior probability of species occurrences independently of their interactions. 
Parameters $a_i,b_i,c_i$ capture the response of species $i$ to environment, seen through a vector of environmental covariates $W_l$. The intercepts $a_i$ and $a_l$ are estimated up to a constant only (see Appendix S1: Section~\ref{SI_sec:ident}) and may not be interpreted,
whereas the vectors $b_i,c_i$ deal with the species  environmental niche, like in a standard species distribution model \citep{guisan_book}. 

Sub-equations~\eqref{eq:GRF2} and \eqref{eq:GRF3} form the Eltonian part of ELGRIN. It considers only interactions $(i,j)\in E^\star$, i.e. the  edges of the metanetwork. 
The $\beta_l$ represent the overall influence of the interactions (as encoded through $G^\star$) on all species presence/absence at location $l$. However, this influence may be different for co-presence and co-absence, with parameters $\betap$ and $\betaa$ respectively (see Table \ref{tab:interpret}). When a $\betap$ is positive, it represents a positive driving force of co-presence on species distributions. By contrast, a negative value  indicates that species co-presences are avoided. The same reasoning holds with $\betaa$ for co-absences. 
Since the interaction parameter $\betaa$ can also be influenced by co-absences between species that are both absent at  location $l$ only because of unsuitable environmental conditions, we introduced a compatibility matrix so that the effect of interactions is only estimated in the environmental conditions where interacting species could co-occur (details are given in Appendix S1: Section~\ref{SI_sec:compat}). \blue{Importantly, this compatibility matrix is estimated during the inference procedure and is not a required input by the user.}

Note that we chose the parameters $\beta_l$ to be specific to location $l\in \{1,\ldots,L\}$ such that the effect of species interactions can vary across space. 
Finally,  $Z$ is a normalising constant that cannot be computed for combinatorial reasons, although the statistical inference procedure takes care of it. Full details of the estimation procedure and parameter identifiability are available in Appendix S1:  Section~\ref{SI_sec:estim} and Appendix S1: Section~\ref{SI_sec:ident}, respectively.

Lastly, it is important to note two specificities of the metanetwork $G^\star$ in our modelling procedure: it cannot  be  directed nor contain self-loops. Indeed, Markov random fields specify conditional dependencies between random variables $\{\Xil\}$ in an undirected way, and self-loops have no meaning in this framework. Our model assumes that these dependencies are given by the interaction network without considering the direction of edges. Consequently, this statistical model of interaction cannot be read in the light of causality. In case of trophic interactions, it consists in assuming that presence/absence of a predator and its prey are intertwined, without specifying top-down or bottom-up control. Moreover, the absence of self-loops prevents from taking into account intraspecific effects. These effects are simply ignored by ELGRIN, as they are in any joint species distribution model or ordination technique (see Appendix S1: Section~\ref{sec_SI:LVintra}).

ELGRIN is implemented in \texttt{C++} for efficiency and is available in the function \texttt{elgrin} of the \texttt{R} package \texttt{econetwork} available on the code repository \url{https://plmlab.math.cnrs.fr/econetproject/econetwork} and at \texttt{CRAN} (\url{https://cran.r-project.org/}). \blue{We assessed the performance of the method in inferring parameters from data sampled and re-sampled under the model (see Appendix S1: Section ~\ref{SI_sec:samp_resamp}).}

\begin{table}
	\begin{tabular}{ll}
		\hline
		\hline
		Variables & Ecological interpretation \\
		\hline
		\hline
		$G^\star$ & Metanetwork of known interactions (undirected)\\
  		\blue{$V^\star$} & \blue{Species (node set) of the metanetwork}  \\
    	\blue{$E^\star$} & \blue{Interactions (edge set) of the metanetwork}\\
		$\Xil$ & Presence/absence of species $i$ at location $l$\\
		$W_l$ & Environmental covariates at location $l$\\
		\hline
		\hline
		Parameters &  \\
		\hline
		\hline
		$a_i$ & Species $i$ intercept\\
		$a_l$ & Location $l$ intercept \\		
		$b_i, c_i$ & Environmental (abiotic) parameters of species $i$ \\
		$\betap$ & Co-presence strength (or avoidance when $<0$) at location $l$\\
		$\betaa$ & Co-absence strength (or avoidance when $<0$) at location $l$\\
		\hline
		\hline
	\end{tabular}
	\caption{Definition of variables and parameters of the Markov random field model ELGRIN.}
	\label{tab:param}
\end{table}

%%%%%%%%
%%%%%%%%%

\paragraph{Model interpretation} 
In the hypothetical example where $G^\star$ is an empty graph (no edges, none of the species interact), the random variables $\{\Xil\}_{i \in V^\star}$ are independent and each species is present with probability $e^{\alpha_{i,l}} /(1+e^{\alpha_{i,l}})\in (0,1)$, where $\alpha_{i,l}=a_{l}+a_i+ W_l^\intercal b_i+(W_l^2)^\intercal c_i$. In other words, $\alpha_{i,l}$ is the logit of the probability of presence of species $i$ at location $l$ in the absence of interactions. Assuming that we have included all important environmental covariates, that there is no other ecological processes involved, and no model mis-specifications, $\alpha_{i,l}$ is analogous to the fundamental niche parameters of the species \cite[sensu][]{hut59}. It gives the probability of presence of species $i$ at location $l$ when only environmental filtering occurs.

In the case of species interactions, $G^\star$ is a non empty graph and the presence/absence information is smoothed across neighbouring nodes in $G^\star$. In Table~\ref{tab:interpret}, we detailed the ways both $\betap$ and $\betaa$ parameters capture how the metanetwork influences species co-occurrences in a given location,  notably the co-presence or co-absence of pairs of interacting species. \blue{This table describes  expected patterns of species distribution according to the combination of positive, negative and zero values for the $\beta$ parameters.} More precisely, when  species are known to interact positively (e.g. $G^\star$ encodes mutualism) and that these interactions, averaged over all species with suitable environmental conditions at location $l$,  influence their co-occurrences at that location, $\betap$ and/or $\betaa$ will be estimated as positive. On the other hand, in case of negative interactions (e.g. $G^\star$ encodes competition) that influence the co-occurrences at location $l$ of species with favorable environmental conditions, the parameters $\betap$ and/or $\betaa$ will be negative, co-presence configurations (or co-absence, respectively) tend to be avoided, meaning that only one of the two species tends to be present. 
%The larger the absolute value of $\beta_l$,  the stronger the strength of interactions. 
 \blue{Given a location with fixed total number of interacting co-present (resp. interacting co-absent) species, the larger the absolute value of $\betap$ (resp. $\betaa$),  the stronger the strength of the interactions.}

\begin{figure}
	\begin{center}
		\tikzstyle{isabsent}=[circle, thick, draw=black, fill=white!25,minimum size=12pt, inner sep=0pt]
		\tikzstyle{ispresent} = [isabsent, fill=black!24]
		\tikzstyle{edge} = [draw,thick,-]
		\begin{tikzpicture}
			\tikzset{
				sh2n/.style={shift={(0,1)}},
				sh2s/.style={shift={(0,-1)}},
				sh2e/.style={shift={(1,0)}},
				sh2w/.style={shift={(-1,0)}},
				sh2nw/.style={shift={(-1,1)}},
				sh2ne/.style={shift={(1,1)}},
				sh2sw/.style={shift={(-1,-1)}},
				sh2se/.style={shift={(1,-1)}},
				rc/.style={rounded corners=2mm,line width=2pt},
				place/.style={draw,circle,fill=cyan!10,inner sep=.5mm,minimum size=5mm},
			}
			\node (metanetwork) at (0,8) [shape=rectangle,label = (a)] {
                \begin{tikzpicture}[scale=0.9, auto,swap]
					\foreach \pos/\name in {{(0,2)/2},{(2,2)/3},{(0,0)/6},{(1.5,0)/7},{(2.5,0)/8},{(1,3)/1},{(0,1)/4},{(2,1)/5}}
					\node[ispresent] (\name) at \pos {$\name$};
					\foreach \source/ \dest in {1/2,1/3,2/4,3/4,3/5,4/6,5/6,5/7,5/8}
					\path[edge] (\source) -- (\dest);
				\end{tikzpicture}
			};
			\node (presence) at (0,4) [shape=rectangle,label = (b)] {
				 \begin{tabular}{l||*{3}{c}}\hline
					\backslashbox{Sites}{Species}
					&\makebox[1em]{1}&\makebox[2em]{2}&\makebox[2em]{...}\\\hline\hline
					1 &1&1&0\\\hline
					2 &1&0&0\\\hline
					... &0&0&1\\\hline
				\end{tabular}
			};
			\node (enviro) at (0,0) [shape=rectangle,label = (c)] {
				\begin{tabular}{l||*{3}{c}}\hline
					\backslashbox{Site}{Environment}
					&\makebox[1em]{Var1}&\makebox[2em]{Var2}&\makebox[2em]{...}\\\hline\hline
					1 & 1.2 & 1.5 & 0.9\\\hline
					2 & 5.5 & 5.2 & 3.1\\\hline
					... & 0.1 & 0.2 & 0.3 \\\hline
				\end{tabular}
			};
			\node[draw] at (10,6) (map) [shape=rectangle,label = (d)] {
				\includegraphics[width=4cm]{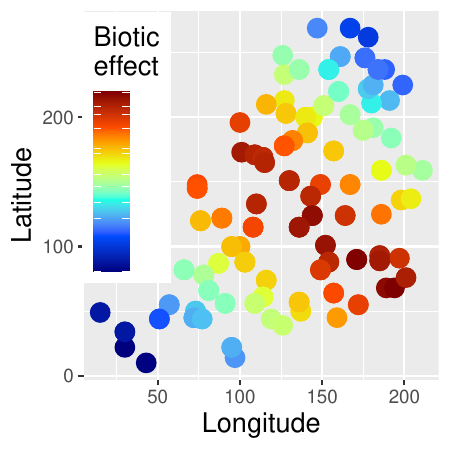}
			};
			\node[draw] at (10,1)  (niche)  [shape=rectangle,label = (e)] {
				\includegraphics[width=4cm]{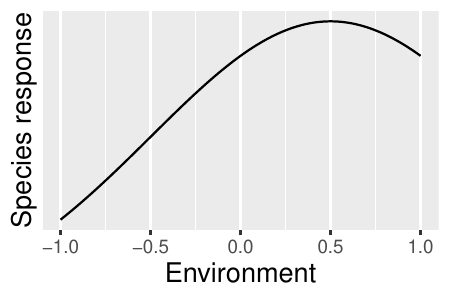}
			};
			\node[draw,font=\fontsize{15}{0}\selectfont] at (6,3.5)  (ELGRIN)     {ELGRIN};
			\draw[-stealth,rc] (metanetwork) -- (ELGRIN);
			\draw[-stealth,rc] (enviro) -- (ELGRIN);
			\draw[-stealth,rc] (presence) -- (ELGRIN);
			\draw[-stealth,rc] (ELGRIN) -- (map);height=5.2cm
			\draw[-stealth,rc] (ELGRIN) -- (niche);
		\end{tikzpicture}
           \caption{Schematic view of ELGRIN \blue{statistical} model. Given \blue{(a) an interaction } metanetwork \blue{that summarises known interactions (edges) between species (nodes),} (b) species occurrences data and (c) environmental covariates for a set of sites, ELGRIN model estimates (d) the overall effect of known biotic interactions on species distributions in each site using two association parameters, and (e) the environmental response of each species along all sites using regression parameters on environmental covariates.}
	\label{fig:scheme}
	\end{center}
\end{figure}

%TC:ignore
\tikzstyle{isabsent}=[circle, thick, draw=black, fill=white!25,minimum size=12pt, inner sep=0pt]
\tikzstyle{ispresent} = [isabsent, fill=black!24]
\tikzstyle{edge} = [draw,thick,-]
\begin{table}
 	\begin{tabular}{c||c|c|c}
	& $\betap\ll 0$ & $\betap=0$ & $\betap\gg 0$\\
	& (avoided co-presence) & (random presence) & (favored co-presence)\\ 
	\hline
	\hline    
	\pbox{20cm}{$\betaa\ll 0$ \\ (avoided co-absence)} & 
	%%%%%%%% 
    \begin{tikzpicture}[scale=0.7, auto,swap]
    \foreach \pos/\name in {{(0,2)/2},{(2,2)/3},{(0,0)/6},{(1.5,0)/7},{(2.5,0)/8}}
        \node[ispresent] (\name) at \pos {$\name$};
    \foreach \pos/\name in {{(1,3)/1},{(0,1)/4},{(2,1)/5}}
        \node[isabsent] (\name) at \pos {$\name$};
    \foreach \source/ \dest in {1/2,1/3,2/4,3/4,3/5,4/6,5/6,5/7,5/8}
        \path[edge] (\source) -- (\dest);
    \end{tikzpicture} 
	%%%%%%%% 
    & 
	%%%%%%%% 
    \begin{tikzpicture}[scale=0.7, auto,swap]
    \foreach \pos/\name in {{(1,3)/1},{(0,1)/4},{(2,1)/5},{(0,0)/6}}
        \node[ispresent] (\name) at \pos {$\name$};
    \foreach \pos/\name in {{(0,2)/2},{(2,2)/3},{(1.5,0)/7},{(2.5,0)/8}}
        \node[isabsent] (\name) at \pos {$\name$};
    \foreach \source/ \dest in {1/2,1/3,2/4,3/4,3/5,4/6,5/6,5/7,5/8}
        \path[edge] (\source) -- (\dest);
    \end{tikzpicture} 
	%%%%%%%% 
    &
	%%%%%%%% 
    \begin{tikzpicture}[scale=0.7, auto,swap]
    \foreach \pos/\name in {{(1,3)/1},{(2,2)/3},{(0,1)/4},{(2,1)/5},{(1.5,0)/7}}
        \node[ispresent] (\name) at \pos {$\name$};
    \foreach \pos/\name in {{(0,2)/2},{(0,0)/6},{(2.5,0)/8}}
        \node[isabsent] (\name) at \pos {$\name$};
    \foreach \source/ \dest in {1/2,1/3,2/4,3/4,3/5,4/6,5/6,5/7,5/8}
        \path[edge] (\source) -- (\dest);
    \end{tikzpicture} 
	%%%%%%%%
	\\ 
    %& no co-absence & no co-absence & no co-absence\\
	%& + no co-presence &  + random presence &  + co-presence \\
	%& (e.g. competitive exclusion) & &  \\
	\hline
	\pbox{20cm}{$\betaa=0$ \\ (random absence)} & 
	%%%%%%%% 
    \begin{tikzpicture}[scale=0.7, auto,swap]
    \foreach \pos/\name in {{(0,2)/2},{(2,2)/3},{(1.5,0)/7},{(2.5,0)/8}}
        \node[ispresent] (\name) at \pos {$\name$};
    \foreach \pos/\name in {{(1,3)/1},{(0,1)/4},{(2,1)/5},{(0,0)/6}}
        \node[isabsent] (\name) at \pos {$\name$};
    \foreach \source/ \dest in {1/2,1/3,2/4,3/4,3/5,4/6,5/6,5/7,5/8}
        \path[edge] (\source) -- (\dest);
    \end{tikzpicture} 
	%%%%%%%% 
    & 
	%%%%%%%% 
    \begin{tikzpicture}[scale=0.7, auto,swap]
    \foreach \pos/\name in {{(0,2)/2},{(2,1)/5},{(0,0)/6},{(1.5,0)/7}}
        \node[ispresent] (\name) at \pos {$\name$};
    \foreach \pos/\name in {{(1,3)/1},{(2,2)/3},{(0,1)/4},{(2.5,0)/8}}
        \node[isabsent] (\name) at \pos {$\name$};
    \foreach \source/ \dest in {1/2,1/3,2/4,3/4,3/5,4/6,5/6,5/7,5/8}
        \path[edge] (\source) -- (\dest);
    \end{tikzpicture} 
	%%%%%%%% 
    &
	%%%%%%%% 
    \begin{tikzpicture}[scale=0.7, auto,swap]
    \foreach \pos/\name in {{(0,1)/4},{(2,1)/5},{(0,0)/6},{(1.5,0)/7}}
        \node[ispresent] (\name) at \pos {$\name$};
    \foreach \pos/\name in {{(1,3)/1},{(0,2)/2},{(2,2)/3},{(2.5,0)/8}}
        \node[isabsent] (\name) at \pos {$\name$};
    \foreach \source/ \dest in {1/2,1/3,2/4,3/4,3/5,4/6,5/6,5/7,5/8}
        \path[edge] (\source) -- (\dest);
    \end{tikzpicture} 
	%%%%%%%% 
    \\
    %& no co-presence  & random & co-presence\\
	%&  + random absence &  &   + random absence\\
	\hline
	\pbox{20cm}{$\betaa\gg 0$ \\ (favored co-absence)} & 
	%%%%%%%% 
    \begin{tikzpicture}[scale=0.7, auto,swap]
    \foreach \pos/\name in {{(0,2)/2},{(0,0)/6},{(1.5,0)/7},{(2.5,0)/8}}
        \node[ispresent] (\name) at \pos {$\name$};
    \foreach \pos/\name in {{(1,3)/1},{(2,2)/3},{(0,1)/4},{(2,1)/5}}
        \node[isabsent] (\name) at \pos {$\name$};
    \foreach \source/ \dest in {1/2,1/3,2/4,3/4,3/5,4/6,5/6,5/7,5/8}
        \path[edge] (\source) -- (\dest);
    \end{tikzpicture} 
	%%%%%%%% 
    & 
	%%%%%%%% 
    \begin{tikzpicture}[scale=0.7, auto,swap]
    \foreach \pos/\name in {{(1,3)/1},{(0,2)/2},{(1.5,0)/7},{(2.5,0)/8}}
        \node[ispresent] (\name) at \pos {$\name$};
    \foreach \pos/\name in {{(2,2)/3},{(0,1)/4},{(2,1)/5},{(0,0)/6}}
        \node[isabsent] (\name) at \pos {$\name$};
    \foreach \source/ \dest in {1/2,1/3,2/4,3/4,3/5,4/6,5/6,5/7,5/8}
        \path[edge] (\source) -- (\dest);
    \end{tikzpicture} 
	%%%%%%%% 
    & 
	%%%%%%%% 
    \begin{tikzpicture}[scale=0.7, auto,swap]
    \foreach \pos/\name in {{(1,3)/1},{(0,2)/2},{(2,2)/3},{(0,1)/4}}
        \node[ispresent] (\name) at \pos {$\name$};
    \foreach \pos/\name in {{(2,1)/5},{(0,0)/6},{(1.5,0)/7},{(2.5,0)/8}}
        \node[isabsent] (\name) at \pos {$\name$};
    \foreach \source/ \dest in {1/2,1/3,2/4,3/4,3/5,4/6,5/6,5/7,5/8}
        \path[edge] (\source) -- (\dest);
    \end{tikzpicture} 
	%%%%%%%% 
    \\
    %& co-absence & co-absence & co-presence\\
	%& + no co-presence & + random absence &  and co-absence\\
	\end{tabular}
	\caption{Simplified view of the different behaviours of the model in function of the parameters $\betap$ and $\betaa$. The graph represents the metanetwork containing all potential interactions where species can be either present (gray node) or absent (white node) in a given location $l$ leading to different estimated $\betap$ and $\betaa$.
	 When $\betap \ll 0$ or $\betaa \ll 0$, interacting species in the metanetwork tend to avoid each other: whenever one is absent, the other tend to be present and reversely. This situation favors a checkerboard pattern on the metanetwork. Reversely, whenever $\betap\gg 0$ (resp. $\betaa\gg 0$), there are groups of interacting species that tend to be all present (resp. all absent), inducing sets of gray (resp. white) neighbour nodes in the metanetwork. Whenever $\betap=0$ or $\betaa=0$, there are sets of interacting species whose states are independent from one another and thus purely random (the proportions of gray and white nodes are governed by the values of the parameters in the Grinnellian part of the model). 
	}
	\label{tab:interpret}
\end{table}
%%%%%%%%%%%%%%%%%%%%%%%%%%%%%%%%%%%%%%%%%%%%%%%%%%%%%%%%%%%%%%%%%%%%%%%%%%%%%%
%%%%%%%%%%%%%%%%%%%%%%%%%%%%%%%%%%%%%%%%%%%%%%%%%%%%%%%%%%%%%%%%%%%%%%%%%%%%%%
%%%%%%%%%%%%%%%%%%%%%%%%%%%%%%%%%%%%%%%%%%%%%%%%%%%%%%%%%%%%%%%%%%%%%%%%%%%%%%
\subsection*{Exploration on simulated data from complex dynamic processes}
To test the ability of ELGRIN to infer the overall biotic and abiotic controls on species distributions, we used three theoretical models, different from the one underlying ELGRIN, to dynamically simulate spatial community data with $50$ species and $400$ sites along a single environmental gradient and  combined them with multiple different interactions scenarios (competition, mutualism, and no interaction). To do that, we chose species niche optima evenly distributed along a single environmental gradient. The metanetworks were built so that interacting species have close niche optima (otherwise they would never co-occur). In the mutualistic scenario, we also considered a case where species that facilitate each other tend to have an abiotic niche that is also not too close (otherwise they would compete). 
Along this single environmental gradient, niche optima and associated metanetworks according the interaction scenarios, we used three theoretical dynamic models (Lotka-Volterra, colonisation-extinction, and  co-existence model aka VirtualCom) to simulate the resulting species distribution data. These models have different underlying assumptions and processes, which allowed testing ELGRIN under a total of 9 different configurations. 

\paragraph{Lotka-Volterra model}
The Lotka-Volterra model is one of the foundational models in community ecology \citep{Takeuchi}. 
%In our case, an issue with this model is that it 
This model simulates communities under both intra- and interspecific interactions, while ELGRIN is not able to handle intraspecific interactions (its metanetwork does not allow for self-loops). %To address this issue, 
Thus we parameterized the Lotka-Volterra simulation with intraspecific interactions being negligible in regards to interspecific interactions. That way we generated species community data that meets the type of data and ecological questions ELGRIN is designed to tackle 
%We sampled the species community dataset from the equilibrium of a deterministic Lotka-Volterra model dynamics with intraspecific competition smaller than interspecific one 
(for details, see Appendix S1: Section~\ref{SI_sec:LV}). Nonetheless, we also explored the converse case to fully understand  the limits of ELGRIN (see Appendix S1: Section~\ref{sec_SI:LVintra}). 

\begin{figure}[ht!]
	\begin{center}
		\includegraphics[height=8cm]{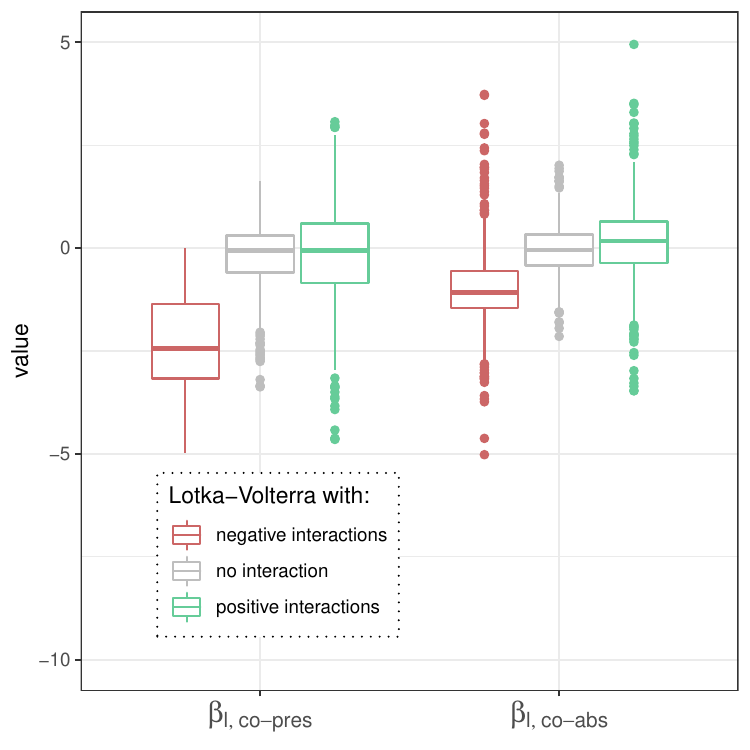}
		\caption{Distribution of co-presence ($\betap$) and co-absence  ($\betaa$) strengths inferred using ELGRIN on simulated ecological communities using a Lotka-Volterra model with competition (negative interactions),  mutualism (positive interactions) or no interactions.}
		\label{fig:simu_LV}
	\end{center}
\end{figure}

\paragraph{Colonisation-extinction model}
We used an updated version of the stochastic colonisation-extinction model developed in \citet{ohlmann2022assessing} to simulate the species community dataset for the three interaction scenarios (for details see Appendix S1: Section~\ref{SI_sec:CE}). The model consists in a multivariate Markov chain that converges towards a stationary distribution from which we sampled the species community dataset. 

\begin{figure}[ht!]
	\begin{center}
		\includegraphics[height=8cm]{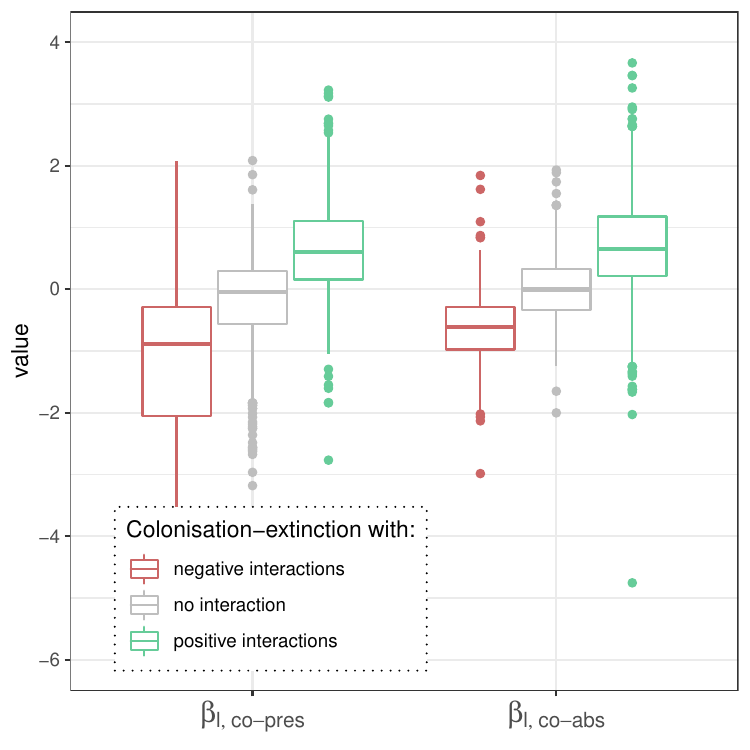}
		\caption{Distribution of co-presence ($\betap$) and co-absence  ($\betaa$) strengths inferred using ELGRIN  on simulated ecological communities using a colonisation-extinction model with competition (negative interactions), mutualism (positive interactions) or no interactions.}
		\label{fig:simu_CE}
	\end{center}
\end{figure}

\paragraph{VirtualCom model}
We used an updated version of the model developed by \citet{VirtualComTM} to simulate communities whose composition is driven simultaneously by biotic and abiotic environmental effects, for the three interaction scenarios  (for details see Appendix S1: Section~\ref{SI_sec:VC}). In this model, each community has the same carrying capacity (i.e. the exact number of individuals in each location). 

\begin{figure}[ht!]
	\begin{center}
		\includegraphics[height=8cm]{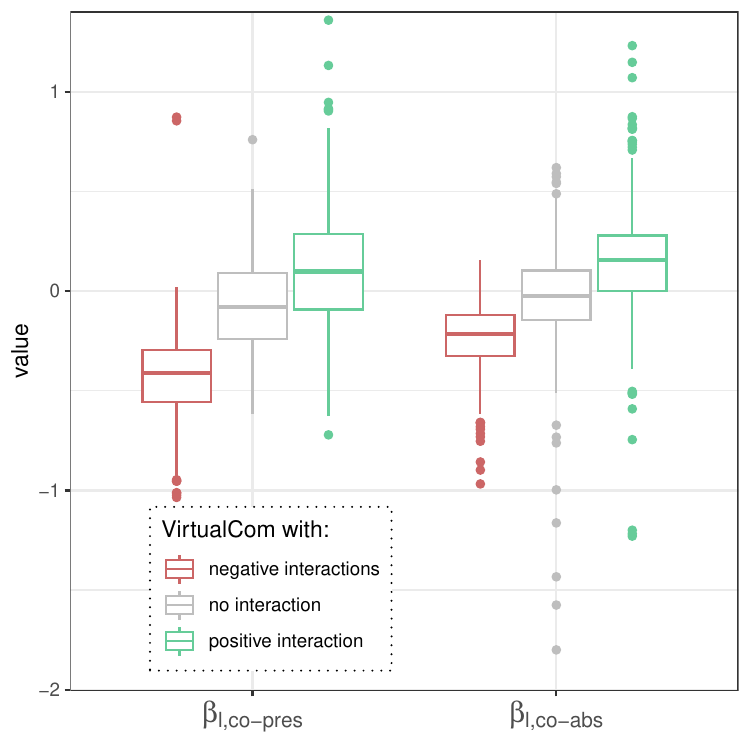}
		\caption{Distribution of co-presence ($\betap$) and co-absence  ($\betaa$) strengths inferred using ELGRIN  on simulated ecological communities with VirtualCom model, with competition (negative interactions),  mutualism (positive interactions) or no interactions.}
	\label{fig:simu_VC}
	\end{center}
\end{figure}
%%%%%%%%%%%%%%%%%%%%%%%%%%%%%%%%%%%%%%%%%%%%%%%%%%%%%%%%%%%%%%%%%%%%%%%%%%%%%%
%%%%%%%%%%%%%%%%%%%%%%%%%%%%%%%%%%%%%%%%%%%%%%%%%%%%%%%%%%%%%%%%%%%%%%%%%%%%%%
%%%%%%%%%%%%%%%%%%%%%%%%%%%%%%%%%%%%%%%%%%%%%%%%%%%%%%%%%%%%%%%%%%%%%%%%%%%%%%

\subsection*{Application: a case study}
We analyse the newly available Tetra-EU 1.0 database, a species-level trophic network of European tetrapods \citep{Maio2020} that combines all known potential interactions between terrestrial mammals, birds, reptiles and amphibians occurring in Europe. This metanetwork is based on data extracted from known interactions, scientific literature, including published \blue{articles}, books, and grey literature \citep[see][for a complete description of the data and the reference list used to build the metanetwork]{Maio2020}. As usual with such data, this metanetwork does not provide information on interaction plasticity or intraspecific interactions. 
We restricted our analyses on the European Alps that show sharp environmental gradients and varying trophic web distributions \citep{oco20}. We extracted the species distribution data from \citet{mai13} at a 300 m resolution. We upscaled all species ranges maps to a 10x10 km equal-size area grid and cropped the distribution data to the European Alps. Species were considered present on a given 10x10 km cell if they were present in at least one of the 300 x 300 m cells within it. This yielded species distributions maps for 257 breeding birds, 99 mammals, 36 reptiles, and 30 amphibians over 2138 locations. 
Environmental covariates were extracted at the same resolution and were selected following previous work on those data \citep{bra19}. For climate, we used mean annual temperature, temperature seasonality, temperature annual range, total annual precipitation and coefficient of variation of precipitation that were all extracted from the Worldclim v2 database (\url{http://www.worldclim.org/bioclim}). Using GlobCover (GlobCover V2.2; \url{http://due.esrin.esa.int/page\_globcover.php}), we extracted the number of habitats present in a given pixel, habitat diversity in a given pixel based on Simpson index and habitat evenness as a measure of habitat complexity. Finally, we added an index of annual net primary productivity (Global Patterns in Net Primary Productivity, v1 (1995), \url{http://sedac.ciesin.columbia.edu/data/set/hanpp-net-primary-productivity}) and the human footprint index (\url{http://sedac.ciesin.columbia.edu/data/set/wildareas-v2-human-footprint-geographic}). Since these data were highly correlated, we used a PCA to retain the three leading vectors as environmental covariates ($W_l$) in ELGRIN.

%%%%%%%%%%%%%%%%%%%%%%%%%%%%%%%%%%%%%%%%%%%%%%%%%%%%%%%%%%%%%%%%%%%%%%%%%%%%%%
%%%%%%%%%%%%%%%%%%%%%%%%%%%%%%%%%%%%%%%%%%%%%%%%%%%%%%%%%%%%%%%%%%%%%%%%%%%%%%
%%%%%%%%%%%%%%%%%%%%%%%%%%%%%%%%%%%%%%%%%%%%%%%%%%%%%%%%%%%%%%%%%%%%%%%%%%%%%%
\section*{Results}

\subsection*{Tests on simulated species community data}
\blue{Let us first recall that we assessed the performance of the method in inferring parameters from data sampled and re-sampled under ELGRIN model (see Appendix S1: Section ~\ref{SI_sec:samp_resamp}). We now turn to more involved dynamical theoretical models.}

For the three theoretical models (Lotka-Volterra, colonisation-extinction and VirtualCom), ELGRIN was correct in identifying the no interaction scenario, with estimated interaction strengths close to 0 (Figures \ref{fig:simu_LV}, \ref{fig:simu_CE} and \ref{fig:simu_VC}). Similarly, ELGRIN was able to retrieve the negative effects of interactions in the case of competition as simulated by the three theoretical models. The   $\betap$ and $\betaa$ parameters were mostly negative  (with much higher absolute values for $\betap$), capturing the backbone of the  competitive interactions. They indicated that co-presence and co-absence were avoided (as presented in Table \ref{tab:interpret} top-left), leading to some level of competitive exclusion. In the VirtualCom co-existence model, this phenomenon was clearly the by-product of the competitive interactions and the carrying capacity in terms of number of individuals (that explicitly induced exclusion). 
When positive interactions come into play (i.e. mutualism), the results should be contrasted between those obtained for the Lotka-Volterra model, where ELGRIN does not \blue{qualitatively} identify the processes at stake and the two other models  (colonisation-extinction and VirtualCom) where ELGRIN succeeds in identifying them. 
The Lotka-Volterra simulation with positive interactions scenario produced species that are essentially distributed along their respective niches (see Appendix S1: Figure \ref{fig:spp_distrib_LV}). As a consequence, this distribution can be simply fitted with the Grinellian part of the model and ELGRIN estimates the $\beta$s close to zero (Figure \ref{fig:simu_LV}). That means that the same dataset could have been produced by only abiotic environmental conditions and the actual species distribution does not contain anymore a pattern that ELGRIN would identify as the trace of the positive interspecific interactions. 
%This is only for the positive interaction scenario (i.e. mutualism), that the results from ELGRIN were contrasted across the three simulated models. 
On the contrary, in the positive interactions scenario, with both competition-colonisation and VirtualCom co-existence models, ELGRIN correctly identified the process at play. The parameters $\betap$ and $\betaa$ were mostly positive. During the simulation steps, the presence of one species was then favored by the presence of another species it interacted with, leading to a co-presence phenomena captured by the positive $\betap$. Conversely, the inverse mechanism emerged for co-absence, implying that the $\betaa$ tended to be positive as revealed by ELGRIN (Figures  \ref{fig:simu_CE}, \ref{fig:simu_VC}). 
\blue{To quantitatively investigate the difference between $\betap$ and $\betaa$ distributions in the three simualations, we performed Kolmogorov-Smirnov (KS) tests. For each simulation, we tested whether $\betap$ and $\betaa$ distributions were significantly different in the scenarios with interactions (either positive or negative) from the scenario without interaction. In the three simulations, the tests correctly identify significant differences between interactions and no interaction scenarios (see Appendix S1: Table \ref{tab:KStest}).}

%When using simulations from Lotka-Volterra model, results were more nuanced (Figure \ref{fig:simu_LV}). The two parameters $\betap$ and $\betaa$ indeed failed to capture the positive interactions and rather indicated no significant interspecific signal. In this case, the species were distributed all along their environmental niche (see Appendix S1:  Figure~\ref{fig:spp_distrib_LV}), thus the presence/absence pattern can be explained relying only on the Grinellian part of the model. In this context,   it was particularly hard for ELGRIN to disentangle the abiotic from the interspecific  effects (see Appendix S1: Section~\ref{SI_sec:LV} for more details). 

%%%%%%%%%%%%%%%%%%%%%%%%%%%%%%%%%%%%%%%%%%%%%%%%%%%%%%%%%%%%%%%%%%%%%%%%%%%%%%
\subsection*{Empirical case study}
When fitted to the European vertebrate dataset, ELGRIN's parameters $\betap$ and $\betaa$ were highly correlated (Pearson correlation of 0.84, see Appendix S1:  Section~\ref{SI_sec:realdata1}) suggesting that trophic interactions impact both predator/prey co-presence and co-absence.
In what follows, we therefore mainly dealt with $\betap$. 

We first observed a structured spatial pattern of the effects of interactions, with regions of negative or positive $\betap$ (bluish or reddish colors respectively in Figure \ref{fig:alpesmap}). The largest $\betap$ values were found mainly in the french Alps and in the Eastern zone.

\blue{In Figure  \ref{fig:alpes}, we present the values of different variables at each location, according to groups of estimated $\betap$ parameters, where the width of each boxplot is proportional to the number of points in each class.}
Almost all the highest $\betap$ ($>0.05$) were revealed in locations below 1600 m of altitude (Figure \ref{fig:alpes}a, \blue{$p$-value of the KS test inferior to $2.2\mathrm{e}{-16}$, details given in Appendix S1: Section~\ref{SI_sec:realdata2}}). In these regions, \blue{species} richness was generally high (Figure \ref{fig:alpes}b, \blue{$p$-value inferior to $2.2\mathrm{e}{-16}$}). In the opposite, the higher up, the more likely $\betap$ was negative (Figure \ref{fig:alpes}a). This was particularly true above 1600 m in the central Alps, where almost all the negative $\betap$ were estimated (bluish colors in Figure \ref{fig:alpesmap}). Locations with negative $\betap$ have a lower species richness (Figure \ref{fig:alpes}b). 
\blue{Interestingly, locations with low connectance have lower absolute $\betap$ values (Figure \ref{fig:alpes}c, $p$-value inferior to $2.2\mathrm{e}{-16}$) indicating a lower effect of biotic interactions compared to abiotic effects in these locations. Here, connectance is the density of the graph induced by the metanetwork at location $l$, namely its nodes are species occurring  at location $l$ and edges are those from the metanetwork between those present species.} 

%Interestingly, although these locations with negative $\betap$ and $\betaa$ parameters have the same level of taxonomic \blue{class} diversity (no significant trend in Figure~\ref{fig:alpes}d), they have different trophic network structures. Relying on the trophic group definition proposed by \citet{oco20}, we observed a lower richness in trophic groups in these locations (Figure \ref{fig:alpes}f, \red{$p$-value XXX}) as compared to locations with a higher $\betap$, but more importantly a lower diversity in trophic groups. This latter observation suggested that the trophic groups were not distributed uniformly over the whole region. 
%In summary, where $\betap$ was negative, in particular at high altitude, the biotic effects captured by ELGRIN were linked to a peculiar structure of the trophic networks observed in these locations.  

\begin{figure}[ht!]
	\begin{center}
	\includegraphics[height=8cm]{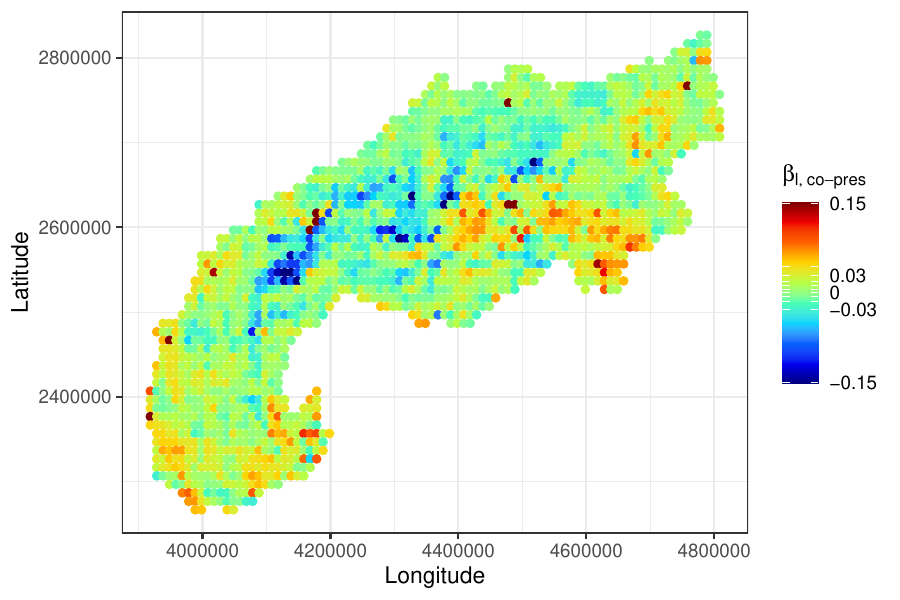}
	\caption{Results of ELGRIN on the European tetrapods case study.  Map of estimated $\betap$ (one dot per location). The color scale indicates the $\betap$ values. For the sake of representation, $\betap$ values above 0.15 in absolute value were set to 0.15.}
	\label{fig:alpesmap}
	\end{center}
\end{figure}

\begin{figure}[ht!]
	\begin{center}
		\includegraphics[height=11cm]{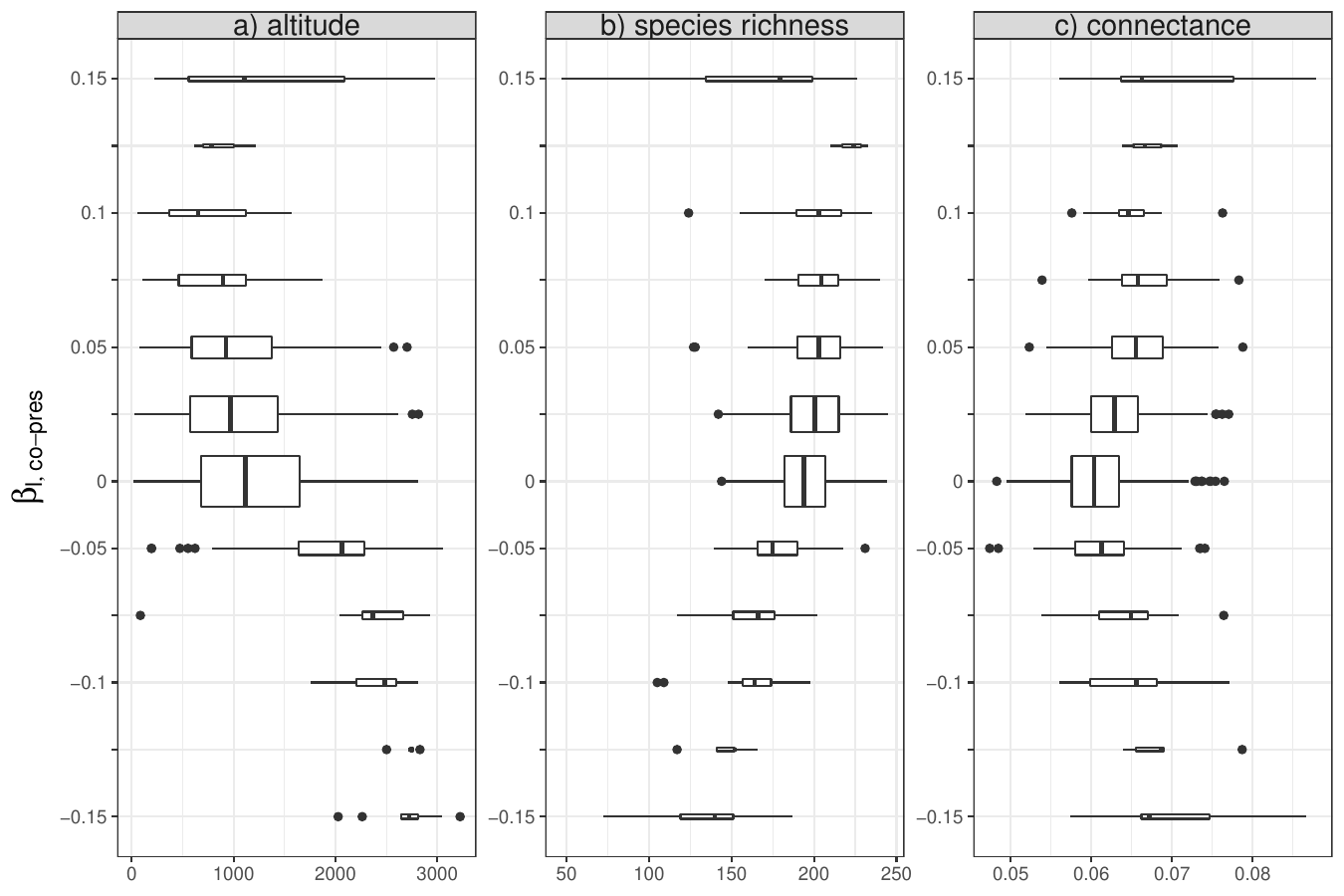}
            \caption{Results of ELGRIN on the European tetrapods case study. Boxplots representing the values of different variables at each location, according to the estimated $\betap$ values (x axis). (a) altitude, (b) species richness, 
            %class diversity (Shannon index of the proportions of amphibians, birds, mammals and reptiles),
            and (c) connectance \blue{(density of the graph induced by the metanetwork at location $l$)}
            %, trophic richness (number of trophic groups defined in \cite{oco20}) and trophic diversity (Shannon index of the proportions of trophic groups defined in \cite{oco20}).
            For the sake of representation, $\betap$ values above 0.15 in absolute value were set to 0.15. \blue{Width of the boxplots is proportional to the number of points in each class.} }
	\label{fig:alpes}
	\end{center}
\end{figure}
%%%%%%%%%%%%%%%%%%%%%%%%%%%%%%%%%%%%%%%%%%%%%%%%%%%%%%%%%%%%%%%%%%%%%%%%%%%%%%
%%%%%%%%%%%%%%%%%%%%%%%%%%%%%%%%%%%%%%%%%%%%%%%%%%%%%%%%%%%%%%%%%%%%%%%%%%%%%%
%%%%%%%%%%%%%%%%%%%%%%%%%%%%%%%%%%%%%%%%%%%%%%%%%%%%%%%%%%%%%%%%%%%%%%%%%%%%%%
\section*{Discussion}

Deciphering the mechanisms driving spatial patterns of species distributions and communities is likely one of the most active fields of ecological research since the early days of biogeography and community ecology. Still, there was so far no comprehensive statistical approach able to make the best of existing knowledge on interspecific interactions, species occurrence and environmental data to measure and quantify the dual effects of environment and biotic interactions on species distributions.
Our proposed model that relies on Markov random fields builds on the ability of graphical models to encode and analyse species distribution dependencies using the known species interactions.
This formalism allows, within the same model, to account for both the effects of the environment and the interspecific interactions, which reconciles the Grinnellian vision of species niches (i.e. how species respond to the abiotic environment) with its Eltonian counterpart (i.e. how species respond to the biotic environment). The mathematical foundations of ELGRIN are strong and its framework is flexible allowing for useful extensions to handle interaction strength, sampling effects and plasticity of interactions (see Appendix S1:  Section~\ref{SI_sec:gene}). 

A key element of ELGRIN is its ability to measure the overall relative effects of interspecific interactions on species distributions with respect to abiotic environmental conditions, which allows to summarise all local pairwise interactions in a single measure (i.e. $\betaa$ or $\betap$). This measure can then be mapped, related to spatial layers to understand how the overall relative effect of interspecific interactions vary in space and in function of the environment or the ecosystem types. Importantly, this measure can also be carefully investigated at a given location in function of the constituent species, trophic groups, specialists vs generalists, connectance and so on. Interestingly, we can thus see our $\beta_l$  estimates as an extended and more meaningful version of the famous checkerboard score or C-score \citep{sto90}, which has been used to quantify local interspecific interactions from co-occurrence pattern \cite[e.g., ][]{bou12}. The main advantage of ELGRIN over the C-score is that instead of trying to infer biotic interactions only from co-occurrences (which we know to be notoriously difficult, nearly impossible), it quantifies, in a conditional way,
the effects of the known interspecific interactions on species communities, while accounting for the environmental responses of the species. Our approach is thus not comparable with recent developments on joint species distribution models (JSDMs) that relate species occurrences to environmental conditions, and provides a residual covariance matrix that could be interpreted on the light of missing predictors, mis-specifications and biotic interactions \citep{ova17, zur2018}.
This matrix represents covariances between model residuals (the left-over from the environmental effects) and actually provides little information about biotic interactions \citep{zur2018, pog21}. ELGRIN does not infer any residual covariance and directly accounts for the known interactions through the metanetwork. In JSDMs, missing covariates will inevitably lead to spurious estimates of biotic interactions. In ELGRIN, the parameter $a_l$ is supposed to capture most of the unexplained information that is independent of the interspecific interactions. This parameter acts as a site random effect in mixed models and is expected to filter out the effects of missing covariates, although some remaining species-specific effects might still percolate into the $\beta_l$ estimates. \\

In the presentation of ELGRIN and in our case studies, we focused on a single interaction type at a time (e.g. competition, mutualism or trophic interaction). When dealing with a single type of interaction, competition for instance, the modelling is explicit since we clearly understand the effect that one species can have on another species. Although it is technically possible to manage a metanetwork composed of different types of interactions, the interpretation would become problematic. Different interaction types can have opposite effects, such as competition (a species excludes other species) and mutualism (a species facilitates other species) and, since ELGRIN captures an overall impact of these interactions on the distributions at each location, interpreting ELGRIN's results can be misleading in that case. Additionally, it is worth noting that since ELGRIN relies on a Markov random field, $G^\star$ is undirected. In other words, when the original metanetwork  encodes asymmetric interactions (e.g. predator-prey), they are then converted in undirected edges that only represent the presence of interactions (whatever their direction). It is thus critical to keep that in mind when interpreting the results of ELGRIN, and when merging different types of interactions together. The same issue happens when hoping to interpret the residual covariance matrix of JSDM through the lens of biotic interactions, since the values of the covariance matrix could reflect any type of interactions between species, that could be asymmetric or symmetric, or both. 
Note that we explicitly used a bell-shaped relationship for modelling species response to environmental gradients. While it would be possible to modify ELGRIN to incorporate any other parametric relationship, the actual version of ELGRIN would lead to erroneous conclusions whenever used on data where this assumption is not satisfied. \\

More generally, it is important to underline that ELGRIN finds the most likely scenario under a model associated to underlying assumptions. This model represents up to date the most reasonable and simple model that integrates both interspecific interactions and abiotic factors in modelling the species distribution. In that sense, it goes beyond (joint) species distribution models or ordination models by including explicitly the effect of interspecific interactions.
However, the most likely scenario under this model is not necessarily the real one that lead to observed data. For instance, ELGRIN was not able to identify the positive interspecific interactions present in the dynamics of a Lotka-Volterra model (even when restricting to negligible intraspecific interactions). Despite being a most widely studied model, the Lotka-Volterra model still raises important challenges. Indeed, whether the system reaches a single globally stable equilibrium point is known only in specific cases \citep{Takeuchi}. Since ELGRIN infers model interspecific interactions relative effects from the species distributions, existence of multiple equilibria in the Lotka-Voltera dynamics (depending on the initial conditions that are unknown) could pose serious identifiability problems. Even in presence of a unique and globally stable equilibrium point, several parameters or different interaction types could lead to the same equilibrium and thus same observed species distributions. This also raises tough identifiability issues. We hope that the recent developments around Lotka-Volterra model will help to circumvent those issues \citep{biroli2018marginally,remien2021structural}. 
We could easily simulate species distributions, using models that include other ecological processes, on which ELGRIN would fail in recovering the true underlying generation processes. %To this aim, 
Indeed we present simulations scenarios beyond the assumptions of the model (i.e., a Lotka-Volterra model with intraspecific interactions stronger than interspecific ones, see Appendix S1: Section~\ref{sec_SI:LVintra}), where ELGRIN again uncovered a completely different explanation of the data at hand. If the data contain the signature of different ecological processes (including ones not considered by ELGRIN),  ELGRIN will not be able to infer properly the relative effects of interspecific interactions and abiotic factors. The question of knowing which ecological processes could indeed be recovered from species distribution patterns remains thus debated \citep[e.g.][]{bla20}. A last note is that ELGRIN only deals with binary occurrence data rather than abundance or frequency data. In our simulation design, both the Lotka-Volterra and the VirtualCom models produced abundance data that we had to sample to obtain binary signals, loosing information during the process. On the contrary, ELGRIN performs better on colonisation-extinction simulations, where the dynamics directly generates binary data. Extending ELGRIN from the binary setup to the continuous one could improve the inference by considering more information in the species distribution data but it remains an important methodological challenge.
%As a consequence, it is really important to understand both the promises and the limits of ELGRIN. 

%Finally, ELGRIN does not incorporate spatial dependencies between the locations. Incorporating or not spatial autocorrelation into the understanding of biodiversity patterns is a long standing question in ecology \citep{dor07}. Indeed, spatial dependency between locations might bias the statistical estimates since similar locations could be used as replicates. However, this bias might be an issue only if the underlying factors that created this spatial autocorrelation are not included into the environmental covariates. Meanwhile, the integration of the spatial dependency in the model is likely to complicate the estimation procedure and could dramatically inflate the computing time.\\

In terms of further perspectives, we might wonder whether this model could be extended for prediction purposes. In principle, it is possible to draw presence/absence data from the model for different values of the environment variables. These different values could allow for predictions in space but also in time.  
However, something to keep in mind is that metanetwork will not change in the model and will thus be considered as static and thus representative in space (or in time). If the metanetwork has not been built with that prediction perspective in mind, this might be an issue as we will miss interaction rewiring effects on species distributions. Instead, if the metanetwork is truly a potential metanetwork that tries to incorporate these potential interactions that have been observed yet \citep[i.e. ][]{Maio2020}, it might be interesting to investigate how biotic interactions might further influence future species distributions in response to environmental changes.

%\section*{Author contribution}
%Stéphane Dray, Catherine Matias, Vincent Miele, Marc Ohlmann and Wilfried Thuiller developed a draft version of the framework. Catherine Matias developed the mathematical model and inference algorithm. Vincent Miele implemented the model. Marc Ohlmann and Giovanni Poggiato  prepared the simulated data. All authors contributed substantially to the data analysis and to the writing of the paper.

\section*{Acknowledgements}
The authors would like to thanks the anonymous reviewers that carefully reviewed a previous version of our model and manuscript, and notably proposed the Lotka-Volterra simulation scheme.
Funding was provided by the French National Center for Scientific Research (CNRS) and the French National Research Agency (ANR) grant ANR-18-CE02-0010-01 EcoNet. VM would like to thank the LECA laboratory for hosting him in Chambéry. 

\section*{Conflict of Interest Statement}
The authors declare no conflict of interest.

\clearpage

\newpage
\appendix

\renewcommand\thesection{S.\arabic{section}}
\renewcommand\theequation{S.\arabic{equation}}
\renewcommand\thefigure{S.\arabic{figure}}

\begin{center}
  {\Large 
Appendix S1 for manuscript 'Quantifying the overall effect of biotic interactions  on species distributions along environmental gradients', by M. Ohlmann, C. Matias, G. Poggiato, S. Dray, W. Thuiller \& V. Miele.}
\end{center}

\section{ Model extensions} 
\label{SI_sec:gene}
\subsection{Interaction strength}
Besides the binary case, it is also possible to handle interaction strengths. An interaction strength can represent a frequency (e.g., the number of visits of a pollinator to a plant), an intensity \citep[e.g., rate of predation, ][]{ber04} or a preference (e.g. modulating trophic links with known affinities of a predator to its preys). 

We write  $A^\star=(A_{ij}^\star)_{i,j\in  V^\star}$ the adjacency matrix of the  graph $G^\star$. Now, each edge $(i,j)\in E^\star$ is modulated through the weight $A_{ij}^\star$ of the interaction. 
In this case, sub-equations~\eqref{eq:GRF2} and \eqref{eq:GRF3} are replaced by  
\begin{align*}
 \betap \sum_{(i,j) \in E^\star} A_{ij}^\star\1\{\Xjl =\Xil=1\} &=\betap \sum_{(i,j) \in E^\star} A_{ij}^\star \Xjl \Xil \\
\text{and }
 \betaa \sum_{(i,j) \in E^\star} A_{ij}^\star \1\{\Xjl =\Xil=0\} &=\betaa \sum_{(i,j) \in E^\star} A_{ij}^\star (1 -\Xjl) (1-\Xil), 
\end{align*}
respectively.  

\subsection{Sampling effects} The random variables $\Xil$ that indicate the presence of species $i$ at location $l$ might not be exactly observed due to sampling effects. Here, we propose to account for these effects by assuming that each species $i\in V^\star$ is sampled with probability  $p_{i,l}\in (0,1)$ at location  $l\in \{1,\dots, L\}$.   We therefore introduce a new set
of random  variables $\Yil, i \in  V^\star, l \in \{1,\dots, L\}$ such
that each $\Yil$ only depends on $\Xil$ and is distributed as 
\begin{align*}
	\pr(\Yil | \Xil) &= p_{i,l}^{\Yil}(1-p_{i,l})^{1-\Yil} \Xil + (1-\Xil)(1-\Yil) \nonumber \\
	&= p_{i,l}^{\Xil \Yil}(1-p_{i,l})^{\Xil(1-\Yil)}  \1\{ (1-\Xil) \Yil \neq 1\}. 
%	\label{eq:obs_species}
\end{align*}
Specifically,  whenever  $\Xil=0$ (species  $i$  is absent  from location $l$), species $i$ cannot be observed  at  location $l$  and $\Yil=0$.  Now,  when  $\Xil=1$ (species   $i$  is  present  at  location $l$),  it  is observed ($\Yil=1$) with sampling probability $p_{i,l}$ and unobserved ($\Yil=0$) with probability $1-p_{i,l}$. The parameter $p_{i,l}$ must be given by the user considering three possible cases: species dependent sampling ($p_{i,l}:=  p_i ; i \in V^\star$), location dependent sampling ($p_{i,l} := p_l ; 1\le l \le L$) or constant sampling ($p_{i,l} := p$). 
In this case, the $\Xil$ become latent variables as we only observe the $\Yil$'s. The model turns out to be a hidden Markov random field (HMRF). 

\subsection{Plasticity of interactions} Our model is able to assume that interactions are not necessarily induced by the presence/absence variables (we can assume that two species interact in a given location but not in another location). In this case, we consider a sample of observed graphs $G^1,\dots , G^L$  where each $G^l=(V^l, E^l)$ is such that $V^l  \subset V^\star$. These graphs represent local interactions that are observed at the different locations $l \in \{1, \dots, L\}$.  The main point here is that we assume that these interactions are sampled from the pool of potential interactions encoded in the metanetwork $G^\star$. 
Let $\Al =(\Aijl)_{i,j\in V^l}$ denote the adjacency matrix of the graph $G^l$. 
We assume that any two species that are observed and that can potentially interact (i.e., are linked in the metanetwork $G^\star$) do effectively interact at location $l$ with a probability that depends only on these two species. Namely for any $(i,j) \in E^\star$, conditional on
the fact  that two species  $i,j\in V^\star$  were observed at  location $l$ (namely $\Yil\Yjl=1$), we set 
\begin{equation*}
%	\label{eq:obs_interac}
	\Aijl | \Yil\Yjl=1 \sim \mathcal{B}(\epsilon_{ij}),   
\end{equation*}
and $\Aijl \equiv 0$ whenever $(i,j)\notin E^\star$ or $\Yil=0$ or $\Yjl=0$. 
%(Note that if we observe weighted instead of binary graphs, we can choose an appropriate parametric distribution here.) 
This additional parameter $\epsilon=\{\epsilon_{i,j}\}_{i,j \in V^\star}$ allows us to handle interaction plasticity directly in the model.

%%%%%%%%%%%%%%%%%
%%%%%%%%%%%%%%%%

\section{ Mathematical details on the model}
%%%%%%%

\subsection{Identifying the parameters of the Gibbs distribution}
\label{SI_sec:ident}
We first address the issue of the identifiability of the parameters from the Gibbs distribution. 
In what follows, we focus on the case of a binary metanetwork $G^\star$. However, our results remain valid in the weighted case, where degrees are replaced by weighted degrees and the cardinality $|E^\star|$ (total number of edges in $G^\star$) becomes the total sum of the weights. 

Let us focus on the model with no covariates ($W_l=0$) and consider for each location $l\in \{1,\dots,L\}$ the maps $\psi_l=(\{a_{i}\}_i, a_{l}, \betap,\betaa) \mapsto \prpsil$, where 
\begin{align*}
\prpsil(\{\Xil\}_{i \in V^\star}) =\frac 1 {Z_{\psi_l}} \exp\Big( &\sum_{i  \in V^\star} (a_{i}+a_l)  \Xil 
+ \betap \sum_{(i,j) \in E^\star} \Xjl \Xil
\\
&+ \betaa \sum_{(i,j) \in E^\star} (1-\Xjl)(1- \Xil)
\Big).
\end{align*}
For any $\psi =(\{a_{i}\}_{i,l}, \{a_{l}, \betap,\betaa\}_l)$ we also define the global probability distribution $\prpsi$ as follows
\[
\prpsi(\{\Xil\}_{i \in V^\star ; 1\le l \le L}) = \prod_{l=1}^L \prpsil(\{\Xil\}_{i \in V^\star}). 
\]
%Indeed, from such a result, we would identify the regression parameters $\{a_i,a_l, b_i,c_i\}_{i,l}$ of the model with covariates under classical conditions. 

\begin{prop}[Identifying linear combinations of the parameter]
\label{prop:ident}
In the model without covariate ($W_l=0$, for any $l$), the probability distribution $\prpsi$ uniquely  defines the quantities
\begin{align}
&    \betap+\betaa , \label{eq:beta_sum}\\
\text{and } &    a_{i}+a_{l} +\betap \degi 
\text{ or equivalently }   a_{i}+a_{l}  -\betaa \degi ,
\label{eq:alpha_beta} 
\end{align}
for any  $i\in V^\star, l\in \{1,\dots, L\}$, 
where $\degi$ is the degree of species $i$ in the metanetwork $G^\star$.
Moreover, if there exist 2 species $1\le i,j\le N$ such that $\degi\neq \degj$ in $G^\star$, then the probability distribution $\prpsi$ uniquely  defines the additional quantities
\begin{align}
    &\betaa-\betaalp \label{eq:beta_diff_loc} 
\text{ or equivalently }   \betap-\betaplp, \\
\text{and }  & a_l - a_{l'}, \label{eq:al}
\end{align}
for any $l, l' \in \{1,\dots, L\}$. 
\end{prop}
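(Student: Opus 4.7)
The plan is to rewrite the exponent in the Gibbs distribution in canonical exponential-family form and then read off which linear combinations of the parameters constitute the identifiable natural parameters. Expanding the co-absence term via
\begin{align*}
\sum_{(i,j)\in E^\star}(1-\Xil)(1-\Xjl)=|E^\star|-\sum_{i\in V^\star}\degi\Xil+\sum_{(i,j)\in E^\star}\Xil\Xjl,
\end{align*}
the constant $\betaa|E^\star|$ gets absorbed into $Z_{\psi_l}$. Setting $\alpha_{i,l}:=a_i+a_l-\betaa\degi$ and $\gamma_l:=\betap+\betaa$, the location-wise distribution becomes proportional to $\exp\bigl(\sum_{i\in V^\star}\alpha_{i,l}\Xil+\gamma_l\sum_{(i,j)\in E^\star}\Xil\Xjl\bigr)$.

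Next I would invoke the standard identifiability theorem for exponential families: provided the sufficient statistics are affinely independent as functions of the data, the natural parameters are uniquely determined by the distribution. Here the statistics are the $|V^\star|$ linear monomials $\Xil$ together with the single quadratic statistic $\sum_{(i,j)\in E^\star}\Xil\Xjl$, which are trivially independent as soon as $E^\star\neq\varnothing$ (the quadratic term cannot be written as a linear combination of the $\Xil$ because their joint support, $\{0,1\}^{V^\star}$, contains configurations distinguishing the two). This immediately yields identifiability of $\gamma_l=\betap+\betaa$, i.e.\ equation~\eqref{eq:beta_sum}, and of the family $\{\alpha_{i,l}\}_i=\{a_i+a_l-\betaa\degi\}_i$. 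The alternative form $a_i+a_l+\betap\degi$ in \eqref{eq:alpha_beta} is obtained by adding $\gamma_l\degi$, which is already identifiable.

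To handle the cross-location claims under the assumption $\degi\neq\degj$, I would exploit the fact that the species intercepts $a_i$ are shared across locations. From the identifiability established above, for any two locations $l,l'$ the differences
\begin{align*}
\alpha_{i,l}-\alpha_{i,l'}=(a_l-a_{l'})-(\betaa-\betaalp)\,\degi
\end{align*}
are known. Writing the analogue for species $j$ and subtracting eliminates $a_l-a_{l'}$, giving one equation with non-zero coefficient $\degj-\degi$ for the unknown $\betaa-\betaalp$; this determines it, which is \eqref{eq:beta_diff_loc}. Back-substitution then yields $a_l-a_{l'}$, giving~\eqref{eq:al}, and the equivalent form $\betap-\betaplp=(\gamma_l-\gamma_{l'})-(\betaa-\betaalp)$ follows from the already identified $\gamma_l-\gamma_{l'}$.

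The main obstacle I anticipate is the clean formulation of the linear-independence-of-sufficient-statistics step, since any sloppy wording could obscure the fact that it is the joint distribution $\prpsi=\prod_l\prpsil$ (and not a single location) that drives identifiability of the cross-location quantities. Everything else is elementary linear algebra in the degree vectors, modulo the observation that the degrees $\degi$ are known quantities read directly from the fixed metanetwork $G^\star$.
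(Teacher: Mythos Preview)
Your argument is correct and takes a genuinely different route from the paper's for the single-location part. The paper proceeds concretely: it evaluates $\log\prpsil$ at four specific configurations (all absent, all present, only $i$ present, only $i$ absent), then forms differences $r_1^l,r_2^{i,l},r_3^{i,l}$ to eliminate the unknown $\log Z_{\psi_l}$ and isolate $\betap+\betaa$ and $\alpha_{i,l}+\betap\degi$. You instead rewrite the exponent in canonical exponential-family form and appeal to the standard identifiability of natural parameters once the sufficient statistics $\{\Xil\}_i$ and $\sum_{(i,j)\in E^\star}\Xil\Xjl$ are shown affinely independent on $\{0,1\}^{V^\star}$. Your route is cleaner and exposes the structural reason for the result; the paper's route is more explicit and self-contained (no external theorem needed) and in effect supplies a constructive recipe for recovering the identifiable combinations from finitely many evaluations of $\prpsil$. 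Both require $E^\star\neq\varnothing$, which you note and the paper also invokes. For the cross-location part your argument coincides with the paper's: take differences $\alpha_{i,l}-\alpha_{i,l'}$ for two species with distinct degrees and solve the resulting $2\times 2$ linear system. Your worry about the ``joint versus marginal'' formulation is not a real obstacle here, since the locations are independent and $\prpsi$ determines each $\prpsil$; the shared species intercepts $a_i$ are what link the locations.
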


%%%
\begin{proof}
Let us denote $\alpha_{i,l}= a_i+a_l$. As $\prpsil$ is a marginal of $\prpsi$, we start by fixing the location $l\in \{1,\dots, L\}$ and consider the probabilities of specific configurations at this location. 
We let $\Xmil$ denote the set $\{\Xjl ; j\in V^\star, j\neq i \}$. From the knowledge of $\prpsi$, we obtain  for $l \in \{1,...,L\}$ and $i \in V^{\star}$ the quantities 
\begin{align*}
s_0^l :=& \log \prpsil(\{0,...,0\}) = -\log(Z_{\psi_l}) + |E^\star|\betaa
\\
s_1^l :=&  \log \prpsil(\{1,...,1\}) = -\log(Z_{\psi_l}) + \sum_i \alpha_{i,l} +|E^\star|\betap
\\
s_{10}^{i,l} :=& \log \prpsil(\{\Xil=1,\Xmil=0\}) = -\log(Z_{\psi_l}) + \alpha_{i,l} + \betaa(|E^\star| - \degi)\\
s_{01}^{i,l} :=& \log \prpsil(\{\Xil=0,\Xmil=1\}) = -\log(Z_{\psi_l}) + \sum_{j \neq i} \alpha_{j,l} + \betap(|E^\star| - \degi) , 
\end{align*}
where 
%$\degi$ denotes the degree of species $i$ in $G^{\star}$ and 
$|E^\star|$ is the cardinality of the set $E^\star$. 
It follows
\begin{align*}
r_{1}^l &:= s_1^l - s_0^l
 = \sum_i \alpha_{i,l} +|E^\star|(\betap-\betaa)\\
r_2^{i,l} &:= s_{10}^{i,l}- s_0^l 
= \alpha_{i,l} - \betaa\degi
\\
r_{3}^{i,l}  &:= s_{01}^{i,l} - s_0^l
= \sum_{j\neq i} \alpha_{j,l} + (\betap-\betaa) |E^\star| - \betap \degi .
%\\ r_4^i  &:= s_{01}^i -s_{10}^i   = - \alpha_{i,l} +\sum_{j \neq i} \alpha_{j,l} + (\betap- \betaa) [ |E^\star| -\degi] \\
%r_5 &:= \sum_i (s_{01}^i - s_1) = \sum_i [-\alpha_{i,l}- \betap \degi ] = -\sum_i \alpha_{i,l} - 2 |E^\star|\betap .
\end{align*}
From these equations, 
we uniquely obtain 
\begin{align*}
    t_1^{i,l} :=&  r_1^l -r_3^{i,l}= \alpha_{il} +\betap \degi \\
    t_2^{i,l}  :=& r_1^l - r_2^{i,l} -r_3^{i,l} = (\betaa+\betap) \degi .
%    t_3^i  :=& -r_4^i - r_5 = 2\alpha_{i,l} + \betap [|E^\star| + \degi ] + \betaa [|E^\star| - \degi ]. 
\end{align*}
As a consequence, as soon as there is at least one edge in the metanetwork $G^\star$ (inducing at least one species $i$ with $\degi\neq 0$) we can obtain the quantities $\betaa+\betap$ (recall that $\degi$ is known) as well as $\alpha_{i,l} +\betap \degi$  uniquely from the distribution $\prpsi$. Note also that combining the knowledge of these two quantities, the second is equivalent to knowing  $\alpha_{i,l} -\betaa \degi$. \\

Now, let us recall that $\alpha_{i,l}=a_i+a_l$. For two different locations $l\neq l'$, we have access to 
\[
t_1^{i,l} -t_1^{i,l'} = a_l - a_{l'} + (\betap-\betaplp)\degi. 
\]
We now assume that there exist two species  $1\le i,j\le N$ such that $\degi\neq \degj$ in $G^\star$ and obtain \eqref{eq:beta_diff_loc} as follows 
\[
\betap-\betaplp = (t_1^{i,l} -t_1^{i,l'}- t_1^{j,l} +t_1^{j,l'} ) [\degi-\degj]^{-1}. 
\]
Combining this with \eqref{eq:beta_sum}, it is equivalent to the unique identification of $\betaa-\betaalp$. Finally, going back to $t_1^{i,l} -t_1^{i,l'} $ we uniquely obtain $a_l-a_{l'}$. 
\end{proof}

%%%%%%%%%
%%%%%%%%%%
%\subsection*{Identifiability through additional constraints - Scenario 1}
%We choose both a reference location $l^\star$ 
%%and a reference species $i^\star$ 
%and set 
%\[
%\betaals=0 \text{ and } a_{l^\star} =0. 
%\]
%Then 
%\begin{itemize}
%    \item from \eqref{eq:beta_sum} at location $l^\star$ we uniquely obtain $\betapls$ ; 
%    \item from \eqref{eq:beta_diff_loc} at locations $l,l^\star$, we uniquely obtain all the other values $\betaa,\betap$ for $l\neq l^\star$ ; 
%    \item from  \eqref{eq:al} at locations $l,l^\star$, we uniquely obtain all the values $a_l$ for $l\neq l^\star$ ; 
%    \item finally from \eqref{eq:alpha_beta} and the previous quantities, we uniquely obtain all the values $a_i$ for $1\le i \le n$. 
%\end{itemize}

\begin{defn}[Equivalence class] \label{defn:equiv}
For any parameter $\psi =(\{a_{i}\}_{i}, \{a_{l}, \betap,\betaa\}_l)$, its  equivalence class $[\psi]$ is defined as 
\[
[\psi] :=\{ (\{a_{i}+\gamma \degi- \delta \}_{i}, \{a_{l} +\delta, \betap-\gamma,\betaa+\gamma\}_l) ; \gamma \in \R, \delta \in \R \}.
\]
\end{defn}

\begin{coro}[Parameter identifiability up to the equivalence class]
\label{coro:ident}
In the model without covariate ($W_l=0$, for any $l$) and assuming that  there exist 2 species $1\le i,j\le N$ such that $\degi\neq \degj$ in $G^\star$, we have that  whenever there are two parameter values $\psi, \tilde \psi$ such that $\prpsi =\mathbb{P}_{\tilde{\psi}}$, then $\tilde \psi \in [\psi]$. In other words, the equality  $\prpsi =\mathbb{P}_{\tilde{\psi}}$ implies that there exist real values $\gamma, \delta  \in \R$ such that for any $i\in V^\star$ and $l\in \{1,\dots,L\}$, we have
\begin{align*}
\tilde a_i &=    a_{i}+ \gamma \degi +\delta \\
\tilde a_l &= a_{l} -\delta\\
\tilde {\beta}_{l,co-pres}&= \betap-\gamma\\
\tilde {\beta}_{l,co-abs}&= \betaa+\gamma . 
\end{align*}
\end{coro}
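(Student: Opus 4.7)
The plan is to leverage Proposition~\ref{prop:ident} directly: the four identifiable linear combinations listed there are precisely what is needed to pin down the parameters modulo the two-dimensional equivalence class described in Definition~\ref{defn:equiv}. Since $\prpsi = \mathbb{P}_{\tilde{\psi}}$, each of those identifiable quantities must coincide when evaluated at $\psi$ and at $\tilde\psi$, and the strategy is simply to read off the shifts $\gamma$ and $\delta$ from these identities in the right order.

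First, I would set $\gamma := \betap - \tilde{\beta}_{l,co-pres}$ for an arbitrarily chosen location $l$. Because equation~\eqref{eq:beta_diff_loc} states that $\betap - \betaplp$ is uniquely determined by the distribution, $\gamma$ does not depend on the choice of $l$, so $\tilde{\beta}_{l,co-pres} = \betap - \gamma$ for every $l$. Combining this with the identifiability of $\betap+\betaa$ in equation~\eqref{eq:beta_sum} gives $\tilde{\beta}_{l,co-abs} - \betaa = \betap - \tilde{\beta}_{l,co-pres} = \gamma$, hence $\tilde{\beta}_{l,co-abs} = \betaa + \gamma$.

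Next, I would set $-\delta := \tilde a_l - a_l$ at one location; equation~\eqref{eq:al} ensures this difference is independent of $l$, so $\tilde a_l = a_l - \delta$ for all $l$. Plugging the three relations already obtained into equation~\eqref{eq:alpha_beta}, which says that $a_i + a_l + \betap\degi = \tilde a_i + \tilde a_l + \tilde{\beta}_{l,co-pres}\degi$, yields $\tilde a_i = a_i + \gamma\degi + \delta$, which is exactly the form prescribed by $[\psi]$.

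The main subtlety, rather than a genuine obstacle, is checking that $\gamma$ and $\delta$ are well-defined, i.e.\ location-independent: this is precisely what equations~\eqref{eq:beta_diff_loc} and \eqref{eq:al} deliver, and it is here that the hypothesis on the existence of two species with distinct degrees enters through Proposition~\ref{prop:ident}. Once those two scalars are fixed, the remainder is a purely mechanical substitution, so no genuinely hard step remains beyond invoking the proposition in the right order.
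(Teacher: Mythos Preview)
Your proof is correct and follows essentially the same approach as the paper's: both arguments invoke Proposition~\ref{prop:ident} to equate the identifiable linear combinations under $\psi$ and $\tilde\psi$, define the shift $\gamma$ from the $\betap$ difference, use~\eqref{eq:beta_sum} to transfer it to $\betaa$, and then solve for the intercepts. The only cosmetic difference is that you invoke~\eqref{eq:al} directly to show $\delta$ is location-independent, whereas the paper deduces this from the relation $\delta_l+\delta_i=-\gamma\degi$; both routes are immediate.
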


\begin{proof}
Assume that  $\prpsi =\mathbb{P}_{\tilde{\psi}}$ and define for any location $l\in \{1,\dots,L\} $ the quantity 
$\gamma_l := \betap-\tilde{\beta}_{l,co-pres}$. 
We know from Proposition~\ref{prop:ident} that 
\begin{align*}
 \betaa+\betap &= \tilde{\beta}_{l,co-abs}+\tilde{\beta}_{l,co-pres} \\
\tilde{a}_i+\tilde{a}_l+\tilde{\beta}_{l,co-pres}\degi& =a_i+a_l+\betap\degi .
\end{align*}
This induces that 
\begin{align*}
\gamma_l &= \tilde{\beta}_{l,co-abs}-\betaa \\    
\text{and}\quad \tilde{a}_i+\tilde{a}_l  &=a_i+a_l +\gamma_l\degi.
\end{align*}
 Let us further prove that $\gamma_l$ does not depend on $l$. 
 From Proposition~\ref{prop:ident}  and the additional assumption that at least two species have different degrees in the metanetwork, we have for any locations $l,l'\in \{1,\dots, L\}$, 
\[
    \betap-\betaplp = \tilde{\beta}_{l,co-pres} - \tilde{\beta}_{l',co-pres} = \betap-\betaplp -\gamma_l +\gamma_{l'}, 
\]
which implies that $\gamma_l=\gamma_{l'}$ for any pair of locations. 
Finally, let us define for any location and any species 
\[
\delta_l =a_l - \tilde{a}_l  \quad \text{and} \quad 
\delta_i =a_i - \tilde{a}_i .
\]
We have established that $\delta_l+\delta_i = - \gamma \degi$. This implies that $\delta_l$ is constant through locations and equal to some $\delta$. This concludes the proof. 
\end{proof}

Corollary~\ref{coro:ident} tells us that the model parameter is identifiable up to the equivalence class in Definition~\ref{defn:equiv}. 
Note that it is possible to choose one specific  representative parameter in this class. 

\begin{prop}[Choosing a representative] 
\label{prop:param_choice}
In the model without covariate ($W_l=0$, for any $l$) and assuming that  there exist 2 species $1\le i,j\le N$ such that $\degi\neq \degj$ in $G^\star$, for any parameter value $\tilde \psi$, it is possible to choose a unique representative $\psi \in [\tilde \psi]$ such that the estimated linear regression coefficients of the set of parameters $\{a_i\}_i$ over the degrees $\{\degi\}_i$ are equal to 0, namely 
\[
(\hat \gamma, \hat \delta) := \argmin_{(\gamma, \delta) \in \R^2} \sum_{i\in V^\star} (a_i -\gamma\degi -\delta) ^2 
\]
satisfies $(\hat \gamma, \hat \delta)=(0,0)$.
\end{prop}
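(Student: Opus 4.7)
The plan is to use the explicit parametrization of the equivalence class $[\tilde\psi]$ by $(\gamma,\delta)\in\R^2$ given in Definition~\ref{defn:equiv}, compute the OLS coefficients for a generic element as affine functions of $(\gamma,\delta)$, and solve for the unique pair that zeroes them out.

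First, I would write any candidate representative $\psi\in[\tilde\psi]$ in the form $a_i=\tilde a_i+\gamma\,\degi-\delta$ for $i\in V^\star$ and the corresponding transformations of $\{a_l,\betap,\betaa\}_l$ (which are irrelevant for this statement since the regression only involves the $\{a_i\}_i$ and $\{\degi\}_i$). Denote $\bar d=N^{-1}\sum_i\degi$ and $\bar{\tilde a}=N^{-1}\sum_i\tilde a_i$, and let $(\hat\gamma_{\tilde a},\hat\delta_{\tilde a})$ be the OLS estimates obtained from regressing $\{\tilde a_i\}_i$ on $\{\degi\}_i$. Then the mean of the new $a_i$'s is $\bar{\tilde a}+\gamma\bar d-\delta$, and the centered residuals read $a_i - (\bar{\tilde a}+\gamma\bar d-\delta)=(\tilde a_i-\bar{\tilde a})+\gamma(\degi-\bar d)$.

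Second, I would plug these quantities into the standard OLS closed form. The slope becomes
\[
\hat\gamma_a=\frac{\sum_i\bigl[(\tilde a_i-\bar{\tilde a})+\gamma(\degi-\bar d)\bigr](\degi-\bar d)}{\sum_i(\degi-\bar d)^2}=\hat\gamma_{\tilde a}+\gamma,
\]
which is well-defined precisely because the assumption that two species have distinct degrees gives $\sum_i(\degi-\bar d)^2>0$. The intercept then simplifies to
\[
\hat\delta_a=(\bar{\tilde a}+\gamma\bar d-\delta)-(\hat\gamma_{\tilde a}+\gamma)\bar d=\hat\delta_{\tilde a}-\delta.
\]
Thus the map $(\gamma,\delta)\mapsto(\hat\gamma_a,\hat\delta_a)$ is the affine bijection $(\gamma,\delta)\mapsto(\gamma+\hat\gamma_{\tilde a},\,-\delta+\hat\delta_{\tilde a})$ of $\R^2$.

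Finally, I would set $\hat\gamma_a=\hat\delta_a=0$ and read off the unique solution $\gamma=-\hat\gamma_{\tilde a}$, $\delta=\hat\delta_{\tilde a}$. This yields exactly one representative $\psi\in[\tilde\psi]$ satisfying the stated regression condition, concluding the proof. The only subtle point is the well-posedness of the OLS slope, which is exactly where the hypothesis $\degi\neq\degj$ for some $i,j$ is used; beyond that, the argument is purely algebraic and requires no additional structure from the Gibbs distribution.
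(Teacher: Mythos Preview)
Your proof is correct and follows essentially the same idea as the paper's: both identify the desired representative via the OLS regression of $\{\tilde a_i\}_i$ on $\{\degi\}_i$ and then shift within the equivalence class accordingly. Your version is slightly more thorough in that it explicitly derives the affine transformation law $(\hat\gamma_a,\hat\delta_a)=(\hat\gamma_{\tilde a}+\gamma,\,\hat\delta_{\tilde a}-\delta)$ and thereby makes uniqueness evident, whereas the paper simply constructs the representative by subtracting the fitted values and appeals to the standard OLS residual property without spelling out uniqueness.
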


\begin{proof}
Fix a parameter value $\tilde \psi$ and consider the linear regression  of the set of parameters $\{\tilde a_i\}_i$ over the degrees $\{\degi\}_i$, namely 
\[
(\tilde \gamma, \tilde \delta) := \argmin_{(\gamma, \delta) \in \R^2} \sum_{i\in V^\star} (\tilde a_i -\gamma\degi -\delta) ^2 .
\]
Then by setting the parameter $\psi=(\{a_{i}\}_{i,l}, \{a_{l}, \betap,\betaa\}_l)$ as 
\begin{align*}
a_i &:= \tilde a_i -\tilde \gamma\degi -\tilde \delta ; \\
a_l & := \tilde a_l + \tilde \delta ;\\
\betap &:= \tilde{\beta}_{l,co-pres} +\tilde \gamma \\
\betaa &:= \tilde{\beta}_{l,co-abs} -\tilde \gamma 
\end{align*}
(for any $i,l$), we know from Definition~\ref{defn:equiv} that $\psi\in [\tilde\psi]$ and also by definition, the estimated values 
\[
(\hat \gamma, \hat \delta) := \argmin_{(\gamma, \delta) \in \R^2} \sum_{i\in V^\star} (a_i -\gamma\degi -\delta) ^2
\]
will now satisfy $(\hat \gamma, \hat \delta)=(0,0)$.
\end{proof}

\begin{rem}
The choice of the representative parameter given by Proposition~\ref{prop:param_choice} is such that the response of species $i$ to the environment does not depend on its degree in the metanetwork and thus on its number of interactions. This is a natural choice to separate the Grinellian part from the Eltonian one in our model. 
Note that this representative parameter is the one we rely on when interpreting the model. Thus, when we comment the behaviour of the model with respect to different values of its parameter, we always rely on this specific representative.  

Note however that whatever the choice of the representative, the intercept values $a_i$ and $a_l$ are inferred up to an additive constant.

\end{rem}

%%%%%%%%%%%%%%%%%%%%%%%%%%%%%%%%%%%%%%%%%%%%%%%%%%%%%%%%%%%%%%%%%%%%%%%%%%%%%%
%%%%%%%%%%%%%%%%%%%%%%%%%%%%%%%%%%%%%%%%%%%%%%%%%%%%%%%%%%%%%%%%%%%%%%%%%%%%%%
\subsection{A compatibility matrix to robustify the model}
\label{SI_sec:compat}
In this section, we slightly modify the model to handle cases where either there are species with tight environmental niches or where the metanetwork $G^\star$ contains edges between species with incompatible environmental niches (which would be a nonsense). Indeed, we aim at estimating Eltonian effects only when species are in their Grinnelian niche. 

We introduce a binary matrix  $C=(C_{il})_{i \in V^\star, 1\le l\le L}$ that encodes the possibility for species $i$ to be present  at location $l$ given its niche properties. % Then $C_{il}=1$ if and only if each coordinate value $W_{ld}$ belongs to the niche of species $i$ along this coordinate. 
The matrix $C$ is called a \emph{compatibility matrix}. \blue{For the model's presentation,} it is supposed to be fixed and known.
\blue{In practice, it is either obtained from expert knowledge, otherwise built from the realized niche of each species (our implementation in the function \texttt{elgrin} will pre-estimate the compatibility matrix from realized niche before fitting the model)}. In the latter case, for any species $i$, at each location $l$ and for each covariate $d$, relying on the observation set $\{\xil\}_{i,l}$, we set 
\begin{align}
\label{SI_eq:interval}
    \omega_{id} &= \inf_{1\le l \le L} \{W_{ld} ; \xil=1 \} ,\\
    \Omega_{id} &= \sup_{1\le l \le L} \{W_{ld} ; \xil=1 \} \\
 \text{and }   C_{il} &= 1\{ \forall 1\le d\le D, W_{ld} \in [\omega_{id}; \Omega_{id}]\}. \notag
\end{align}
where location $l$ is characterized by an environmental covariate vector $W_l=(W_{l1},\dots,W_{lD})$. Naturally,  if $\Xil=1$ then $C_{il}=1$.

Relying on the compatibility matrix, at each location $l$ we restrict our attention to species compatible with the environment at this location. In particular, we now impose that $\Xil=0$ whenever $C_{il}=0$. 
Thus the probability distribution of the species in ELGRIN is modified as follows
\begin{align*}
\prpsil(\{\Xil\}_{i \in V^\star}) 
%&= \frac 1 {Z_{\psi_l}} \exp\Big( \sum_{i  \in V^\star}  \Big\{C_{il} \Big[(a_l +a_i + W_l^\intercal b_i + (W_l^2)^\intercal c_i)  \Xil  + \betap \sum_{j; (i,j) \in E^\star} \Xjl \Xil  \\
%&+ \betaa \sum_{j ; (i,j) \in E^\star} C_{jl} (1-\Xjl)(1-\Xil) \Big] + (1-C_{il})\log(1-\Xil) \Big\}\Big)\\
& = \left(\prod_{i\in V^\star ; C_{il}=0}(1-\Xil)\right) \times  \frac 1 {Z_{\psi_l}} \exp\Big\{ \sum_{i  \in V^\star ; C_{il}=1}   \Big[(a_l +a_i + W_l^\intercal b_i + (W_l^2)^\intercal c_i)  \Xil  \\
& + \betap \sum_{j; (i,j) \in E^\star} \Xjl \Xil  
+ \betaa \sum_{j ; (i,j) \in E^\star} C_{jl} (1-\Xjl)(1-\Xil) \Big] \Big\}.
\end{align*}
Note that if the compatibility matrix is full of 1 (i.e. all the species may occur at all locations), we are back to our initial model. Otherwise, we now avoid mistaking co-absence of two interacting species with the event of two independent absences due to incompatible niches. 
%Indeed, when a species $i$ is absent at location $l$ due to unfavourable niche conditions ($C_{il}=0)$, the Grinellian part of the model does not contribute to the overall probability (as $\Xil=0$). In this case, whenever another species $j$ that is potentially interacting with $i$ (i.e. $(i,j)\in E^\star$) is also absent at that location, a signal of co-absence could be wrongly inferred. Thus the additional factor $C_{il}C_{jl}$ imposes that we only consider species $i,j$ that are compatible with location $l$, so that their absence is likely due to their interaction rather than niche conditions. This modification of the model prevents from an over-estimation of the parameter $\betaa$. 

From a modeling point of view, the modified version of the model helps in robustifying our results. This is the case for instance when considering interacting species with tight niches. Indeed, at locations $l$ where two interacting species $i,j$ are absent due to incompatible environmental conditions (i.e. $C_{il}=C_{jl}=0$), we observe that $\Xil=\Xjl=0$. In that case in our  original model, this double absence would wrongly be interpreted as a co-absence and blur the inference of $\betaa$. 
Note also that whenever two species $i,j$ are potentially interacting (i.e. $(i,j)\in E^\star$), we consider that their respective niches should overlap  ($C_{il}=C_{jl}=1$ for at least one location $l$). %However in case of spurious edges  $(i,j)\in E^\star$ in the metanetwork, 
If this rule is not satisfied, it could happen that, %$C_{il}=1$ and $C_{jl}=0$ (implying $\Xjl=0$). In that case and 
without the additional factor  $C_{il}C_{jl}$ regulating the co-absence term, an absence of species $i$ would be interpreted as a co-absence due to its interaction with species $j$. 

Note that at locations $l$ where the environment covariates $W_l$ prevent from the occurrence of a species $i$ (i.e. $C_{il}=0)$, it is useless to try to fit the Grinellian part of the model, i.e. the non-informative intercepts $a_i,a_l$ and the parameters $b_i, c_i$. So that when appropriate, we only consider the estimated maps $W \mapsto  W^\intercal b_i + (W^2)^\intercal c_i$ on the environment values compatible with species $i$.

%%%%%%%%%%%%%%%%%%%%%%%%%%%%%%%%%%%%%%%%%%%%%%%%%%%%%%%%%%%%%%%%%%%%%%%%%%%%%%
\subsection{Hidden Markov random field and its interpretation}
We discuss here the model in its full generality, including possible weights on the metanetwork, sampling effects, plasticity of interactions and the robust version relying on a  compatibility matrix. 
We thus have $\bX:= \{\bX^l\}_{1\le l \le L}=\{\Xil\}_{i  \in V^\star, 1\le l \le L}$ (resp.  $\bY:= \{\bY^l\}_{1\le l \le L}=\{\Yil\}_{i \in
	V^\star, 1\le l \le L}$ and $\A:=\{\Al\}_{1\le l \le L}=\{\Aijl\}_{i,j \in V^l, 1\le l \le L}$) denoting the set of true occurrence variables (resp. observed occurrences and observed interactions). 
We assume that we observe $(\bY,\A)$, while $\bX$ are latent random variables. 

A Gibbs distribution specifies the joint associations between the species occurrence variables $\{\Xil\}_{i \in V^\star}$, as follows
\begin{align}
\prpsil(\{\Xil\}_{i \in V^\star}) =&\left(\prod_{i\in V^\star ; C_{il}=0}(1-\Xil)\right) \times  \frac 1 {Z_{\psi_l}} \exp\Big\{ \sum_{i  \in V^\star ; C_{il}=1}   \Big[(a_l +a_i + W_l^\intercal b_i + (W_l^2)^\intercal c_i)  \Xil  \nonumber\\
& + \betap \sum_{j; (i,j) \in E^\star} \Xjl \Xil  
+ \betaa \sum_{j ; (i,j) \in E^\star} C_{jl} (1-\Xjl)(1-\Xil) \Big] \Big\}.
\label{eq:GRF_compat}
\end{align}

First note that the normalizing constant  $Z_{\psi_l}$  is given by
\begin{align*}
Z_{\psi_l}= &  \sum_{i    \in    V^\star ; C_{il}=1} \sum_{x_i   \in    \{0,1\} }
\exp\Big( \sum_{i  \in V^\star ; C_{il}=1} [a_{l}+a_i+ W_l^\intercal b_i+(W_l^2)^\intercal c_i]  x_i \\
&+ \betap\sum_{j ; (i,j)
	\in E^\star} A_{ij}^\star x_i x_j + \betaa\sum_{j ; (i,j) \in E^\star} A_{ij}^\star C_{jl} (1-x_i) (1-x_j)  \Big).
\end{align*}
In general, this  normalising constant $Z_{\psi_l}$  cannot be computed due to the large  number of possible configurations appearing in the sum. The statistical inference procedure needs to deal with that.
%%%%%%%%%%%%%

The model interpretation strongly builds on the {\it Markov property}, a fundamental characteristic of Markov random fields. In the following we focus on the species compatible with one location ($C_{il}=1$); otherwise recall that its occurrence is set to zero with probability 1. Let us denote $\Ni$ the set of species  $j\in V^\star$ that are connected to $i$ in the graph $G^\star$ (namely $\{j \in V^\star ;  A^\star_{ij}\neq 0\}$) and $\XNvl$, the set of corresponding random  variables $\Xjl$ for $j\in \Ni$. We also recall that  $\Xmil$ denotes the set $\{\Xjl ; j\in V^\star, j\neq i \}$. Then, under the {\it Markov property} we have 
\begin{align}
\prpsil   (\Xil   |   \Xmil, C_{il}=1)   =  \prpsil   (\Xil   |   \XNvl, C_{il}=1)   \propto
\exp\Big( & [a_l + a_i+ W_l^\intercal b_i+(W_l^2)^\intercal c_i] \Xil  \nonumber\\
& + \betap \sum_{j  \in \Ni} A_{ij}^\star \Xjl\Xil \nonumber\\
 &+ \betaa \sum_{j  \in \Ni} A_{ij}^\star  C_{jl}(1-\Xjl)(1-\Xil)\Big),
\label{eq:Markov}
\end{align}
where $\propto$ means proportional (equals up to a normalising constant). 
More specifically, it means that the conditional occurrence probability of a species $i$ is modulated by the occurrences of the species interacting with $i$ in $G^\star$. In other words, a species presence only depends on abiotic environment and on the species it interacts with. Moreover, the presence/absence variables of any two species are not statistically independent of each other if $G^\star$ is connected (namely, if there exists a path between any two species in $G^\star$). Meanwhile, if $G^\star$ has more than one connected component \citep[i.e. disconnected compartments,][]{kra03}, then the presence/absence of species in different components are independent. 
The Markov property is the cornerstone idea of our model. Indeed, the conditional probabilities of each random variable is specified through~\eqref{eq:Markov} and is rooted on the idea that the occurrence of a species $i$ at location $l$ depends both on a suitability term, specific to that species and the local environment, and on the presence/absence of other species with whom it interacts (as encoded in the metanetwork). From this set of conditional probabilities, the Hammersley-Clifford theorem \citep{Besag74} ensures that there exists a proper joint distribution on the random variables $\{X_i^l\}_{i,l}$ and that it is given by Equation~\eqref{eq:GRF_compat}. \\

%%%

%%%%%%%%%%%%
Now,  the observed species occurrence variables $\Yil, i \in  V^\star, l \in \{1,\dots, L\}$ are distributed such
that each $\Yil$ only depends on $\Xil$ (the true occurrence variable) with 
\begin{align}
\pr(\Yil | \Xil) &= p_{i,l}^{\Yil}(1-p_{i,l})^{1-\Yil} \Xil + (1-\Xil)(1-\Yil) \nonumber \\
&= p_{i,l}^{\Xil \Yil}(1-p_{i,l})^{\Xil(1-\Yil)}  \1\{ (1-\Xil) \Yil \neq 1\}. 
\label{eq:obs_species}
\end{align}
In what follows, we choose to impose that the sampling parameters $p_{i,l}$ are set by the user. A consequence of this is that the quantity~\eqref{eq:obs_species} will play no role in the inference procedure. Indeed, it is a constant quantity with respect to the parameter. 
Finally we set 
\begin{equation}
\label{eq:obs_interac}
\Aijl | \Yil\Yjl=1 \sim \mathcal{B}(\epsilon_{ij}),   
\end{equation}
and $\Aijl \equiv 0$ whenever $(i,j)\notin E^\star$ or $\Yil=0$ or $\Yjl=0$. 
%%%%%%%%%%%%%%%

%%%%%%%%%%%%%%
Building on Equations~\eqref{eq:obs_species} and~\eqref{eq:obs_interac}, we first obtain 
the  conditional  distribution  of all  observations
$(\bY , \A)$ given the latent variables $\bX$ 
\begin{align*} 
& \prphi(\bY, \A | \bX) = \prod_{l=1} ^L   \prphi(\Al | \bY^l)  \pr(\bY^l | \bX^l) \\
&= \prod_{l=1} ^L \prod_{i    \in    V^\star}
\Big[  p_{i,l}^{\Xil \Yil}(1-p_{i,l})^{\Xil(1-\Yil)}  1\{ (1-\Xil) \Yil \neq 1\} \Big]
\times \prod_{(i,j)\in E^\star}
\epsilon_{ij}^{\Yil \Yjl \Aijl} (1-\epsilon_{ij})^{\Yil \Yjl (1-\Aijl) } . 
\end{align*}
Here, the  parameter $\epsilon=\{\epsilon_{ij}\}_{i,j \in	V^\star}$
%$\phi = (\{p_{i,l}\}_{i \in V^\star , 1\le l \le L }, \{\epsilon_{ij}\}_{i,j \in	V^\star})$ 
	drives the distribution of the observation  process  from
the latent  one. 

Finally,  our model is obtained by combining this with Equation~\eqref{eq:GRF_compat} for the distribution of the latent
variables $\bX$. Thus   the  global model   is  parameterised   by  $\theta=\{\theta_l\}_{1\le l \le L}$ where each $\theta_l
=(\psi_l,\epsilon)$.  This amounts to the following sets of parameters
$$(\{a_i,b_i,c_i\}_{i\in V^\star},\{a_l,\betaa, \betap\}_{1\le l \le L},   
\{\epsilon_{ij}\}_{i,j \in V^\star})$$ 
so there are $3N + 3L+N(N-1)$ parameters when the observed graphs $A^l$ are directed (and $3N + 3L+N(N-1)/2$ when the observed graphs $A^l$ are undirected) compared with $N(N-1)L$ observations. However note that in the model inference (see next section),  the parameters $\epsilon_{ij}$ are pre-estimated  (see Equation~\eqref{eq:epsilon_chapo}) and do not appear in the main inference algorithm (see Algorithm~\ref{algo:SEM}). 
In what follows, we often use the notation
\[
\alpha_{i,l}= a_i + a_l +  W_l^\intercal b_i + (W_l^2)^\intercal c_i .
\]

A chain graph \citep{Lauritzen_book} describing the dependencies among
the random variables in this model is given in
Fig.~\ref{fig:chain}.  

\begin{figure}[h]
	\centering
	\begin{tikzpicture}
	% X nodes - layer of actual presences
	\node (X1) at (1,3) {$X_1$};
	\node (X2) at (2,3) {$X_2$};
	\node (X3) at (3,3) {$X_3$};
	\node (X4) at (4,3) {$X_4$};
	\node (X5) at (5,3) {$X_{5}$};

	% Y nodes - layer of observed presences
	\node (Y1) at (1,1.5) {$Y_1$};
	\node (Y2) at (2,1.5) {$Y_2$};
	\node (Y3) at (3,1.5) {$Y_3$};
	\node (Y4) at (4,1.5) {$Y_4$};
	\node (Y5) at (5,1.5) {$Y_{5}$};

	% A nodes - layer of interactions
	\node (A12) at (-1,0) {$A_{12}$};
	\node (A13) at (-0.2,0) {$A_{13}$};
	\node (A14) at (0.6,0) {$A_{14}$};
	\node (A15) at (1.4,0) {$A_{15}$};
	\node (A23) at (2.2,0) {$A_{23}$};
	\node (A24) at (3,0) {$A_{24}$};
	\node (A25) at (3.8,0) {$A_{25}$};
	\node (A34) at (4.6,0) {$A_{34}$};
	\node (A35) at (5.4,0) {$A_{35}$};
	\node (A45) at (6.2,0) {$A_{45}$};

	% dependencies
	\path[-]
	(X1) edge (X2)
	(X2) edge (X3) ;
	\draw[-] (X1) to [bend left] (X3);
	\draw[-] (X1) to [bend left] (X4);
	\draw[-] (X3) to [bend left] (X5);
	
	\path[->] 
	(X1) edge (Y1)
	(X2) edge (Y2)
	(X3) edge (Y3)
	(X4) edge (Y4)
	(X5) edge (Y5) ; 
	\path[->] 
	(Y1) edge (A12)
	(Y1) edge (A13)
	(Y1) edge (A14)
	(Y1) edge (A15) 
	(Y2) edge (A12)
	(Y2) edge (A23)
	(Y2) edge (A24)
	(Y2) edge (A25)
	(Y3) edge (A13)
	(Y3) edge (A23)
	(Y3) edge (A34)
	(Y3) edge (A35)
	(Y4) edge (A14)
	(Y4) edge (A24)
	(Y4) edge (A34)
	(Y4) edge (A45)
	(Y5) edge (A15)
	(Y5) edge (A25)
	(Y5) edge (A35)
	(Y5) edge (A45);
	
	\end{tikzpicture}
	\caption{Example of a metanetwork $G^\star$ (relations among the random variables $\{X_i\}_{i \in V^\star} $  with $V^\star
		= \{1,\dots, 5\}$, on the top
		row) and induced dependency chain graph of all the variables  in the model for one observed
		undirected graph $A=(A_{ij})_{i < j }$ with no self-loops.  }
	\label{fig:chain}
\end{figure}
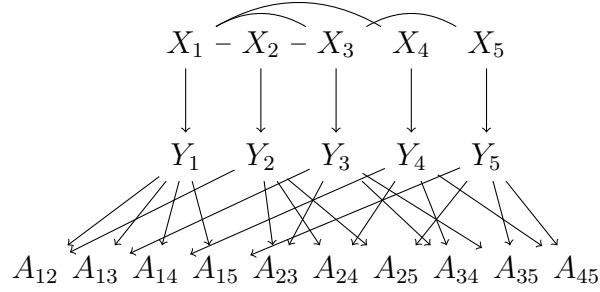

%%%%%%%%%%%%%%%%%%%%%%%%%%%%%%%%%%%%%%%%%%%%%%%%%%%%%%%%%%%%%%%%%%%%%%%%%%%%%%
\section{ Model inference}
\label{SI_sec:estim}
We present the inference procedure in the most general case, namely with weighted metanetwork, sampling effects and plasticity of interactions. 
This means that our inference procedure takes place in the context of a hidden Markov random field model. 

\subsection{Likelihood}
The  log-likelihood for  observing independent  interaction  graphs $G^1,\dots  , G^L$ at the different locations (and thus species occurrences variables ; indeed it is equivalent to observe $G^1,\dots  , G^L$ or  $(\bY^1,A^1,\dots , \bY^L , A^L)$) in this model is given by 
\begin{equation*}
\ell_{n,L}(\theta) = \sum_{l=1}^L \log \prtl( G^l) ,
\end{equation*}
where 
\[
\prtl( G^l)  = \sum_{\{\xil\}_{i  \in V^\star}  \in \{0,1\}  ^N} \prtl(G^l,
\{\Xil= \xil ; i \in V^\star\}).
\]
As usual in latent variables models, this sum over all possible configurations $\{\xil\}_{i  \in V^\star}  \in \{0,1\}  ^N$ cannot be computed (unless $N$ is  really small).  The inference procedure in latent variable models generally relies on the Expectation-Maximisation (EM) algorithm \citep{DLR77}. In the context of hidden Markov random fields, many difficulties arise that prevent from using this simple strategy.

The complete log-likelihood $\ell_{n,L}^{c}(\theta) $
contribution  of all observations  and all  latent  configurations is given by 
\begin{align*}
\ell_{n,L}^{c}(\theta)  :=& \log \prt(\bX, G^l,\dots , G^L) = \sum_{l=1}^L \log \prtl (\bX^l, \bY^l, \Al) \\
=&  \sum_{l=1}^L \log \prpsil(\bX^l)  +\sum_{l=1}^L\sum_{i \in  V^\star}\log \pr(\Yil|\Xil  ) 
+\sum_{l=1}^L \sum_{i,j\in V^l} \log \prphi(\Aijl| \Yil,\Yjl). 
\end{align*}
This can be written as 
\begin{align*}
\ell_{n,L}^{c}(\theta) =& \sum_{l=1}^L\sum_{i
	\in V^\star}  C_{il}\log(1-\alpha_{i,l})+\sum_{l=1}^L\sum_{i
	\in V^\star} C_{il}\Xil \log \left( \frac{\alpha_{i,l}}{1-\alpha_{i,l}}\right) +  \sum_{l=1}^L  \sum_{(i,j)
	\in E^\star} C_{il} C_{jl}A_{ij}^\star \Xjl \Xil \\
&+  \sum_{l=1}^L  \betaa\sum_{(i,j) \in E^\star} A_{ij}^\star C_{il} C_{jl}(1-\Xjl)( 1-\Xil) 
- \sum_{l=1}^L  \log (Z_{\psi_l})\\
&+   \sum_{i  \in   V^\star} \sum_{l=1}^L  \Xil  \Big\{\Yil\log   (   p_{i,l})   +
(1-\Yil)\log (1-p_{i,l}) \Big\}\\
&+\sum_{i,j\in V^\star } \sum_{l=1}^L \Yil \Yjl \Big\{\Aijl \log \epsilon_{ij}
+(1-\Aijl) \log (1-\epsilon_{ij}) \Big\} +\text{cst}.
\end{align*}
%where $N_v= \sum_{i=1}^n 1\{\Xil=1\}$ is the (latent) number of locations where species $i$ is present. 
%\blue{cette formule  sert à rien  mais je  l'écris quand même  pour pas	perdre le fil.}
Here, we  restrict our attention  to complete datasets  $(\bX^l, G^l)$
which are compatible,  in the sense that whenever  $\Xil=0$ we also
have $\Yil=0$.  Otherwise the probability  above is  0 and its  log is
$-\infty$.

%%%%%%%%%%%%%%%%
%%%%%%%%%%%%%%%
\subsection{Estimating the frequency of interactions}
First, it is important to note that a consequence of the dependence among the
$\{\Xil\}_{i \in  V^\star}$ is  that the random  variables $\Aijl$ and $A_{i'j'}^l$ are dependent.
However, this dependency is entirely carried by the
species  observations  $\Yil$'s   (which  themselves  are  dependent
through the species  latent presences $\Xil$'s). In  other words, we
have  $\prphi(\Al  |  \bY^l,  \bX^l  ) =\prphi(\Al  |  \bY^l)  $.  A
consequence is that the parameters  $\epsilon$ that describe the graph
distribution  are directly  estimated  from the  data. While the
sampling parameters and the random field ones ($\betaa,\betap$ and
$\alpha_{i,l}$'s) require  a sophisticate inference procedure, the $\epsilon_{ij}$ parameters are directly estimated by the frequencies  
\begin{equation}
\hat \epsilon_{ij} = \frac{ \sum_{l=1}^L A_{ij}^l } {\sum_{l=1}^L \Yil \Yjl}.
    \label{eq:epsilon_chapo}
\end{equation}
Here, the normalising term $\sum_{l=1}^L \Yil \Yjl$ is simply the number of simultaneous observations of species $i$ and $j$ across the $L$ different locations, while the numerator counts the number of observed interactions between those species across locations.

%%%%%%%%%%%%%%%%%%%
%%%%%%%%%%%%%%%%%%%
\subsection{Inference of the random field parameters with simulated field algorithm}
Now, we focus on the estimation of random field parameters $\betaa,\betap$ and
$\alpha_{i,l}$'s. 
A  classical  \texttt{EM}  algorithm would  consist  in  (iteratively)
optimising with respect to $\psi=\{\psi_l \}_{1\le l \le L}$ the quantity 
\begin{equation}
\label{eq:EM}
Q(\psi)=  \sum_{l=1}^L \esp\big( \log  \prpsil(\bX^l, \bY^l) |
\psilt , \bY^l \big) = \sum_{l=1}^L \esp\big[ \log  \prpsil(\bX^l)  \big| \psilt , \bY^l \big] +\text{cst},  
\end{equation}
computed with the current value of the parameter $\psit=\{\psilt\}_{1\le l \le L}$. (Recall that in our setup, the observations $\bY$ are obtained from $\bX$ through a random function with fixed and known parameters). 
The above quantity has many drawbacks: first it contains the partition functions
$Z_{\psi_l}$ that are unknown and cannot be computed. Second, the conditional distribution of
$\bX^l$ given $\bY^l$ has an intricate dependency structure
and thus may not be computed (in fact it is also a Markov random field). 

%We choose to rely on  mean-field like approximations to \texttt{EM} algorithm, as explained in the next section.
We thus follow the \emph{simulated field algorithm}  proposed in
\cite{Celeux_etal}.  It is  based on  two different  approximations of
probability distributions plus a
simulation step, as follows.  First, the distribution $\prpsi(\bX)$ appearing in
the  complete likelihood  is replaced  by a  mean-field approximation,
namely the product distribution 
\begin{equation}
\label{eq:approx1}
\pr^{1} (\bX |\psi, \btx) = 
\prod_{l=1}^L \prod_{i \in V^\star} \prpsil(\Xil | \XNvl = \xtNvl),  
\end{equation}
for  some   well  chosen  fixed  configuration   $\btx=(\tilde x^l_i )_{1\le l \le L, i \in V^\star}$.  Second,  the
conditional distribution  $\prpsi(\bX |  \bY)$ used for  integrating the
complete      log-likelihood     in~\eqref{eq:EM} is also replaced
by a mean-field approximation, that is 
\begin{equation}
\label{eq:approx2}
\pr^{2} (\bX |\psi, \btx,\bY) = 
\prod_{l=1}^L \prod_{i \in V^\star} \prpsil(\Xil | \XNvl = \xtNvl, \Yil).  
\end{equation}
%Note that the difference between these two distributions lies in that
%the latter is a posterior with respect to the data. 
Note that both distributions \eqref{eq:approx1} and \eqref{eq:approx2} are 
probability distributions, contrarily to what happens when relying on pseudo-likelihoods. 
Third, the choice of the fixed configuration $\btx$ relies on a
sequential Gibbs sampling from the approximate distribution~\eqref{eq:approx2}. With  these three tools at  hand, the algorithm consists in iteratively optimising with respect to $\psi=\{\psi_l\}_{1\le l \le L }$ the quantity 
\begin{equation*}
%  \label{eq:SF}
\esp^{2} \big[ \log  \pr^{1} (\bX |\psi, \btx) \big|
\psit, \btx,\bY \big],
\end{equation*}
computed with  the current value  of the parameter $\psit$  and current simulated  field  $\btx$. Here, $\esp^{2}$ denotes expectation under the probability distribution $\pr^{2}$.
This  quantity  should  be compared  to  the
original criterion~\eqref{eq:EM}.
\\

Let us now fully describe the procedure. 
For any current parameter value $\psit$ and fixed state value $\btx$, we let 
\[
\tilde Q(\psi| \psit,\btx) = \sum_{l=1}^L \sum_{i \in V^\star} \sum_{x\in \{0,1\}}\pr_{\psit} (\Xil =x|
\XNvl =\xtNvl, \Yil)\log \prpsi (\Xil =x|
\XNvl =\xtNvl) .
%\label{eq:Q_tilde}
\]

The algorithm  consists in iterating  the following two steps  at time
$t$, 
\begin{itemize}
	\item \texttt{SE}-step: sequentially sample a configuration $\btx^{(t)}$ as follows
	for $1\le l \le L$ and $1\le i \le n$, sample $(\Xil)^{(t)}$ according to the conditional distribution 
	\[
	x \mapsto \pr_{\psi^{(t-1)}} \big(\Xil =x \big| \{\Xjl = (\xtjl)^{(t)} , j \in \Ni, j<i\}, \{\Xjl =(\xtjl)^{(t-1)} , j\in \Ni, j>i\}, \Yil\big) .
	\] 
	Thus, if $C_{il}=0$ we set $\Xil=0$ and whenever $C_{il}=1$,  we  sample the value $0$ with probability 
	\begin{equation}
	c \exp\Big( \betaa^{(t-1)}\sum_{j \in \Ni }
	A_{ij}^\star  C_{jl}\big[1\{(\xtjl)^{(t)}=0  , j<i\}  +1\{(\xtjl)^{(t-1)}=0,
	j>i\} \big] \Big) 1\{\Yil=0\} \label{eq:sample_0}
	\end{equation}
	and we sample the value $1$ with probability 
	\begin{align}
	c \exp\Big( \alpha_{i,l}^{(t-1)} + \betap^{(t-1)}\sum_{j \in \Ni }
	A_{ij}^\star  \big[1\{(\xtjl)^{(t)}=1  , j<i\}  +1\{(\xtjl)^{(t-1)}=1,
	j>i\} \big] \nonumber \\
	+ \Yil \log(p_{i,l}^{(t-1)}) + (1-\Yil)\log(1-p_{i,l}^{(t-1)}) \Big), \label{eq:sample_1}
	\end{align}
	where  $c$   is  a   normalising  constant  (set   such  that   the  2	probabilities sum to 1).

	\item \texttt{M}-step: 
 Optimize $ \tilde Q(\psi| \psit,\btx^{(t)})  $ with respect to
	$\psi= \{\alpha_{i,l}, \betaa,\betap\}_{i,l}$.  
\end{itemize}

We  now express  the quantity  $\tilde Q$  in our  model and  derive
update formulas in our model. 
First we set 
\begin{align*}
\tilde{p}_{i,l,t} (0)& = c \exp\Big( \betaa^{(t)}\sum_{j \in \Ni }
A_{ij}^\star C_{jl}1\{(\xtjl)^{(t)}=0 \} \Big) 1\{\Yil=0\} \\
\tilde{p}_{i,l,t} (1)& = c\exp\Big( \alpha_{i,l}^{(t)}+\betap^{(t)}\sum_{j \in \Ni }
A_{ij}^\star 1\{(\xtjl)^{(t)}=1 \} + \Yil \log(p_{i,l}^{(t)}) + (1-\Yil)\log(1-p_{i,l}^{(t)}) \Big) ,
\end{align*}
with  the  normalising   constant  $c$  such  that   $   \tilde{p}_{i,l,t} (0)  +
\tilde{p}_{i,l,t} (1)=1$.  Then the vector $( \tilde{p}_{i,l,t} (0) , \tilde{p}_{i,l,t} (1))$ is nothing
else than the probability distribution $\pr_{\psilt} (\Xil =\cdot| \XNvl
=\xtNvl, \Yil) $. 
From this quantity, we obtain 
\begin{align}
 \tilde Q(\psi|      \psit,\btx)  &   \nonumber \\
=       \sum_{l=1}^L 
 \sum_{i  \in      V^\star } C_{il}
\Big\{ &\tilde{p}_{i,l,t} (0)
\Big[\betaa\sum_{j \in     \Ni} A_{ij}^\star C_{jl} (1-\xtjl) \Big]
+ \tilde{p}_{i,l,t} (1)  \Big[\alpha_{i,l} +\betap\sum_{j \in     \Ni} A_{ij}^\star\xtjl \Big] \nonumber \\
-  &\log \Big[ \exp\Big(\betaa\sum_{j \in	\Ni}    A_{ij}^\star  C_{jl}  (1-\xtjl)  \Big)+ \exp \Big( \alpha_{i,l} + \betap \sum_{j \in	\Ni}    A_{ij}^\star    \xtjl   \Big) \Big]\Big\} .\label{eq:Q_tilde_2}
\end{align}
Optimising this quantity with respect to  
%$\alpha_{i,l}, \betaa,\betap$'s  
$\psi$ is done  numerically. To this aim, we provide below the derivatives of $\tilde Q$ wrt $\psi$.

Let us introduce the following quantities 
\begin{align*}
w^\star_i = & \sum_{j \in \Ni} A_{ij}^\star C_{jl},\\
w^\star_{i,l} = & \sum_{j \in \Ni} A_{ij}^\star  \xtjl
\end{align*}
which are the sum of weights of the neighbours of $i$ in $G^\star$ compatible with the location  $l$ and the sum of weights of the neighbours of $i$ in
$G^\star$ that are present at location $l$, respectively. Remembering that $C_{jl}\xtjl=\xtjl$, we have that
\[
\sum_{j \in \Ni} A_{ij}^\star C_{jl}( 1-\xtjl) =   w^\star_i  -  w^\star_{i,l} 
\]
is the sum of weights of the neighbours of $i$ in
$G^\star$ that are absent at location $l$ while compatible with that location.
We also use 
\begin{align*}
\text{den}_{i,l}(\betaa,\betap, \alpha_{i,l}) &= \exp[\betaa ( w^\star_i  -  w^\star_{i,l})]+\exp( \alpha_{i,l} + \betap   w^\star_{i,l} ).
\end{align*}
With these quantities at hand and relying on~\eqref{eq:Q_tilde_2}, we obtain 
\begin{align*}
\tilde Q(\psi| \psit,\btx)     &=        \sum_{l=1}^L \sum_{i \in      V^\star} C_{il}\Big\{ \tilde{p}_{i,l,t} (0) \betaa ( w^\star_i  -  w^\star_{i,l})
+ \tilde{p}_{i,l,t} (1) [ \alpha_{i,l} + \betap w^\star_{i,l}] \\
&-\log \text{den}_{i,l}(\betaa,\betap,  \alpha_{i,l})\Big\}.
\end{align*}

Let us recall that $\alpha_{i,l}$ is a shorthand for the quantity $a_i+a_l+ W_l^\intercal b_i + (W_l^2)^\intercal c_i$, so that we finally get, for each $1 \le l \le L$ and each $1\leq i \le n$, the derivatives  
\begin{align}
\frac{\partial \tilde Q} {\partial a_{i}} &= \sum_{l=1}^L C_{il}\Big[ \tilde{p}_{i,l,t} (1) - \frac{  \exp(\alpha_{i,l} + \betap w^\star_{i,l})} {\text{den}_{i,l}(\betaa,\betap,  \alpha_{i,l})}\Big] \label{eq:der_ai}\\
\frac{\partial \tilde Q} {\partial a_{l}} &= \sum_{i\in V^\star}C_{il} \Big[\tilde{p}_{i,l,t} (1) - \frac{  \exp(\alpha_{i,l} + \betap w^\star_{i,l})} {\text{den}_{i,l}(\betaa,\betap,  \alpha_{i,l})}\Big] \\
\frac{\partial \tilde Q} {\partial b_{i}} &= \sum_{l=1}^L C_{il} W_l ^\intercal \Big[\tilde{p}_{i,l,t} (1) - \frac{  \exp(\alpha_{i,l} + \betap w^\star_{i,l})} {\text{den}_{i,l}(\betaa,\betap,  \alpha_{i,l})} \Big]\\
%%%%
\frac{\partial \tilde Q} {\partial c_{i}} &= \sum_{l=1}^L C_{il} (W_l^2)^\intercal \Big[\tilde{p}_{i,l,t} (1) - \frac{  \exp(\alpha_{i,l} + \betap w^\star_{i,l})} {\text{den}_{i,l}(\betaa,\betap,  \alpha_{i,l})} \Big]\\
%%%%%
    \frac{\partial \tilde Q} {\partial \betaa} &=  \sum_{i \in      V^\star}C_{il}\left[\tilde{p}_{i,l,t} (0)( w^\star_i  -  w^\star_{i,l}) - \frac{ ( w^\star_i  -  w^\star_{i,l})\exp[\betaa ( w^\star_i  -  w^\star_{i,l})]} {\text{den}_{i,l}(\betaa,\betap,  \alpha_{i,l})}\right]
    \\
      \frac{\partial \tilde Q} {\partial \betap} &= 
       \sum_{i \in      V^\star}C_{il}\left[\tilde{p}_{i,l,t} (1) w^\star_{i,l} - \frac{   w^\star_{i,l} \exp[\alpha_{i,l} + \betap w^\star_{i,l}]} {\text{den}_{i,l}(\betaa,\betap,  \alpha_{i,l})} \right].
       \label{eq:der_betap}
\end{align}

The simulated field algorithm is described in Algorithm~\ref{algo:SEM}. 

\begin{algorithm}[ht]
  \caption{Simulated field algorithm}
  \label{algo:SEM}
\begin{algorithmic}
 \State {\bfseries Input:} Observed presence/absence data $\bY$, adjacency matrix of metanetwork $A^\star$.
 \State {\bfseries Initialization:} Choose   initial  values $\btx^{(0)}, \psi^{(0)}$. 
\State Set $t=1$.
  \While{not converged}
  \State {\bfseries Simulation step:}
  \For{$1\le l\le L $}
    \For{$1\le i \le n$}
  \State  Sample $(\xtil)^{(t)}$ from $\{0,1\}$ %for $1\le l\le L $ and $1\le i \le n$  
  relying on the vector of probabilities~\eqref{eq:sample_0} and~\eqref{eq:sample_1}.  
  \EndFor
  \EndFor
  \State Compute $\tilde Q(\psi | \psit ; \btx)$ from~\eqref{eq:Q_tilde_2}.
  \State {\bfseries Maximization step:}
  \State Compute the value $\hat \psi$ zeroing the derivatives \eqref{eq:der_ai}--\eqref{eq:der_betap}. 
 \State Update  parameter  $\psi^{(t)}=\hat \psi$.  
 \State Increment $t$.
  \EndWhile
\end{algorithmic}
\end{algorithm}

\begin{rem} 
In the case with no sampling effects (namely $p_{i,l}=1$), the simulation step is skipped (since $\bX=\bY$), the
quantities $\tilde{p}_{i,l,t}$ become  $\tilde{p}_{i,l,t}(x)=1\{\Xil =x\}$  and the criteria to optimize reduces to the quantity
\[
  \tilde Q_{direct}(\psi ) = \sum_{l=1}^L \sum_{i \in V^\star} \sum_{x\in \{0,1\}}
  % \pr_{\psit} (\Xil =x|\XNvl )
  \log \prpsi (\Xil =x|
\XNvl) .
\]
This means that in this specific case, our method consists exactly in a pseudo-likelihood estimation, which is known to be consistent as the number of observations increases~\citep{Besag75}. 
%\blue{Bon c'est une espérance de pseudo-likelihood plus précisément, mais pas la peine de rentrer dans ce détail.}
Therefore, the estimation algorithm is more computationally affordable in this case since it consists in a simple iteration of the \texttt{M}-step (i.e. the 'maximization step' in Algorithm~\ref{algo:SEM}).
\end{rem}

%%%%%%%%
\subsection{Additional details on the implementation}
The 'maximization step' in Algorithm~\ref{algo:SEM} is performed using the vector Broyden-Fletcher-Goldfarb-Shanno (BFGS) algorithm implemented in the GNU Scientific Library (
%\texttt{gsl\_multimin\_fdfminimizer\_vector\_bfgs2 function}, 
\url{https://www.gnu.org/software/gsl/}).
We observed that this algorithm was sensitive to the initial value of the parameters. After analyzing synthetic datasets simulated from the model and estimating the model with various initial values, we validated the following combination of initial parameters:
\begin{align*}
    a_i=a_l=\frac{a_0}{2}\\
    b_i=c_i=0\\
    \betaa=\betap=0
\end{align*}
with $a_0 = \log(\frac{\bar Y}{1- \bar Y})$ and $\bar Y = \sum_{il} Y_{il} /(nL)$.

%%%%%%%%%%%%%%%%%%%%%%%%%%%%%%%%%%%%%%%%%%%%%%%%%%%%%%%%%%%%%%%%%%%%%%%%%%%%%%
%%%%%%%%%%%%%%%%%%%%%%%%%%%%%%%%%%%%%%%%%%%%%%%%%%%%%%%%%%%%%%%%%%%%%%%%%%%%%%
%%%%%%%%%%%%%%%%%%%%%%%%%%%%%%%%%%%%%%%%%%%%%%%%%%%%%%%%%%%%%%%%%%%%%%%%%%%%%%
%\newpage
\section{ \blue{Simulation under ELGRIN model and inference}}
\label{SI_sec:samp_resamp}
\blue{In order to test the statistical performance of ELGRIN model, we simulated under ELGRIN model and tried to recover the sample parameters. Once the parameters inferred, we simulated new data using ELGRIN again while relying on these inferred values. We then inferred the parameters of this new dataset to test the stability of parameters inference under resampling. HTLM vignette is available at  \url{https://plmlab.math.cnrs.fr/econetproject/econetwork/-/blob/master/vignettes/simul_under_elgrin.html}. \\
We chose the same metaweb and environmental gradient as in the colonisation-extinction simulation (see section \ref{SI_sec:CE}), with $50$ species and used $400$ sites. We draw $a_i$ and $a_l$ uniformly in $[-0.25,0.25]$, $b_i$ and $b_l$ uniformly in $[-0.5,0.5]$. We chose a gradient of $\beta_{l,co-abs}$ and $\beta_{l,co-pres}$ ranging from $0$ to $1$. \\
We represent the comparison of the original model and the inferred model in Figs. \ref{fig:comp_mod1} and \ref{fig:comp_mod2}. The parameters sampled with ELGRIN are reasonably recovered by the inference algorithm and stable under another sampling and inference step.
}

\begin{figure}
    \centering
    \includegraphics[width = \textwidth]{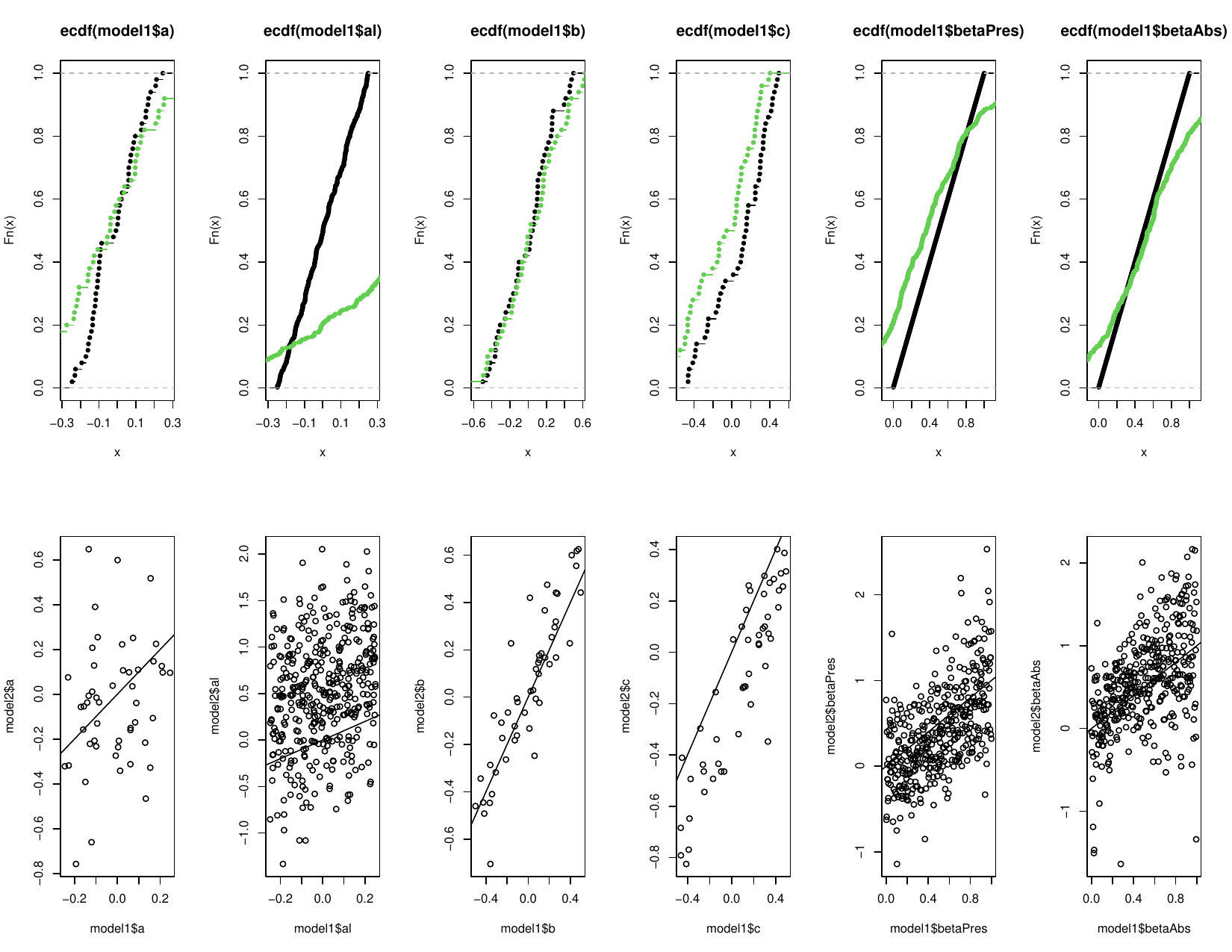}
    \caption{\blue{Comparison between the sample parameters and the inferred parameters using ELGRIN inference algorithm. Top: Cumulative distribution of the sample parameters (model $1$, black) and the inferred ones (model $2$, green). Bottom: Scatter plot of pairs of parameters (sample and inferred) and the diagonal axis.}}
    \label{fig:comp_mod1}
\end{figure}

\begin{figure}
    \centering
    \includegraphics[width = \textwidth]{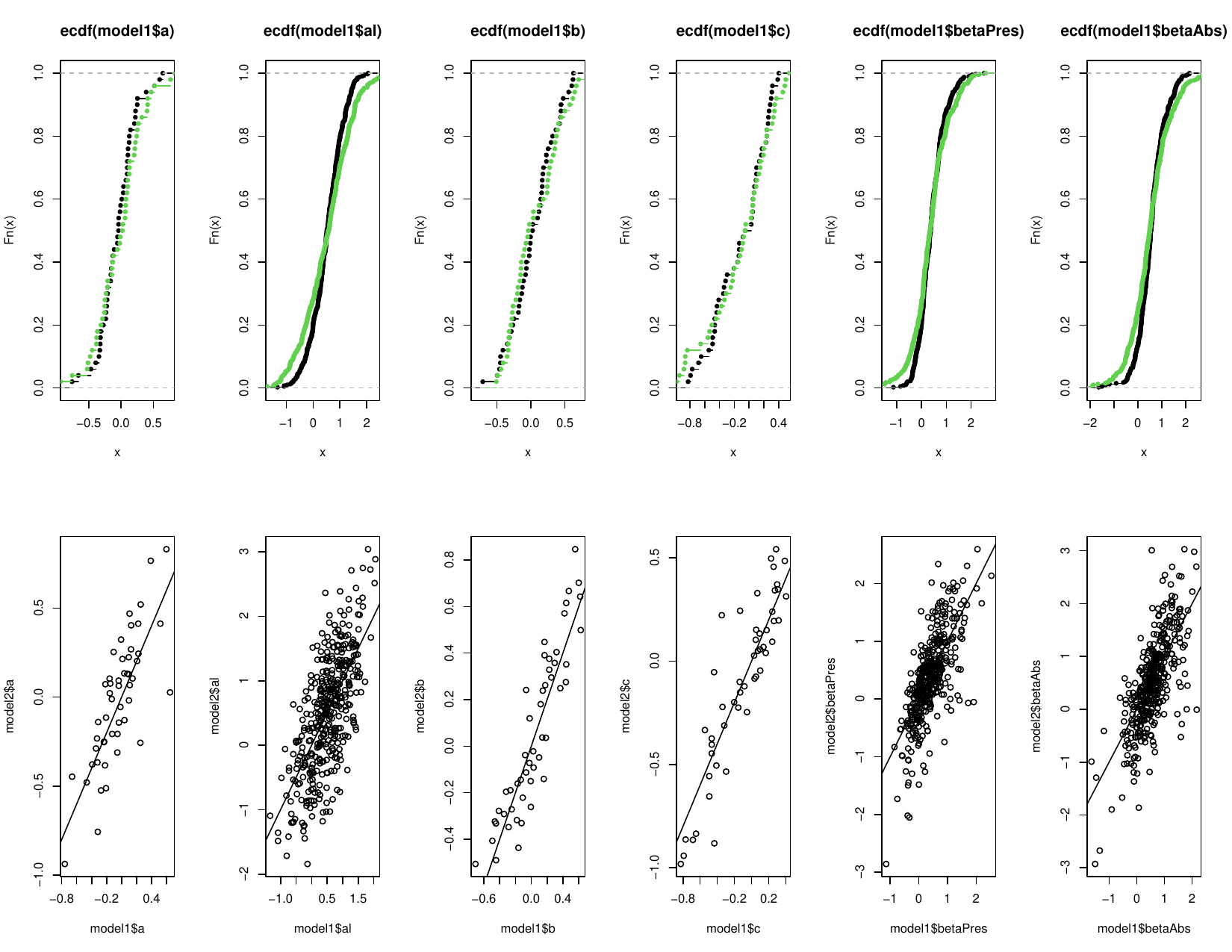}
    \caption{\blue{Comparison between the re-sample parameters (used to sample under ELGRIN model) and the inferred parameters using ELGRIN inference algorithm. Top: Cumulative distribution of the re-sample parameters (model $1$, black) and the inferred ones (model $2$, green). Bottom: Scatter plot of pairs of parameters (re-sample and inferred) and the diagonal axis.}}
    \label{fig:comp_mod2}
\end{figure}

\section{Simulations with three different theoretical models}
We provide here models and details on the three simulations of the paper. HTLM vignettes for the three simulations are available at \url{https://plmlab.math.cnrs.fr/econetproject/econetwork}. For each model, we simulated three different scenarios: positive (i.e. mutualism), negative (i.e. competition) or no interactions. The scenario with no interactions uses an empty metanetwork  to generate the data. However, inference with ELGRIN in this case relied on a metanetwork with interactions (to be specified below). 

%XXXX AJOUT DE COMMENT ON ESTIME LES NICHESXXXX
In the following, when ELGRIN is fitted on a dataset, the inference procedure outputs estimated parameters values. For any species $i$, its niche optima was estimated from these values, relying on the optimum of the estimated function $w \mapsto \hat a_i + \hat b_i w +\hat c_i w^2$ within the interval $[\omega_i,\Omega_i]$
 defined in \eqref{SI_eq:interval} (here dimension $d=1$).

%%%% LV %%%%
\subsection{Lotka-Volterra model: details and simulation set-up}
\label{SI_sec:LV}
We sampled species communities from the equilibrium of a deterministic Lotka-Volterra model \citep{Takeuchi}. 
We defined the environmental niche of each species as a Gaussian distribution centered on a given optimum. The environmental niches optima were evenly taken on a grid whereas the standard deviations were all equal to a given value  $\sigma$ for simplicity.

\subsubsection*{Building the network from niche values}
We constructed the metanetwork $G^\star$ used for generating the data in scenarios with interactions (positive and negative) and later used for inference with ELGRIN in the three scenarios (i.e. including when there are no interactions). Let $\mu_i$ and $\mu_j$ be the niche optima of two distinct species. We sampled symmetric interaction between species $i$ and $j$ according to a Bernoulli law of parameter $\lambda m| \mu_i - \mu_j |^{-1}$, where $m=\max_{i,j} \left (|\mu_i - \mu_j |^{-1} \right )$ and $\lambda$ is a parameter modulating the overall edge number. We obtained a metanetwork $G^\star$  symmetric with no self-loops.

\subsubsection*{Modelling the dynamics}
We assume, for species $i$, a per-capita growth rate $r_i(w)$ depending on the environment value $w$ and following a Gaussian function of mean $\mu_i$. We then model $N_{iw}(t)$, the abundance of species $i$ at environment value $w$ and time $t$, using a generalised Lotka-Volterra dynamical model with intraspecific competition. In the negative interactions scenario, we used  
\begin{equation}
\frac{1}{N_{iw}}\frac{dN_{iw}}{dt} = r_i(w) - \sum_{j} C_{ij} N_{jw},
\label{lvEq}
\end{equation}
where $C_{ij} = A^\star_{ij} + c\1(i=j)$
with $A^\star$ the adjacency matrix of $G^{\star}$ and $c$ the intraspecific competition coefficient.
For the positive interaction scenario, we used
\begin{equation}
    \label{eqref:LV_pos}
\frac{1}{N_{iw}}\frac{dN_{iw}}{dt} = r_i(w) + \sum_{j} M_{ij} N_{jw},
\end{equation}
where $M_{ij} = A^\star_{ij}/M_0 - c\1(i=j)$
where $M_0$ is a constant ($M_0>1$) that reduces the strength of positive interactions in order to get convergence towards a finite abundance value.
In the no interactions scenario, we use  $C_{ij}=M_{ij}=0$ for all $i,j$ in the above equations. Note though that we used the simulated metanetwork $G^\star$ for inference with ELGRIN in the three scenarios. 
From the equilibrium point $\mathbf{N}^{\star}_w=(N^\star_{iw})_i$ (with $N^\star_{iw}=\lim_{t\to +\infty} N_{iw}(t)$, limit that is assumed here to be unique and independent of initial conditions), we sample presence or absence $\Xil$ of each species $i$ at location $l$ using a Bernoulli law of parameter $\min(1,N^\star_{i w_l}/5)$.

\subsubsection*{Parameter values}
We performed simulations with $N=50$ species and $L=400$ locations. The environmental niches optima were evenly taken on a grid between $-2$ and $2$ whereas the environmental gradient ranged from $-3$ to $3$. We set the standard deviations of niche distributions to $\sigma=1$ and we set the intraspecific competition term to $c = 1/10$ for all species.  The constant $M_0$ is set to 50. We ran the simulation of the Lotka-Volterra  dynamics for $10,000$ time steps. Fig.~\ref{fig:niches_net_LV} shows growth rates in function of the environment and metanetwork. We also represented the distribution of species presence-absences and species richness %along the gradient 
under the three interaction scenarios in Fig.~\ref{fig:spp_distrib_LV} and Fig.~\ref{fig:simu_LV_richness}. Niche optima inferred from ELGRIN on this dataset are shown in Fig.~\ref{fig:simu_LV_fit_niches}. Association parameters $\betaa$ and $\betap$ are represented in the main text, Fig.~\ref{fig:simu_LV}.

\begin{figure}[ht!]
	\begin{center}
		\includegraphics[height=10cm]{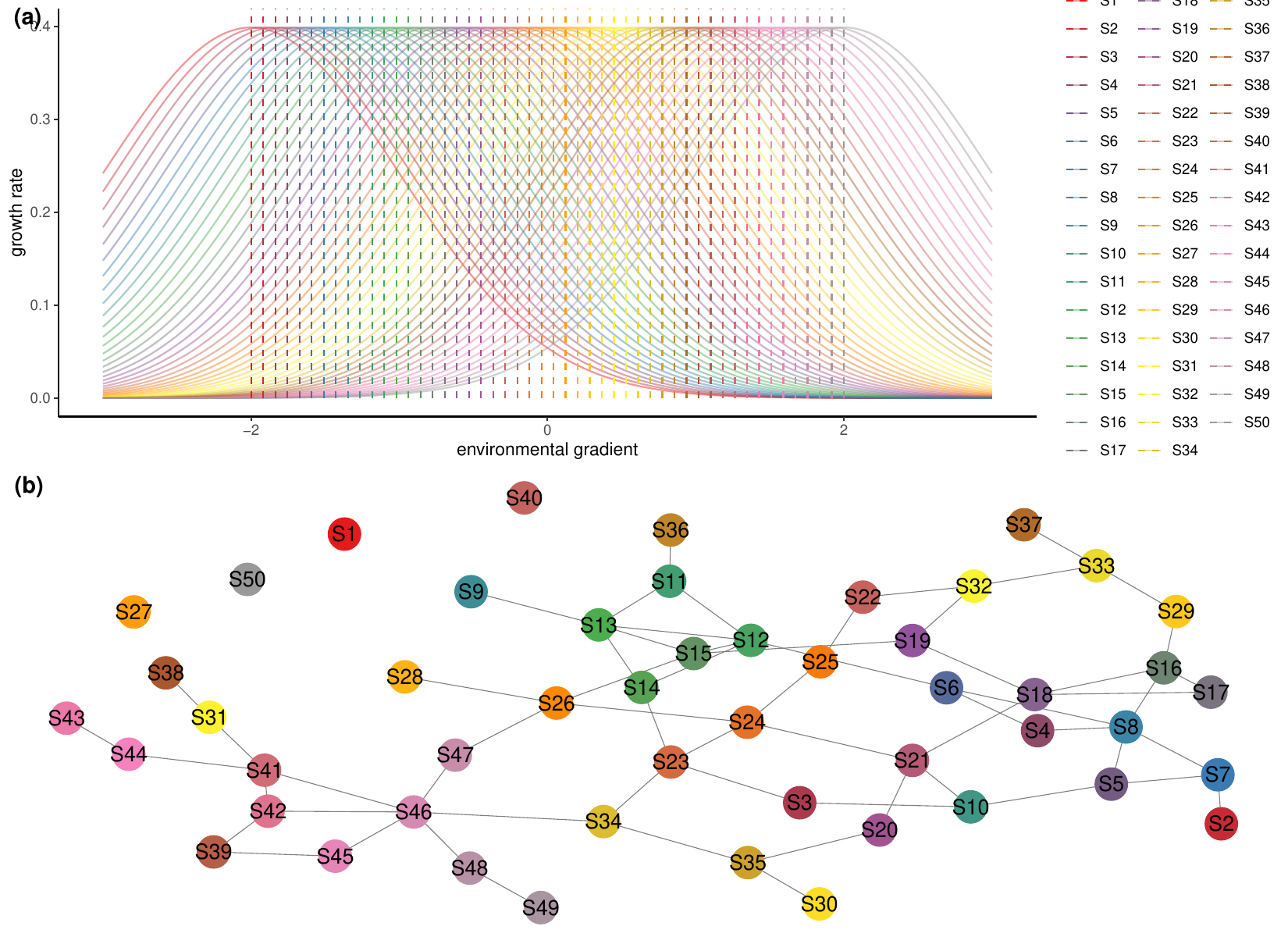}
		\caption{Simulations under Lotka-Volterra and colonisation-extinction models. (a) Growth rates in function of the environment for the $50$ considered species. (b) Representation of the metanetwork used for simulations in the two scenarios with interactions and for estimation with ELGRIN in the three scenarios. Nodes are colored according to the value of niche optima along the environmental gradient.}
		\label{fig:niches_net_LV}
	\end{center}
\end{figure}

\begin{figure}[ht!]
	\begin{center}
		\includegraphics%[height=10cm]
		[width=0.6\textwidth]{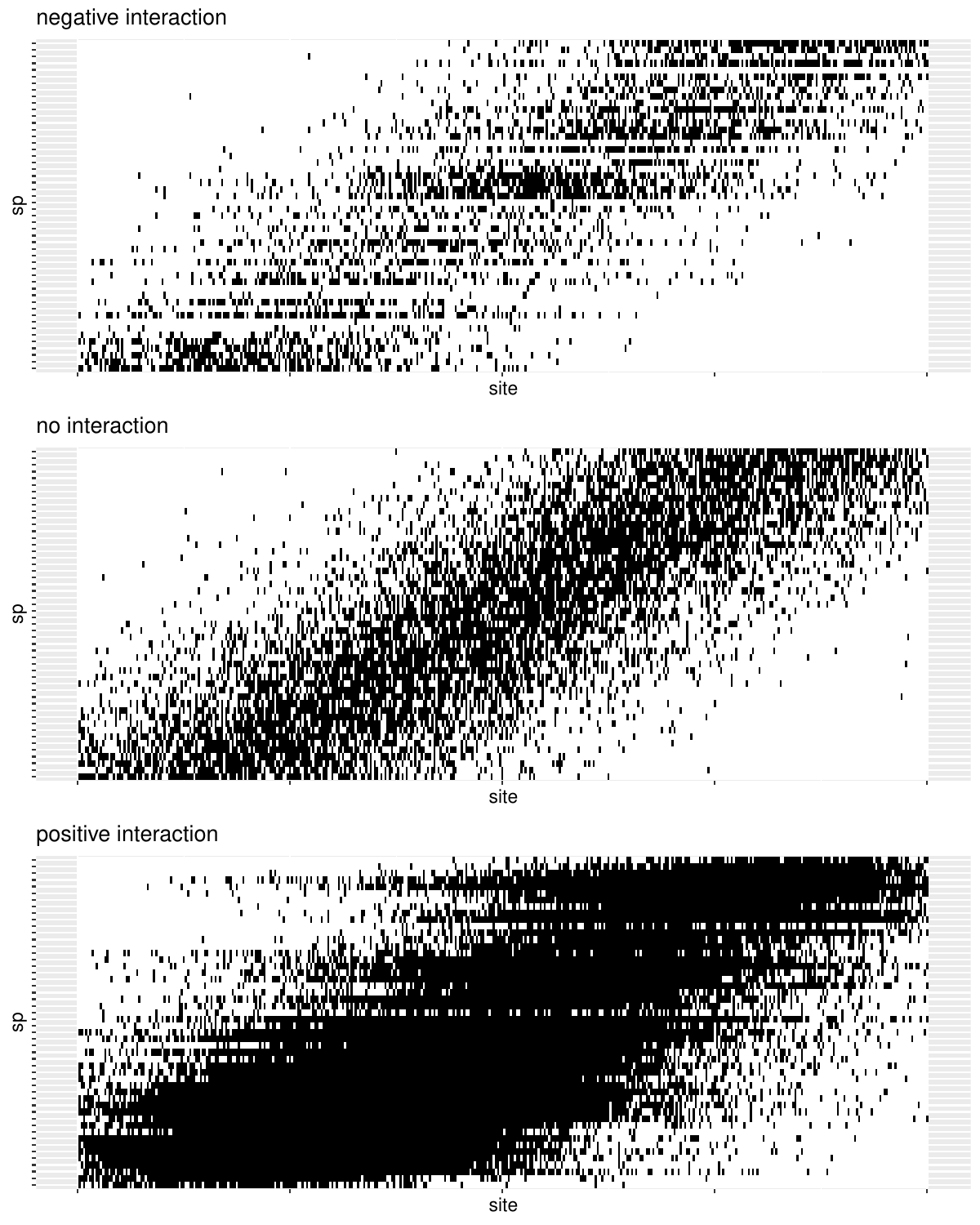}
		\caption{Presence-absence of species ($y$-axis) along the environmental gradient ($x$-axis) for the Lotka-volterra simulations across the three interaction scenarios.}
		\label{fig:spp_distrib_LV}
	\end{center}
\end{figure}

\begin{figure}[ht!]
	\begin{center}
		\includegraphics%[height=7cm]
		[width=0.6\textwidth]{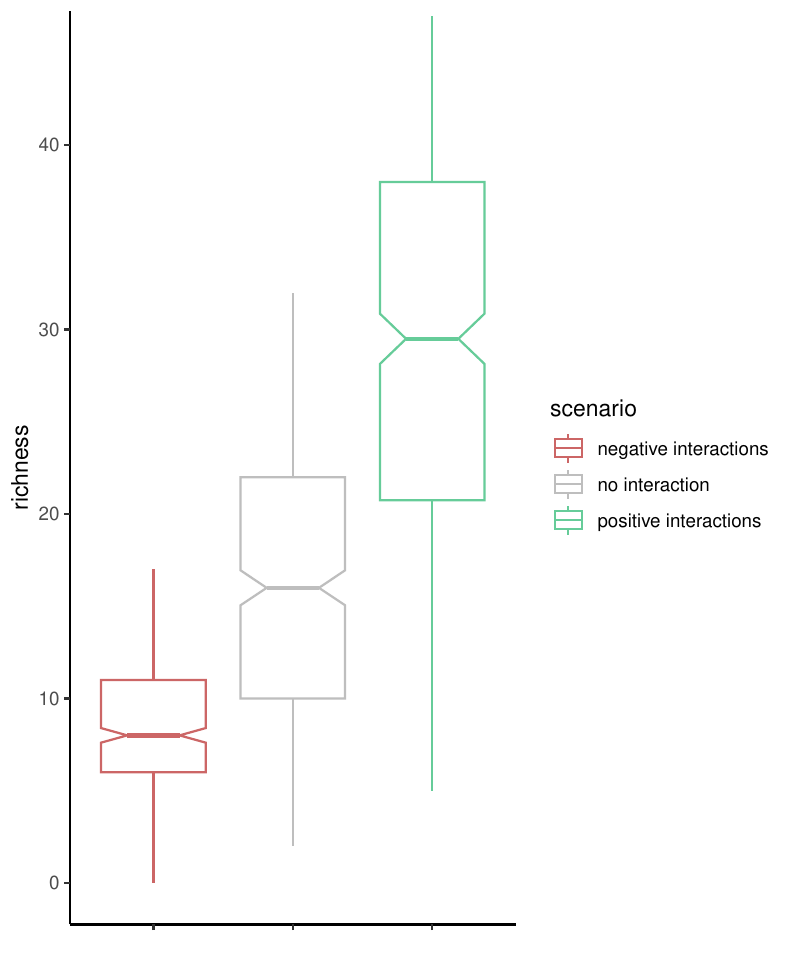}
		\caption{Distribution of species richness (observed number of present species) for the Lotka-Volterra simulation under the three interaction scenarios.}
		\label{fig:simu_LV_richness}
	\end{center}
\end{figure}

\begin{figure}[ht!]
	\begin{center}
		\includegraphics%[height=10cm]
		[width=0.6\textwidth]{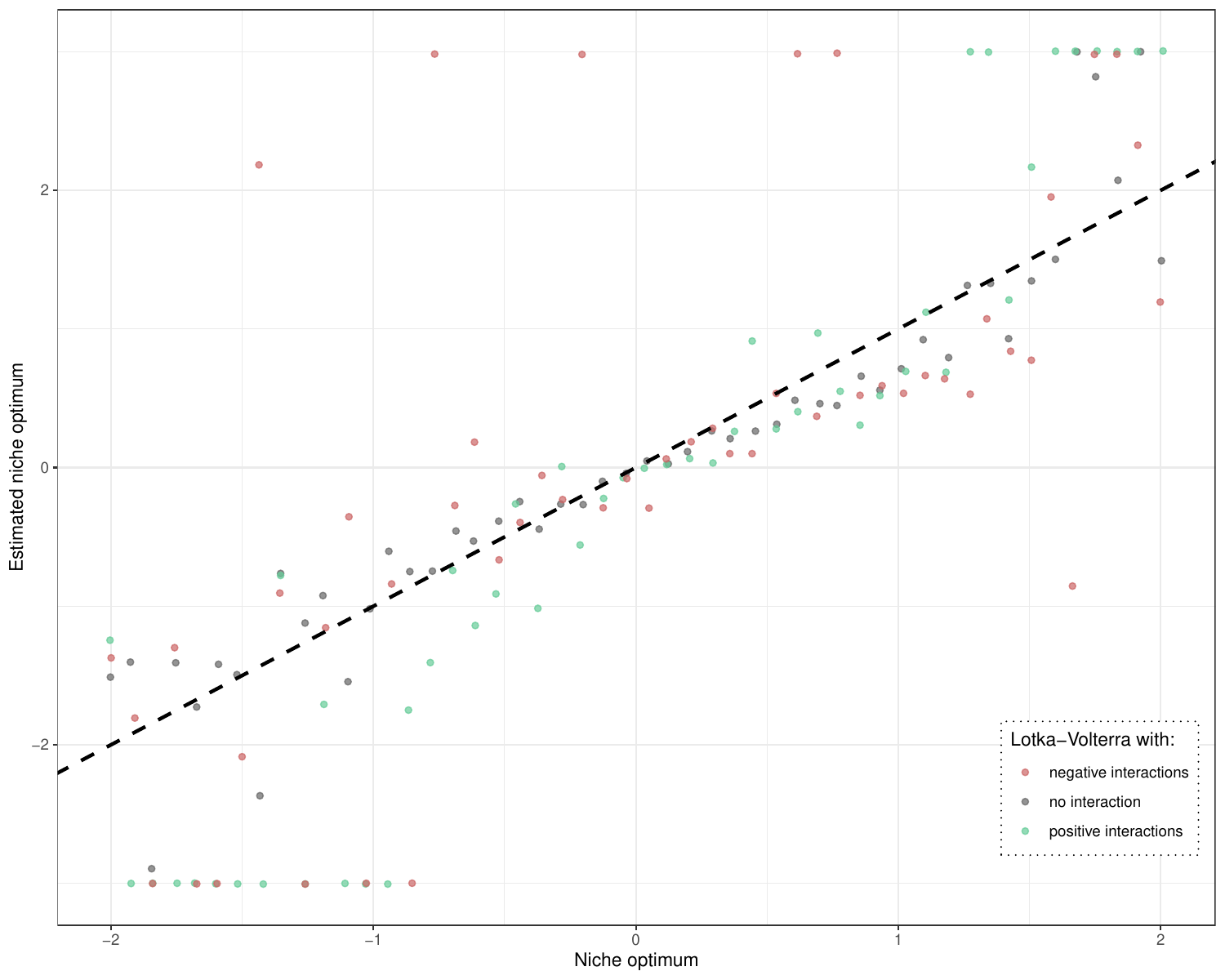}
		\caption{Estimated niche optima versus true niche optima for the Lotka-Volterra simulation under the three interaction scenarios.}
		\label{fig:simu_LV_fit_niches}
	\end{center}
\end{figure}

\subsubsection*{Results and discussion}

We notice that species richness increases from the competition to the mutualistic scenarios, as positive interactions enhance the possibility of species to be present (vice-versa for competition). We see that except for the positive interactions scenario, ELGRIN reasonably infers niche optima and association parameters ($\betaa$ and $\betap$, as shown in Fig. 2 in the main text) on this community data built from Lotka-Volterra model (see the discussion in the main text for further insights). We however acknowledge a large variance on association parameters for the negative interaction model and as already underlined in the main text, the inability of ELGRIN to identify the positive interactions scenario.  
%and few differences between the positive and the no-interaction scenario.
We remark that this positive interaction scenario of Lotka-Volterra model is a particularly harsh test for ELGRIN. Indeed, positive interactions increase the effective growth rate, leading to the risk of explosion of the system. For this reason, we were obliged to reduce the overall interaction strength when simulating the data (parameter $M_0$ in Equation~\eqref{eqref:LV_pos}), thus reducing their signal in resulting data. Moreover, positive interactions cause species to be present everywhere along their fundamental niche (e.g.	Fig.~\ref{fig:spp_distrib_LV}), so that their distribution can be completely explained by the Grinellian part of the model. In other words, the data look exactly as if they were obtained from a Grinellian model, where only the environmental variables shape the species distribution. 
Therefore, no signal is left for the Eltonian part, and the association parameters are inferred to be close to zero.

%%%%%%% CE %%%%%%%%
\subsection{Colonisation-extinction model: details and simulation set-up}
\label{SI_sec:CE}
We sampled species communities from the stationary distribution of a stochastic colonisation-extinction model (see \citealt{ohlmann2022assessing}). 
We  kept the same environmental gradient, niches and metanetwork  as in the Lotka-Volterra simulation. We also  combined this model with  the three interaction scenarios: negative interactions (i.e. competition), positive interactions (i.e. mutualism) or no interactions.

\subsubsection*{Modelling the dynamics}
We note $X_{i}^t$ the binary random variable associated to presence of species $i$ at discrete time $t$ and $w$ the value of the environmental gradient. We model niche and interaction effects trough conditional probabilities.

\paragraph{Colonisation-extinction model without interaction}
In this scenario, we assume that interaction effects do not impact colonisation or extinction. Extinction probability $p_e$ is constant whereas colonisation probability $c_i(w)$ for species $i$ depends on the value of the environmental gradient $w$ only 
\begin{equation*}
      P(X_{i}^{t+1} = 1|\{X_j^t\}_j,w) = P(X_{i}^{t+1} = 1|X_{i}^t,w) \propto c_i(w)(1-X_{i}^t) + (1-p_e)X_{i}^t,
\end{equation*} 
where $c_i(w)$ is a Gaussian function with mean $\mu_i$ and variance $\eta^2$ and $\propto$ means up to a normalizing constant.
We simulated this Markov chain and sampled from the stationary distribution to generate a joint species distribution.

\paragraph{Colonisation-extinction model with positive and negative interactions}
In these scenarios, we assume that both abiotic niche effects and interspecific interactions do impact colonisation-extinction processes. Environmental gradient modulates colonisation probability whereas interactions modulate both colonisation and extinction probabilities.

For the positive interactions scenario, we have:
\begin{align*}
     P(X_{i}^{t+1} = 1|\{X_j^t\}_j,w) &= P(X_{i}^{t+1} = 1|X_{i}^t,X^t_{N(i)},w) \\
     & \propto c_i(w)\exp\Big(\frac{\sum_{k \in N(i)} X_k^t}{|N(i)|}\Big)(1-X_{i}^t) + \Big[1-p_e\exp\Big(-\frac{\sum_{k \in N(i)} X_k^t}{|N(i)|}\Big)\Big]X_{i}^t,
\end{align*}
and for the negative interactions scenario
\begin{align*}
P(X_{i}^{t+1} = 1|\{X_j^t\}_j,w) &=
      P(X_{i}^{t+1} = 1|X_{i}^t,X^t_{N(i)},w) \\
      &\propto c_i(w)\exp\Big(-\frac{\sum_{k \in N(i)} X_k^t}{|N(i)|}\Big)(1-X_{i}^t) + \Big[1-p_e\exp\Big(\frac{\sum_{k \in N(i)} X_k^t}{|N(i)|}\Big)\Big]X_{i}^t, 
\end{align*}
where $N(i)$ is the set of neighbour species of species $i$ in the metanetwork. Similarly as for the no  interaction scenario, we sampled the species co-occurrences in the stationary distributions of each of these scenarios.

\subsubsection*{Parameter values}

We performed simulations with $N=50$ species and $L=400$ locations. Extinction probability was set to $p_e=2\%$ and colonisation probability $c_i(w)$ is Gaussian with mean $\mu_i$ and standard deviation $\eta=1$. We ran each simulation dynamics for $3,000$ time steps. We represented the distribution of species presence-absences and species richness %along the gradient 
under the three interaction scenarios in Fig.~\ref{fig:spp_distrib_CE} and Fig.~\ref{fig:simu_CE_richness}.
Niche optima inferred from ELGRIN on this dataset are shown in Fig.~\ref{fig:simu_CE_fit_niches}. 
Association parameters $\betaa$ and $\betap$ are represented in the main text, Fig.~\ref{fig:simu_CE}.

\begin{figure}[ht!]
	\begin{center}
		\includegraphics%[height=10cm]
		[width=0.6\textwidth]{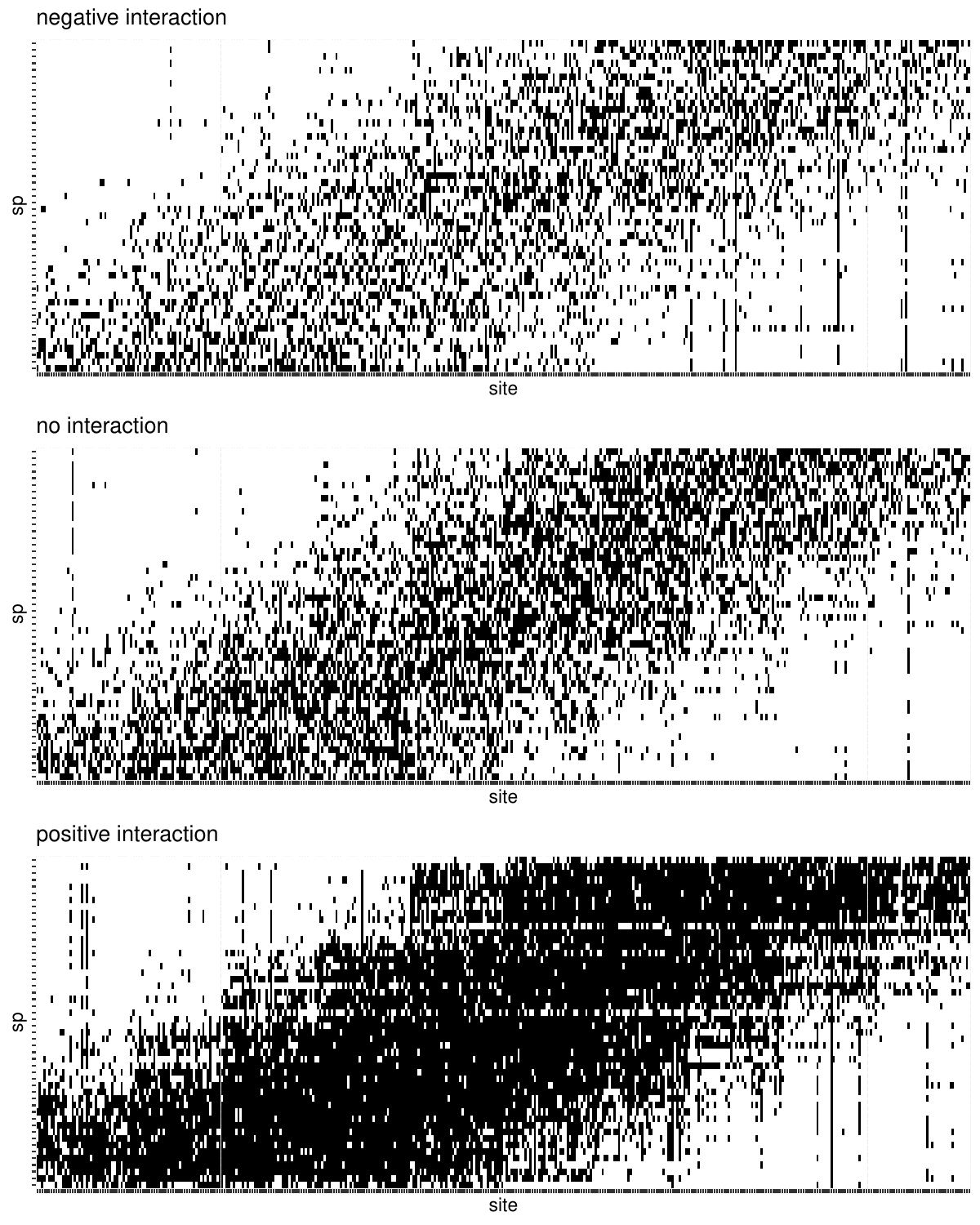}
		\caption{Presence-absence of species ($y$-axis) along the environmental gradient ($x$-axis) for the colonisation-extinction simulations, across the three interaction scenarios.}
		\label{fig:spp_distrib_CE}
	\end{center}
\end{figure}

\begin{figure}[ht!]
	\begin{center}
		\includegraphics%[height=7cm]
		[width=0.6\textwidth]{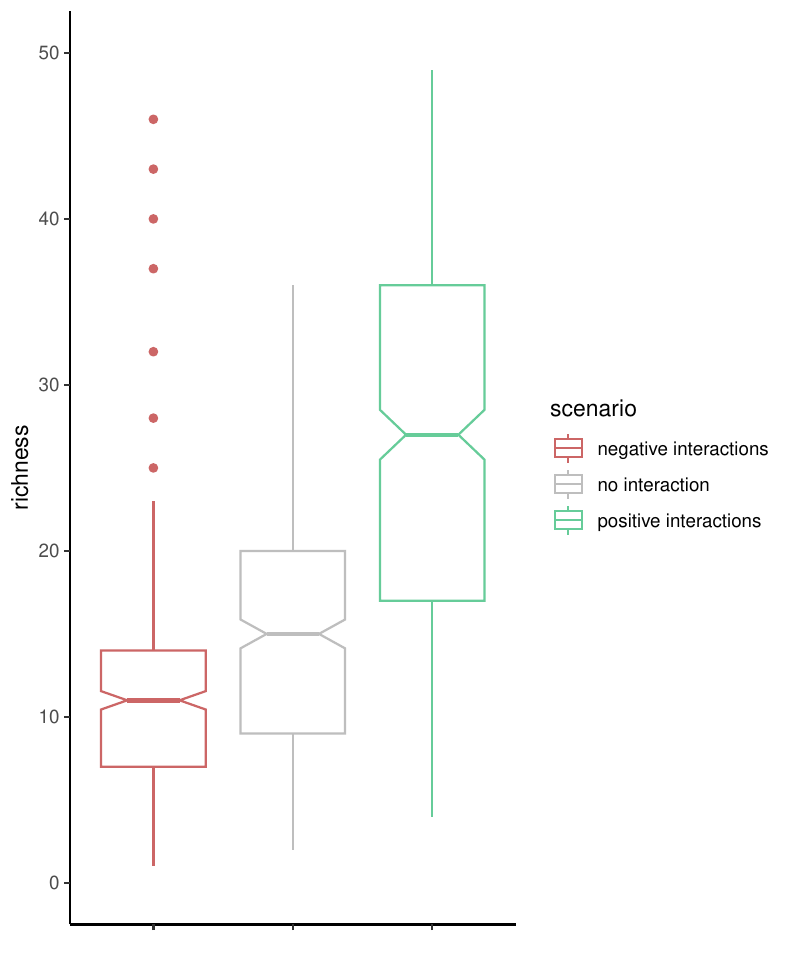}
		\caption{Distribution of species richness (observed number of present species) for the colonisation-extinction simulation under the three interaction scenarios.}
		\label{fig:simu_CE_richness}
	\end{center}
\end{figure}

\begin{figure}[ht!]
	\begin{center}
		\includegraphics%[height=10cm]
		[width=0.6\textwidth]{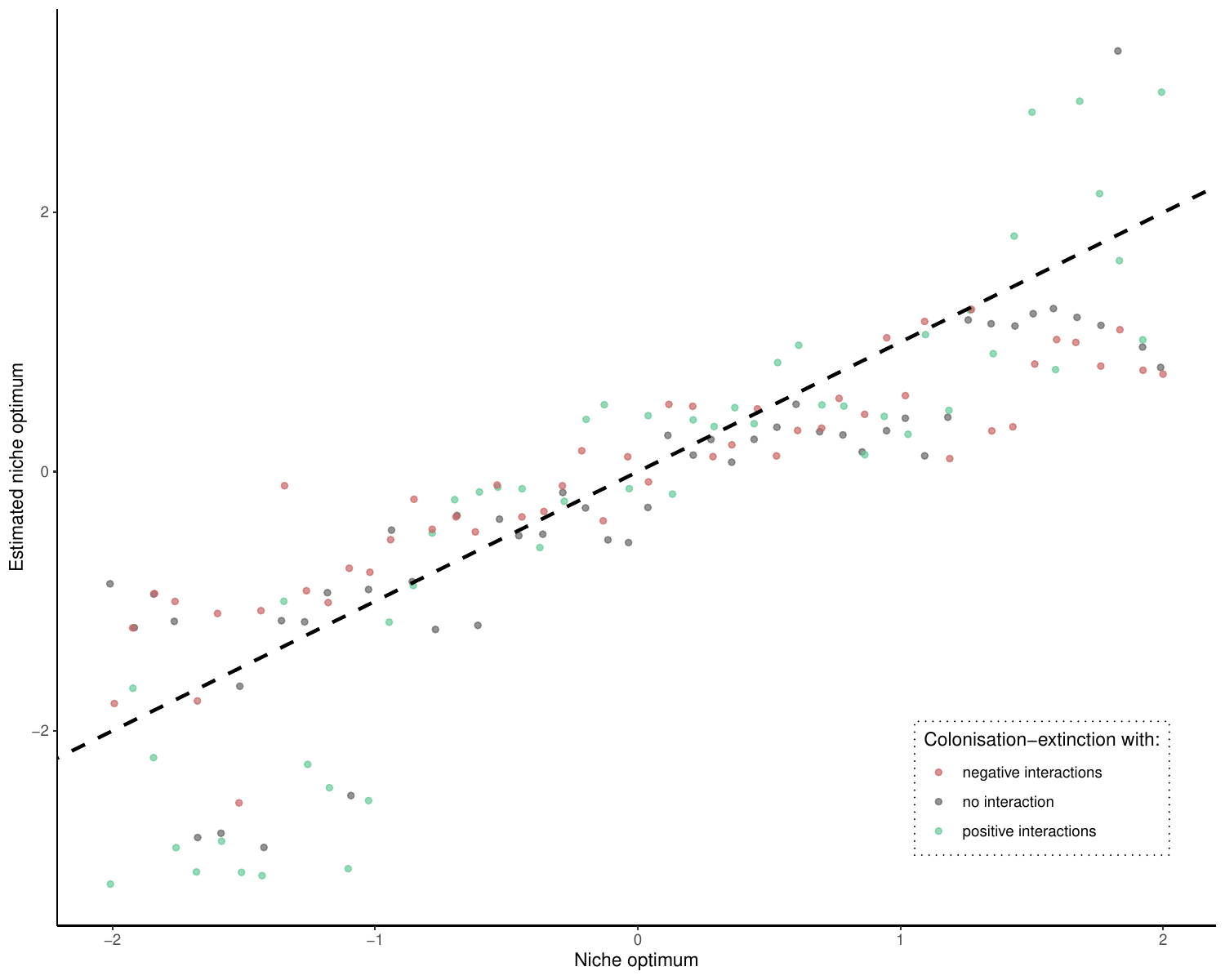}
		\caption{Estimated niche optima versus true niche optima for the colonisation-extinction simulation under the three interaction scenarios.}
		\label{fig:simu_CE_fit_niches}
	\end{center}
\end{figure}

\subsubsection*{Results and discussion}

We notice that species richness increases from the competition to the mutualistic scenarios, as positive interactions enhance the possibility of species to be present (vice-versa for competition, Fig.~\ref{fig:simu_CE_richness}).
For each scenario, the distribution of association parameters ($\betap$ and $\betaa$, see Fig.3 in the main text) have a negative median for negative interactions, a median close to zero for the case without interaction and a positive median for positive interactions. The sign of inferred (static) association parameters is the same as the sign of dynamic interaction parameters. 
 
Moreover, ELGRIN correctly infers niche optima in the three interaction scenarios (Fig~\ref{fig:simu_CE_fit_niches}). Consequently, on these simulations, ELGRIN separates environmental effects for biotic interactions (see the discussion in the main text for further insights).

%%%%%% VirtualCom %%%%%
\subsection{VirtualCom model: details and simulation set-up}
\label{SI_sec:VC}
We considered $N$ species in the species pool and $L$ communities to simulate (i.e. the number of locations).  
We defined the environmental niche (or preference) of each species as a Gaussian distribution centered on a given optimum. The environmental niches optima were regularly taken on a grid between -2 and 2, whereas the standard deviations were all equal to a given value  $\sigma$ for simplicity. 
Each community or location $l$ has the same carrying capacity $K$ (i.e. the exact number of individuals in each location). 

\subsubsection*{Building the interaction networks from niche values}
Here we constructed two different metanetworks $G^\star$ used to simulate data in the two interaction scenarios. Let $\mu_i$ and $\mu_j$ be the niche optima of two species and $\sigma$ the standard deviation of their niche. We considered that the two considered species potentially interact in the mutualistic metanetwork if  $\sigma <  | \mu_i - \mu_j| < 2 \sigma$. Regarding competition, we considered that the two species potentially compete if they share the same environmental niche, and thus if  $| \mu_i - \mu_j| < \sigma$.
Among all potential species interactions, we randomly sampled $50\%$ of them for both competition and mutualism. Inference with ELGRIN in the scenarios with interactions relied on the respective metanetworks used for simulation. In the no interaction scenario, ELGRIN inference relied on the positive interactions metanetwork (corresponding to mutualism).

\subsubsection*{Modelling the dynamics}
The community assembly process was randomly initialized with a set of individuals that were randomly selected in the species pool until the carrying capacity $K$ was reached. At each time step, the probability of an individual from species $i$ to replace a random individual of the community $l$ is $R_{il}$. This probability depends on how the environmental conditions at location $l$ are suitable for species $i$ (environmental filter) and on the number of individuals present in community $l$ that interact with species $i$ (competition or mutualism filter). More precisely, we consider the following equation defining the relative importance of environmental and biotic filters respectively:
$$
R_{il} \propto \exp \left[ \gamma_{env} \log(p_{il}^{env}) \ +  \gamma_{inter} \log(p_{il}^{inter}) \right],
$$
where $\gamma_{env}$ and $\gamma_{inter}$ are tuning parameters giving weights to abiotic and biotic components, and $p_{il}^{env}$ and $p_{il}^{inter}$ are probabilities of species replacement with different filters.
The probability $p_{il}^{env}$ accounts for the environmental filtering and is a rescaled density of the Gaussian niche of species $i$ at the environmental value of location $l$ (the scaling ensures this value ranges in $[0,1]$).
When the environment in community $l$ is suitable to species $i$, the probability that this species enters this community becomes high. 

We then have a term dealing with species interactions. In the no interaction scenario, the constant  $\gamma_{il}$ is set to 0.
Otherwise, the interaction term is set  as
$$
p_{il}^{inter} = \left\{ 
\begin{array}{ll}
	K^{-1}  
	\sum_{j ; (i,j)\in E^\star}
	K_{jl} & \text{ for mutualism,} \\
	1 - K^{-1}
	\sum_{j ; (i,j)\in E^\star}
	K_{jl} & \text{ for competition,}
\end{array}
\right. 
$$
where $K_{jl}$ is the number of individuals of species $j$ in community $l$, such that the total carrying capacity $K =  \sum_j K_{jl}$. 
In case of mutualism, the larger number of individuals of species connected with $i$ in the metanetwork are present in location $l$, the higher is the probability of an individual of species $i$ to enter the community.  For competition, the opposite effect is induced.
The tuning parameters $\gamma_{env}$ and $\gamma_{inter}$ weight the relative importance of the different filters. This algorithm updates the communities until an equilibrium is reached. To assess the equilibrium state, we calculated the Shannon diversity for each location over time, and checked for convergence. Lastly, we deduced species presence/absence  by examining species composition in each location.

\subsubsection*{Parameter values}

We performed simulations with $N=50$ species and $L=400$ locations, with a carrying capacity of $K=40$ individuals. The standard deviations of the Gaussian niche distributions were set to  $\sigma=1$ for all species. We chose $\gamma_{env}=1$ and $\gamma_{metanetwork}=10$ in case of competition and  $5$ in case of mutualism.  Fig.~\ref{fig:niches_net_VC} shows growth rates in function of the environment and the two metanetworks (for positive and negative interactions).  We simulated 100 time steps such that the algorithm convergence was achieved in practice. We repeated the whole procedure 10 times and verified that we obtained equivalent qualitative results. Simulations were implemented with R version 3.6.2 and a modified version of the VirtualCom package. We represented the distribution of species presence-absences and species richness %along the gradient 
under the three interaction scenarios in Fig.~\ref{fig:spp_distrib_VC} and Fig.~\ref{fig:simu_VC_richness}.
Niche optima inferred from ELGRIN on this dataset are shown in Fig.~\ref{fig:simu_VC_fit_niches}. 
Association parameters $\betaa$ and $\betap$ are represented in the main text, Fig.~\ref{fig:simu_VC}.

\begin{figure}[H]
	\begin{center}
		\includegraphics[width=0.6\textwidth]{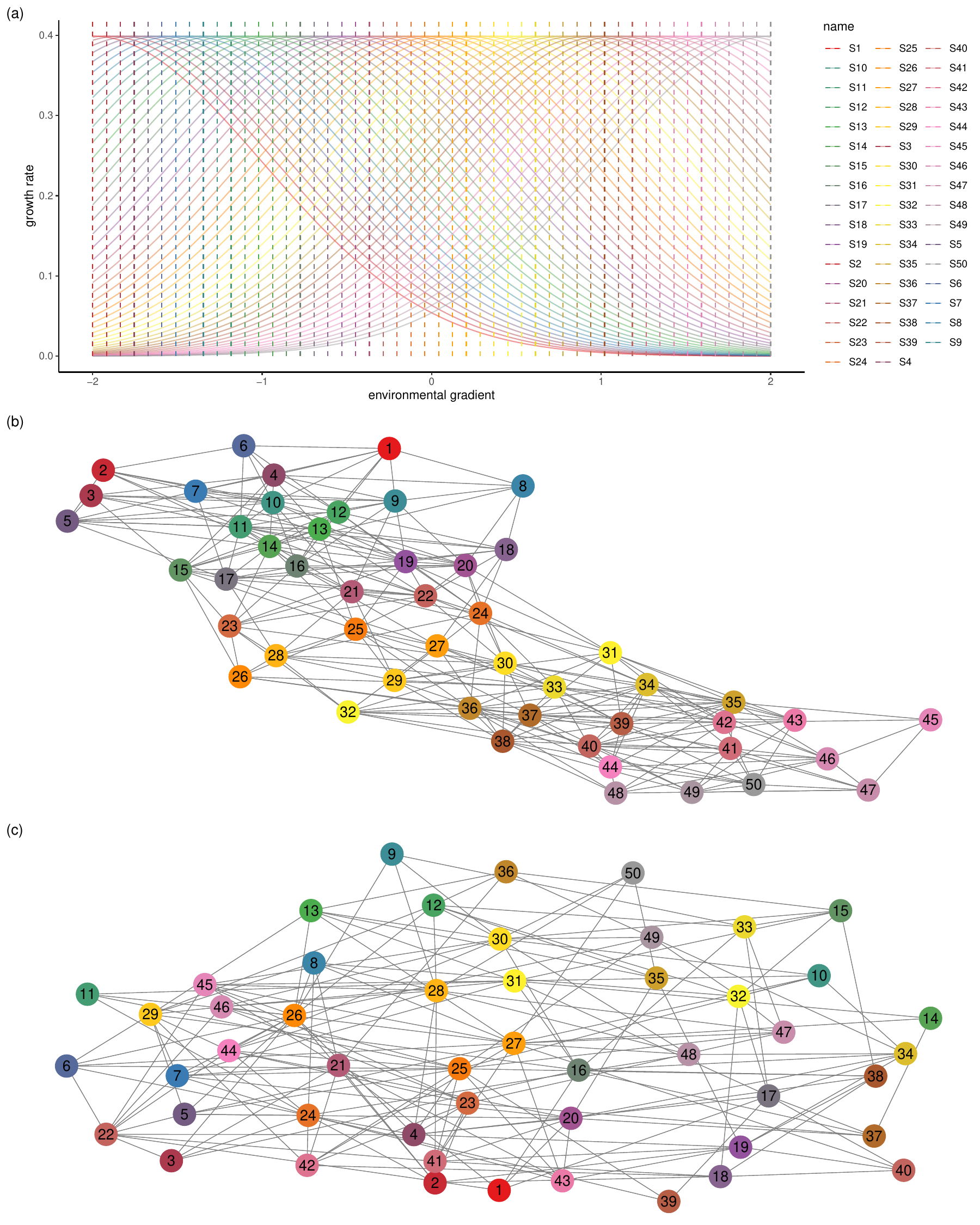}
		\caption{Simulations under VirtualCom model. (a) Growth rates in function of the environment for the $50$ considered species. Representation of the metanetworks used for simulations of VirtualCom in the facilitation case (b) and in the competition case (c). Nodes are colored according to the value of niche optima along the environmental gradient.}
		\label{fig:niches_net_VC}
	\end{center}
\end{figure}

\begin{figure}[H]
	\begin{center}
		\includegraphics%[height=10cm]
		[width=0.6\textwidth]{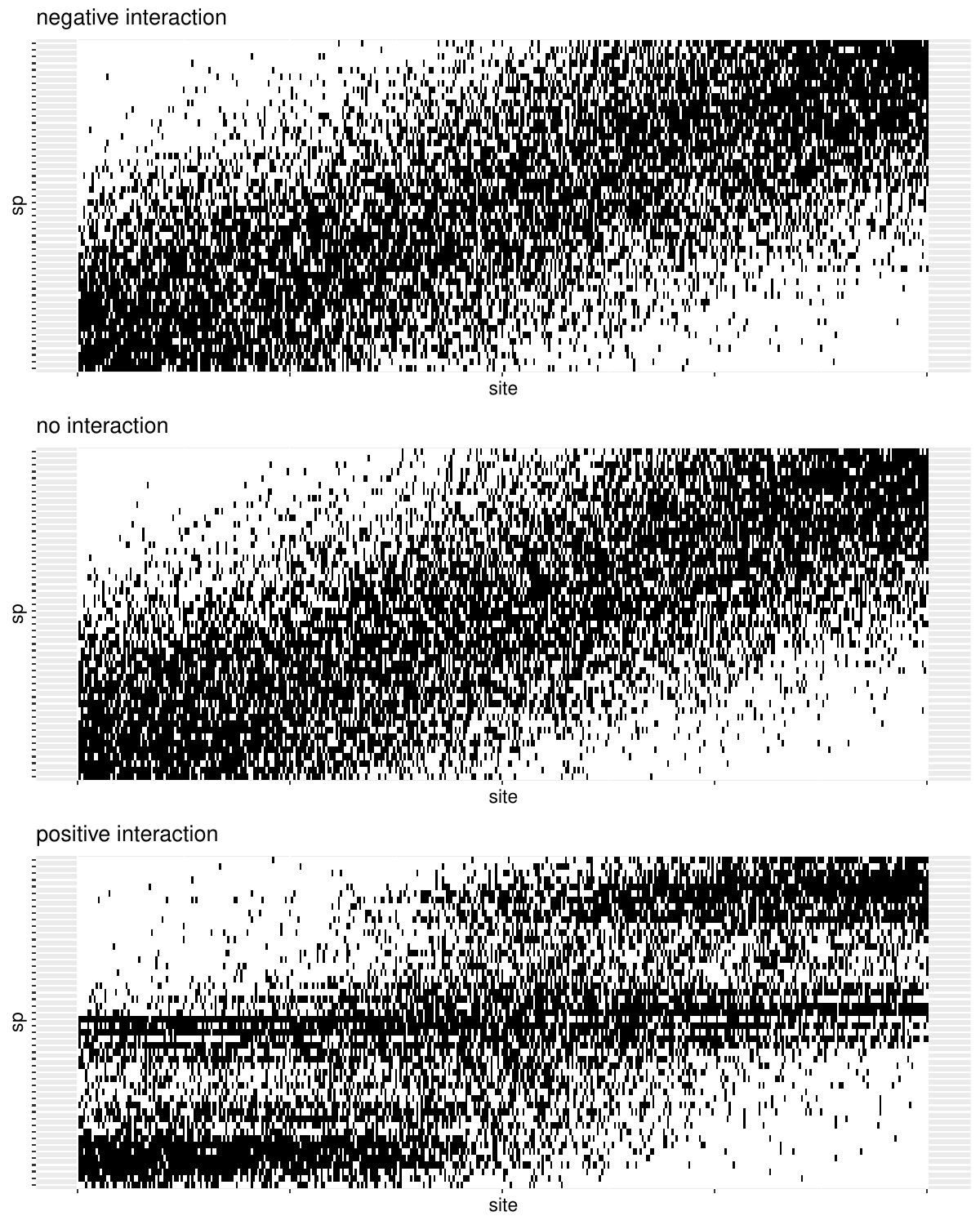}
		\caption{Presence-absence of species ($y$-axis) along the environmental gradient ($x$-axis) for the VirtualCom simulations across the three interaction scenarios.}
		\label{fig:spp_distrib_VC}
	\end{center}
\end{figure}

\begin{figure}[H]
	\begin{center}
		\includegraphics%[height=7cm]
		[width=0.6\textwidth]{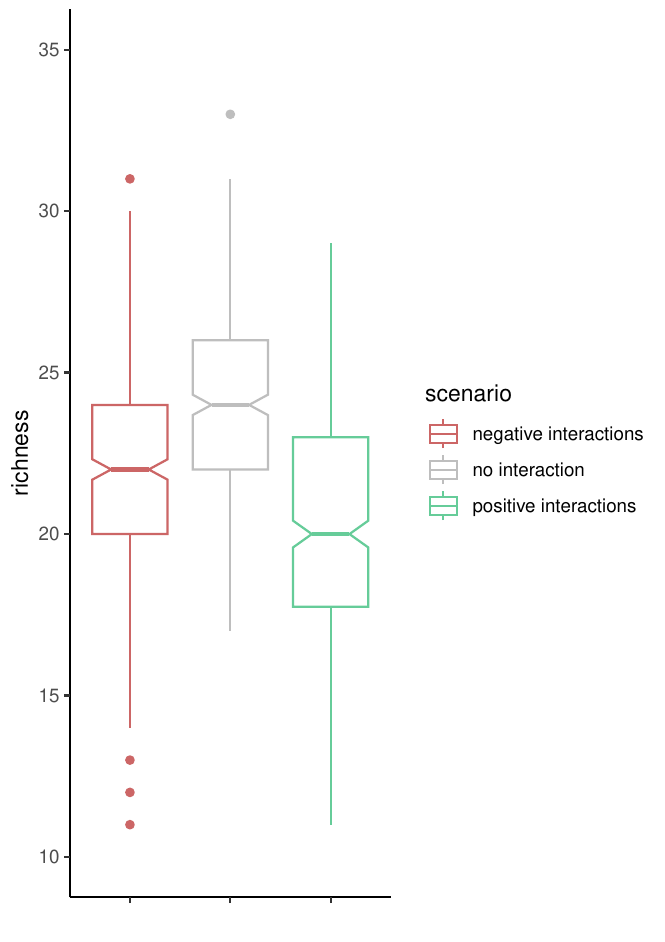}
		\caption{Distribution of species richness (observed number of present species) for the VirtualCom  simulation under the three interaction scenarios.}
		\label{fig:simu_VC_richness}
	\end{center}
\end{figure}

\begin{figure}[H]
	\begin{center}
		\includegraphics%[height=10cm]
		[width=0.6\textwidth]{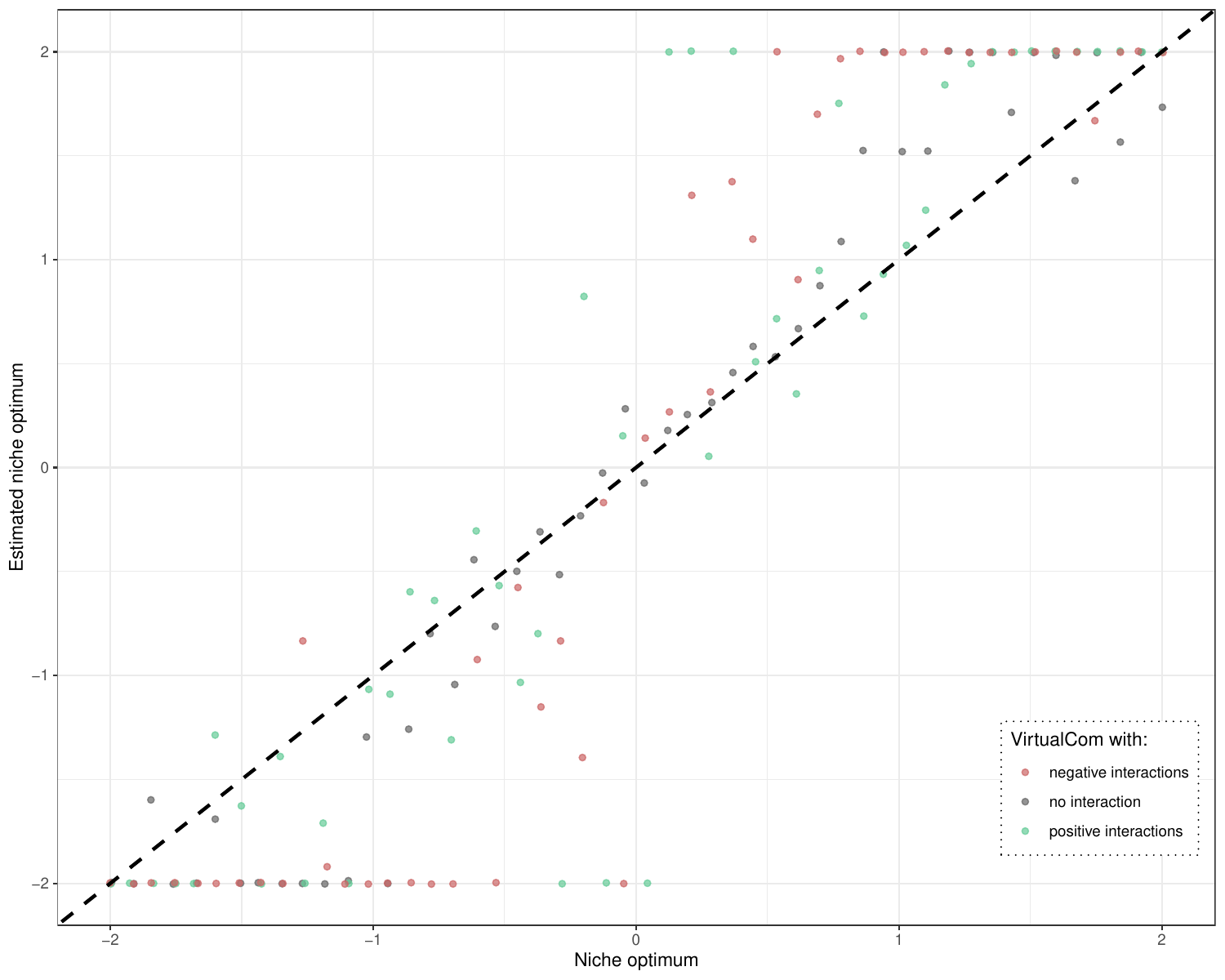}
		\caption{Estimated niche optima versus true niche optima for the VirtualCom simulation under the three interaction scenarios.}
		\label{fig:simu_VC_fit_niches}
	\end{center}
\end{figure}

\subsubsection*{Results and discussion}

We notice here that species richness is lower in the facilitation case than in the other cases. This might seem counter intuitive, but comes from the constraint of VirtualCom of keeping fixed the number of individual at each site. Therefore, positive interactions tend to produce communities with a lower number of species, since the few species that facilitate each other and that can survive at the given environmental conditions keep enhancing their probability of presence and cannot be replaced by other species. Instead, the cases with negative interactions, or without interactions, reduces the probability of competitive species, thus favoring all the other species to replace them, leading to an overall higher richness.
We see that ELGRIN reasonably infers the niche parameters and association parameters ($\betap$ and $\betaa$, see Fig.4 in the main text) on this community data built from VirtualCom model (see the discussion in the main text for further insights). We however acknowledge a large variance on association parameters for the negative interactions model, which could be due to the fact that  the VirtualCom model does not express as a ELGRIN one. 
In the no  interactions scenario, we correctly infer that the association parameters under ELGRIN model are estimated around zero.

\subsection{Kolmogorov-Smirnov tests on association parameters}
\blue{To quantitatively investigate the difference between $\betap$ and $\betaa$ distributions in the three simulations, we performed Kolmogorov-Smirnov tests. For each simulation, we tested whether $\betap$ and $\betaa$ distributions were significantly greater (resp. lower) in the scenarios with positive interactions  (resp. negative) from the scenario without interactions. Namely, denoting $\beta^{\text{pos}}$ (resp. $\beta^{\text{no int}}$ and $\beta^{\text{neg}}$) the $\beta$ values inferred under the positive interaction scenario (resp. no interactions and negative interactions) and $F_{\beta^{\text{pos}}}$ (resp. $F_{\beta^{\text{no int}}}$ and $F_{\beta^{\text{neg}}}$) the corresponding cdfs,
we tested in the positive scenario the null hypothesis $H_0:$ $F_{\beta^{\text{pos}}}= F_{\beta^{\text{no int}}}$  against the alternative $H_1:$ $F_{\beta^{\text{pos}}} \ge F_{\beta^{\text{no int}}}$ (this corresponds to stochastic ordering). In the same way, for the negative interaction scenario, we tested  $H_0:$ $F_{\beta^{\text{neg}}}= F_{\beta^{\text{no int}}}$  against the alternative $H_1:$ $F_{\beta^{\text{neg}}} \le F_{\beta^{\text{no int}}}$.
We recall the concept of stochastic ordering for 2 random variables $U,V$: we have that `$U$ is stochastically larger than $V$`, denoted $U \succeq V$ when the corresponding cdfs satisfy $F_U\ge F_V$ which in practice corresponds to the fact that a random observation from $U$ `tends to be larger` than one from $V$.  

In the three simulations, the tests correctly identify significant differences between interactions and no interaction scenarios (Table \ref{tab:KStest}). For the Lotka-Volterra simulation, the $p$-values for the comparison between the positive interaction scenario and the no-interaction scenario were slightly greater than the $p$-values of the other tests, in accordance to the qualitative assessment of the simulation results.}

\begin{table}[]
    \centering
    \blue{
\begin{tabular}{ |p{3cm}||p{6cm}|p{6cm}|  }
 \hline
 \multicolumn{3}{|c|}{Results of Kolmogorov-Smirnov tests} \\
 \hline
 Simulation and parameter & \small{$p$-value of the test $H_0:$ $F_{\beta^{\text{pos}}}= F_{\beta^{\text{no int}}}$  against $H_1:$ $F_{\beta^{\text{pos}}} \ge F_{\beta^{\text{no int}}}$
 %parameter values greater in the positive interactions than in the no interaction scenario
 } 
 & \small{$p$-value of the test $H_0:$ $F_{\beta^{\text{neg}}}= F_{\beta^{\text{no int}}}$  against $H_1:$ $F_{\beta^{\text{neg}}} \le F_{\beta^{\text{no int}}}$
 %parameter values lower in the negative interactions than in the no interaction scenario
 }\\
 \hline
 LV, $\betap$   & $0.0020$    &$< 2.2\mathrm{e}{-16}$\\
LV,  $\betaa$   & $0.0040$    &$< 2.2\mathrm{e}{-16}$\\ \hline
 CE, $\betap$   & $< 2.2\mathrm{e}{-16}$    & $< 2.2\mathrm{e}{-16}$\\
CE  $\betaa$   & $< 2.2\mathrm{e}{-16}$    & $< 2.2\mathrm{e}{-16}$\\ \hline
 VC, $\betap$   & $7.8\mathrm{e}{-16}$ & $< 2.2\mathrm{e}{-16}$ \\
VC,  $\betaa$   & $< 2.2\mathrm{e}{-16}$ & $< 2.2\mathrm{e}{-16}$ \\ \hline
\end{tabular}
    \caption{Results of Kolmogorov-Smirnov tests on association parameters ($\betaa$ and $\betap$) in the three simulations settings: Lotka-Volterra (LV), Colonisation-extinction (CE) and VirtualCom (VC). The tests compare the distribution of association parameters between the positive interactions and the no interaction scenario (second column) and also between the negative interactions scenario and the no interaction scenario (third column).}
    \label{tab:KStest}    
}
\end{table}

%%%%%%%%%%%%%%%%%%%%%%%%%%%%%%%%%%%%%%%%%% %%%%%%%%%%%%%%%%%%%%%%%%%%%%%%%%%%%%%%%%%%  
%%%%%%%%%%%%%% Simualation with intra>inter
%\newpage
\section{ Simulation beyond model assumptions}
\label{sec_SI:LVintra}
We provide here a test of our model when species communities are simulated based on processes that are not accounted for by ELGRIN. In particular, we take the example of Lotka-Volterra models
(see Section \ref{SI_sec:LV}) where intraspecific interactions are higher than interspecific ones. ELGRIN does not account for these intraspecific interactions (i.e., it does not model self-loops), and we might thus expect that it will struggle in correctly retrieving model parameters. An HTLM vignette for this simulation is available at \url{https://plmlab.math.cnrs.fr/econetproject/econetwork}.

\subsection{Simulation set-up} 
We set-up simulations accordingly to Section \ref{SI_sec:LV}. We simulate three different scenarios: positive (i.e. mutualism), negative (i.e.
competition) or no interactions, using the same  interaction network, niche optima (Fig. \ref{fig:niches_net_LV}) and simulation parameters. However, we increase the intraspecific competition term (i.e., parameter $c$ in Equation~\eqref{lvEq}), from $1/10$ to $2$, in order to make it stronger than interspecific interactions.
The distribution of species presence-absences and species richness under the three interaction scenarios is represented in Fig.~\ref{fig:spp_distrib_LV_intra} and Fig.~\ref{fig:simu_LV_richness_intra}. Niche optima inferred from ELGRIN on this dataset are shown in Fig.~\ref{fig:simu_LV_fit_niches_intra}. Inferred association parameters $\betaa$ and $\betap$ are represented in Fig.~\ref{fig:simu_LV_beta_intra}.

\begin{figure}[H]
	\begin{center}
		\includegraphics%[height=10cm]
		[width=0.6\textwidth]{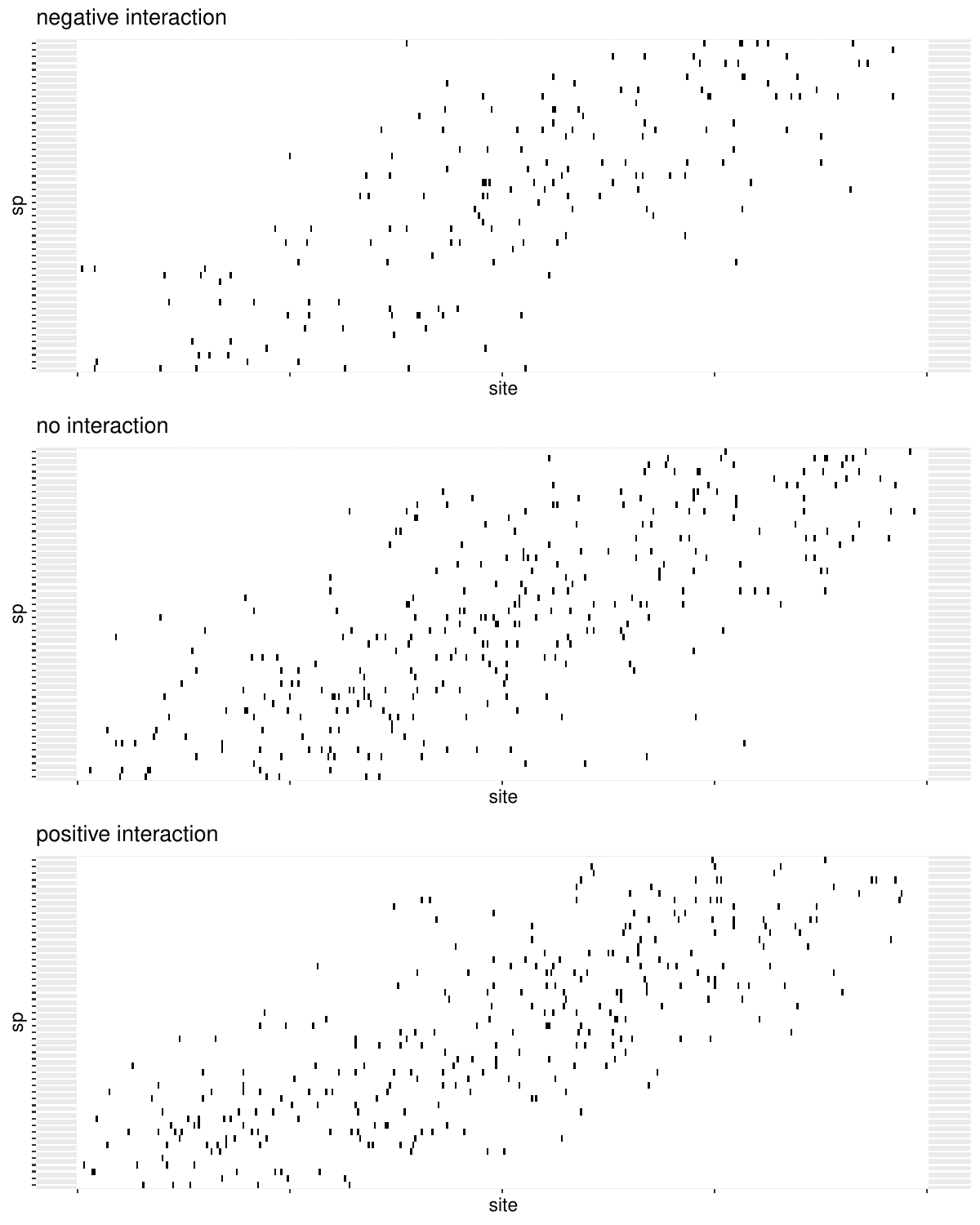}
		\caption{Presence-absence of species ($y$-axis) along the environmental gradient ($x$-axis) for the Lotka-Volterra simulations with intraspecific interactions larger than interspecific ones, across the three interaction scenarios.}
		\label{fig:spp_distrib_LV_intra}
	\end{center}
\end{figure}

\begin{figure}[H]
	\begin{center}
		\includegraphics%[height=7cm]
		[width=0.6\textwidth]{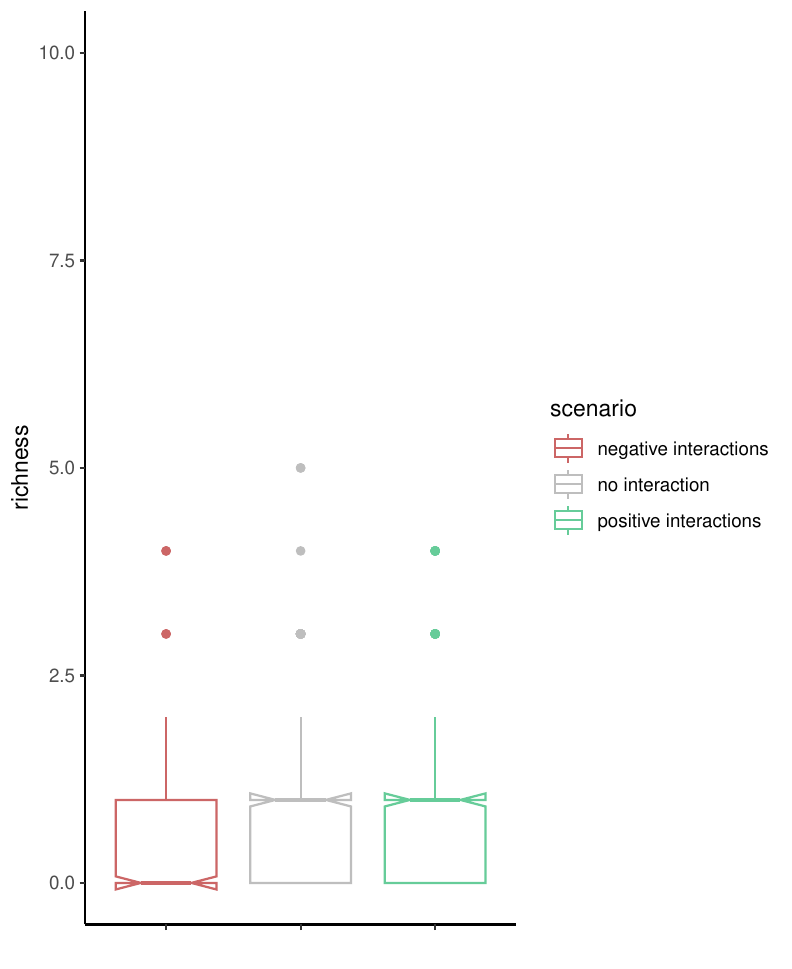}
		\caption{Distribution of species richness (observed number of present species) for the Lotka-Volterra simulation with intraspecific interactions larger than interspecific ones,  under the three interaction scenarios.}
		\label{fig:simu_LV_richness_intra}
	\end{center}
\end{figure}

\begin{figure}[H]
	\begin{center}
		\includegraphics%[height=10cm]
		[width=0.6\textwidth]{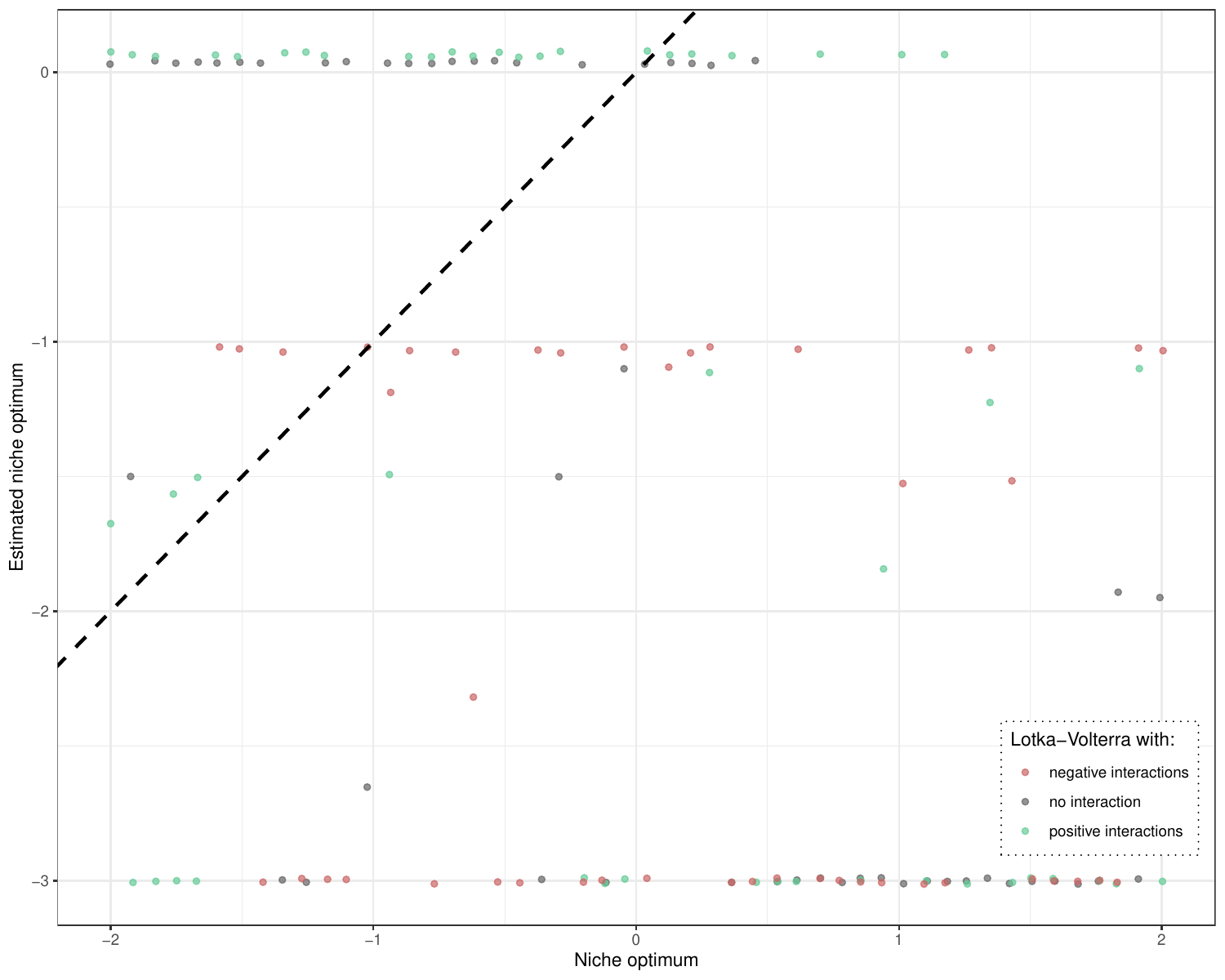}
		\caption{Estimated niche optima versus true niche optima for the Lotka-Volterra simulation with intraspecific interactions larger than interspecific ones, under the three interaction scenarios.}
		\label{fig:simu_LV_fit_niches_intra}
	\end{center}
\end{figure}

\begin{figure}[H]
	\begin{center}
		\includegraphics%[height=10cm]
		[width=0.6\textwidth]{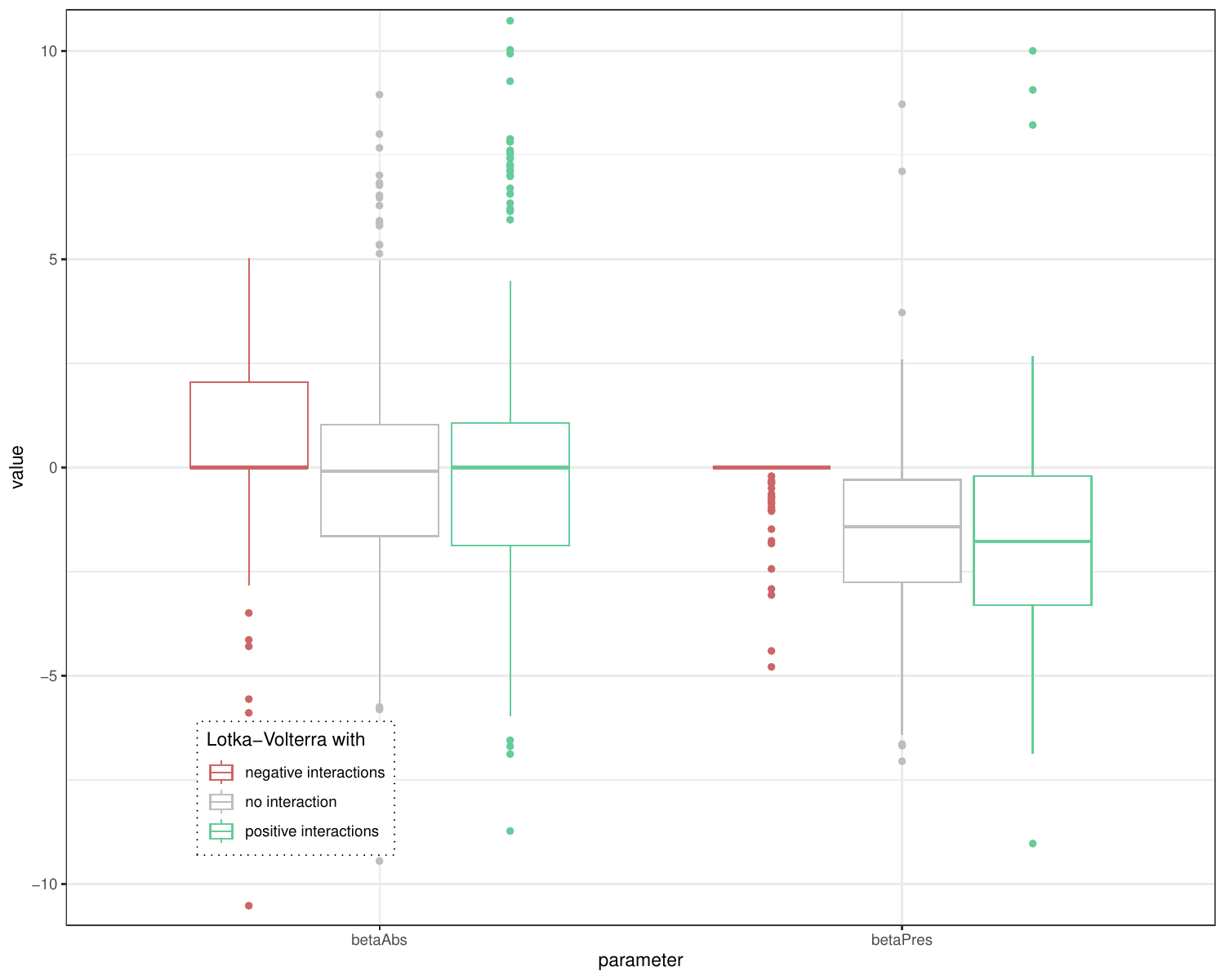}
		\caption{Distribution of co-absence $\betaa$  and co-presence $\betap$ strengths inferred using ELGRIN on simulated ecological communities using a Lotka-Volterra model  with intraspecific interactions larger than interspecific ones, under the three interaction scenarios.}
		\label{fig:simu_LV_beta_intra}
	\end{center}
\end{figure}

\subsubsection*{Results and discussion}

We notice that mean species richness increases from the competition to the mutualistic scenarios, as positive interactions enhance the possibility of species to be present (vice-versa for competition). However, the community matrices are very sparse and species richness is overall very low for the three different scenarios (Fig.~\ref{fig:simu_LV_richness_intra}).
Overall, ELGRIN does not correctly infer model parameters on this community data built from Lotka-Volterra model with large intraspecific interactions. Niche optima are badly estimated (Fig~\ref{fig:simu_LV_fit_niches_intra}) and the inferred association parameters do not show the expected patterns ($\betaa$ and $\betap$, as shown in Fig.~\ref{fig:simu_LV_beta_intra}). We see that ELGRIN cannot correctly disentangle between the three different simulated scenarios. Indeed, we might expect $\betaa$ and $\betap$ parameters to be negative in the negative interaction case, positive in the positive interaction one, and around zero in the no-interaction case. This is generally not the case here, where the inferred $\beta$ parameters are generally close to zero and are lower for the positive interactions scenario than for the negative one.
The poor performance of ELGRIN when intraspecific interactions are higher than interspecific ones is not surprising. As discussed in the main text it is possible to simulate species distributions on which ELGRIN will fail in recovering the true underlying generation process, because these datasets simply do not show anymore enough information about the process that generated them, and could be the result of a completely different scenario, in particular the one inferred by ELGRIN. As such, when using ELGRIN - or any other statistical model - we must bear in mind its model assumptions, knowing that inference might be blurred when other processes are at play.
%%%%%%%%%%%%%%%%%%%%%%%%%%%%%%%%%%%%%%%%%%%%%%%%%%%%%%%%%%%%%%%%%%%%%%%%%%%%%%
%%%%%%%%%%%%%%%%%%%%%%%%%%%%%%%%%%%%%%%%%%%%%%%%%%%%%%%%%%%%%%%%%%%%%%%%%%%%%%
%%%%%%%%%%%%%%%%%%%%%%%%%%%%%%%%%%%%%%%%%%%%%%%%%%%%%%%%%%%%%%%%%%%%%%%%%%%%%%
%\newpage
\section{ Empirical case study}
\subsection{ Relation between $\betap$ and $\betaa$}
\label{SI_sec:realdata1}

Fig.~\ref{fig:supprelation} shows the correlation between the values $\betap$ and $\betaa$ estimated through ELGRIN on the European tetrapods case study.

\begin{figure}[H]
	\begin{center}
	\includegraphics[height=6cm]{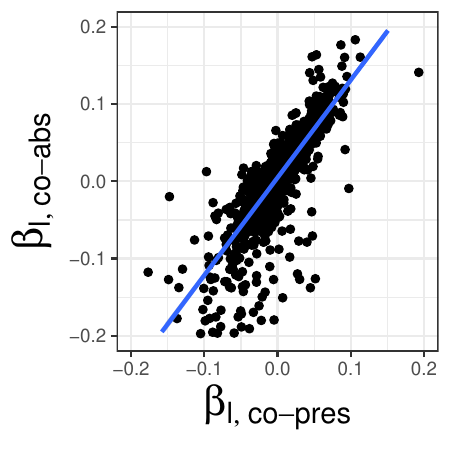}
	{
	\caption{Results of ELGRIN on the European tetrapods case study. The parameters $\betap$ and $\betaa$ were highly correlated.}\label{fig:supprelation}}
	\end{center}
\end{figure}

\subsection{Kolmogorov-Smirnov tests on the distributions}
\label{SI_sec:realdata2}
\blue{We performed the following tests on the $\betap$ parameters estimated from the data:
\begin{itemize}
    \item Denoting $F_{\beta^{\text{high alt}}}$ (resp. $F_{\beta^{\text{low alt}}}$) the cdf of the $\betap$ values inferred at locations with altitude above 1600m (resp. below 1600m), we tested the null hypothesis $H_0: F_{\beta^{\text{high alt}}}= F_{\beta^{\text{low alt}}}$ against the alternative  $H_1: F_{\beta^{\text{high alt}}} \le  F_{\beta^{\text{low alt}}}$. The resulting $p$-value is inferior to $2.2\mathrm{e}{-16}$.
    \item Denoting $F_{\beta^{\text{high richness}}}$ (resp. $F_{\beta^{\text{low richness}}}$) the cdf of the $\betap$ values inferred at locations with richness larger than 200,  we tested the null hypothesis $H_0: F_{\beta^{\text{high richness}}}= F_{\beta^{\text{low richness}}}$ against the alternative  $H_1: F_{\beta^{\text{high richness}}} \ge  F_{\beta^{\text{low richness}}}$. The resulting $p$-value is inferior to $2.2\mathrm{e}{-16}$.
    \item Denoting $F_{|\beta^{\text{high connect}}|}$ (resp. $F_{|\beta^{\text{low connect}}|}$) the cdf of the $|\betap|$ values inferred at locations with connectance larger than its median value ($0.062$),  we tested the null hypothesis $H_0: F_{|\beta^{\text{high connect}}|}= F_{|\beta^{\text{low connect}}|}$ against the alternative  $H_1: F_{|\beta^{\text{high connect}}|} \ge  F_{|\beta^{\text{low connect}}|}$. The resulting $p$-value is inferior to $2.2\mathrm{e}{-16}$.
\end{itemize}
}

\renewcommand\refname{References for Appendix}

\end{document}